\theoremstyle{plain}
\newtheorem{theorem}{Theorem}
\newtheorem{lemma}[theorem]{Lemma}
\newtheorem{corollary}[theorem]{Corollary}
\newtheorem{example}[theorem]{Example}
\theoremstyle{definition}
\newtheorem{remark}[theorem]{Remark}
\DeclareMathOperator{\Ima}{Im}
\title{Intersection of linear and multi-twisted codes with applications
\thanks{\textit{This research was conducted at Université d'Artois, la Faculté des Sciences Jean Perrin, France, and was fully funded by the Science, Technology \& Innovation Funding Authority (STDF); International Cooperation Grants, project number 49294. Ramy Takieldin would like to express his deepest appreciation to Faculté des Sciences Jean Perrin for their hospitality and providing a fruitful research environment.}} }
\author{
  Ramy Takieldin \\
  Faculty of Engineering, Ain Shams University, Cairo, Egypt\\
  Egypt University of Informatics, New Capital, Cairo, Egypt\\
  \texttt{ramy.farouk@eng.asu.edu.eg} \\
   \AND
   André Leroy \\
Univ.\ Artois, UR~2462, Laboratoire de Math\'ematique de Lens (LML), F-62300 Lens, France\\
   \texttt{andre.leroy@univ-artois.fr} \\
   \texttt{Andre.Leroy@math.cnrs.fr} \\
}
\begin{document}
\maketitle
\sloppy

\begin{abstract}
In this paper, we derive a formula for constructing a generator matrix for the intersection of any pair of linear codes over a finite field. Consequently, we establish a condition under which a linear code has a trivial intersection with another linear code (or its Galois dual). Furthermore, we provide a condition for reversibility and propose a generator matrix formula for the largest reversible subcode of any linear code. We then focus on the comprehensive class of multi-twisted (MT) codes, which are naturally and more effectively represented using generator polynomial matrices (GPMs). We prove that the reversed code of an MT code remains MT and derive an explicit formula for its GPM. Additionally, we examine the intersection of a pair of MT codes, possibly with different shift constants, and demonstrate that this intersection is not necessarily MT. However, when the intersection admits an MT structure, we propose the corresponding shift constants. We also establish a GPM formula for the intersection of a pair of MT codes with the same shift constants. This result enables us to derive a GPM formula for the intersection of an MT code and the Galois dual of another MT code. Finally, we examine conditions for various properties on MT codes. Perhaps most importantly, the necessary and sufficient conditions for an MT code to be Galois self-orthogonal, Galois dual-containing, Galois linear complementary dual (LCD), or reversible.
\end{abstract}

\keywords{Codes intersection \and Reversible \and Self-orthogonal \and Dual-containing \and LCD}

MSC: 94B05, 94B15, 11T71, 15A24

\section{Introduction}
\label{Sec-Intro}
Throughout this paper, $\mathbb{F}_q$ denotes a finite field of order $q$, and $\mathcal{C}$ represents a linear code of length $n$ over $\mathbb{F}_q$. The Euclidean dual of $\mathcal{C}$ is denoted by $\mathcal{C}^\perp$, whereas $\mathcal{C}^{\perp_\kappa}$ denotes the $\kappa$-Galois dual, which is defined using the Galois inner product introduced in \cite{FanZhang2017}. The Euclidean (or $\kappa$-Galois, respectively) hull of $\mathcal{C}$ is the intersection of $\mathcal{C}$ with its Euclidean (or $\kappa$-Galois, respectively) dual. Studying the Galois hulls of linear codes is important due to their applications in cryptography and quantum error-correcting codes construction \cite{CarletGuilley2016, LiuYuHu2019}. Furthermore, the dimension of the Galois hull of $\mathcal{C}$ determines whether $\mathcal{C}$ is Galois self-orthogonal, Galois dual-containing, or Galois LCD. In particular, $\mathcal{C}$ is $\kappa$-Galois LCD if $\mathcal{C} \oplus \mathcal{C}^{\perp_\kappa} = \mathbb{F}_q^n$. This concept extends to a linear complementary pair (LCP) of codes $\mathcal{C}_1$ and $\mathcal{C}_2$ if $\mathcal{C}_1 \oplus \mathcal{C}_2 = \mathbb{F}_q^n$. Furthermore, the concept of linear $\delta$-intersection pair of codes generalized LCP codes, where the pair $\mathcal{C}_1$ and $\mathcal{C}_2$ is a linear $\delta$-intersection if their intersection has dimension $\delta$. In \cite{GuendaGulliverJitman2020}, $\delta$-intersection pair codes are employed to construct good entanglement-assisted quantum error-correcting codes. In addition, a formula for determining the dimension of the intersection of any pair of linear codes was proposed in \cite{GuendaGulliverJitman2020}. Since prior works have only established the dimension of such intersection, a natural extension is to explicitly determine a generator matrix for the intersection of any pair of linear codes. This constitutes the first objective of this paper. In applications such as cryptography and the construction of quantum error-correcting codes, the properties of being self-orthogonal, dual-containing, and LCD are particularly relevant \cite{CarletGuilley2016, CaoCui2020}. However, in the context of DNA-based data storage and retrieval systems, the most crucial property of a linear code is being reversible \cite{OztasSiapYildiz2014, KimChoiLee2021}. A linear code is said to be reversible if it remains invariant under the reversal of the coordinates in each codeword.

A cyclic code is a linear code invariant under cyclic shifts of its codewords. Cyclic codes of length $n$ over $\mathbb{F}_q$ correspond bijectively to ideals in the quotient ring $\mathbb{F}_q[x]/\langle x^n-1\rangle$. Cyclic codes are significant not only because of their rich algebraic structure but also because of their practical applications, as they can be efficiently encoded and decoded using shift registers. Several generalizations of cyclic codes have been proposed in literature to achieve broader classes. Constacyclic codes provide a remarkable generalization of cyclic codes, where a $\lambda$-constacyclic code ($0 \neq \lambda \in \mathbb{F}_q$) of length $n$ over $\mathbb{F}_q$ is an ideal in the quotient ring $\mathbb{F}_q[x]/\langle x^n-\lambda\rangle$. Quasi-cyclic (QC) codes and $\lambda$-quasi-twisted (QT) codes, as discussed in \cite{Jia2012}, generalize cyclic and $\lambda$-constacyclic codes, respectively. QC and QT codes have gained importance as they were proven to be asymptotically good \cite{ZhuShi2022}. In \cite{TakiEldin2024, TakiEldin2025}, QC and $\lambda$-QT codes over $\mathbb{F}_q$ of index $\ell$ and co-index $m$ are in one-to-one correspondence with $\mathbb{F}_q[x]$-submodules of $\left(\mathbb{F}_q[x]\right)^\ell$ containing the submodule $\left(\langle x^m-1\rangle\right)^\ell$ and $\left(\langle x^m-\lambda\rangle\right)^\ell$, respectively. Multi-twisted (MT) codes were introduced in \cite{AydinHalilovic2017} as a comprehensive class of linear codes that includes cyclic, constacyclic, QC, and QT codes as subclasses. All these codes are linear and can be described by generator matrices, but they are more effectively represented by polynomials. Specifically, cyclic and constacyclic codes are identified by generator polynomials, while QC, QT, and MT codes are identified by generator polynomial matrices (GPMs) \cite{TakiEldin2024}. In the literature, many properties for these codes have been associated to their generator polynomials and GPMs. For instance, Massey demonstrated in \cite{Massey1964} that a cyclic code is LCD if and only if it is reversible. Similarly, \cite{TakiEldinMatsui2022} examined the relation between reversibility and self-orthogonality in terms of GPMs of QC codes. However, the $\kappa$-Galois duals and $\kappa$-Galois hulls of MT codes have been investigated in \cite{TakiEldin2023} and \cite{TakiEldinSole2025}, respectively. In \cite{HossainBandi2023}, a characterization of $\delta$-intersection pairs of cyclic and QC codes was presented using their generator polynomials. Given these results, it makes sense to examine the intersection of pairs of MT codes in terms of their GPMs. This is the second objective we have set for this paper.

In this paper, we not only determine the dimension of the intersection of a pair of linear or MT codes but also explicitly find generators for this intersection. Specifically, in Section \ref{Sec-Linear-Codes}, we derive a formula for the generator matrix of the intersection of any pair of linear codes $\mathcal{C}_1$ and $\mathcal{C}_2$ over $\mathbb{F}_q$, based on their generator matrices. Then we extend the result to determine a generator matrix for $\mathcal{C}_1^{\perp_\kappa} \cap \mathcal{C}_2$. Consequently, we derive the condition under which a pair of linear codes intersects trivially, i.e., $\mathcal{C}_1 \cap \mathcal{C}_2 = \{\mathbf{0}\}$. In addition, we establish necessary and sufficient conditions for a linear code $\mathcal{C}$ to be reversible or to intersect trivially with its reversed code $\mathcal{R}$, which is obtained by reversing the coordinates of each codeword in $\mathcal{C}$. Moreover, for any linear code that is not reversible, we derive a generator matrix for its largest reversible subcode.

In Section \ref{Sec-MT-Reversed-Code}, we focus on reversibility within the class of MT codes and its characterization by GPMs. We show that the reversed code of an MT code remains MT, but with possibly different shift constants and block lengths. Thus, we present a method for constructing a GPM for the reversed code of any MT code. This enables establishing a necessary and sufficient condition for the reversibility of MT codes in Theorem \ref{Th-SO-DC-R}.

In Section \ref{Sec-Intersection-MT-Codes}, we examine the intersection of a pair of MT codes that have the same block lengths. A counterexample was presented in \cite{LiuLiu2022} to negate the incorrect claim made in \cite{Dinh2014}, which states that the intersection of a pair of constacyclic codes with different shift constants remains constacyclic. We begin with Example \ref{Code3-Code4-R4}, which illustrates that this claim is also incorrect for MT codes with index $\ell \geq 2$. However, Example \ref{Code3-Code4-R5} shows that the intersection of a pair of MT codes, $\mathcal{C}_1$ and $\mathcal{C}_2$, with different shift constants may have an MT structure. When the intersection admits an MT structure, we identify its shift constants in Theorem \ref{Th-CondOfInter}. This identification depends on the minimum distances of $\mathcal{C}_1$ and $\mathcal{C}_2$. In Theorem \ref{Th-Intersection}, we prove a formula for computing a GPM for the intersection of a pair of MT codes. We assume that the two MT codes have the same shift constants to ensure that their intersection is MT, without imposing any restrictions on their minimum distance. Later, we generalize this result by determining a GPM for $\mathcal{C}_1^{\perp_\kappa} \cap \mathcal{C}_2$, where the $\kappa$-Galois dual $\mathcal{C}_1^{\perp_\kappa}$ is MT whenever $\mathcal{C}_1$ is MT.

We dedicate Section \ref{Sec-applications} to exploring applications of the preceding theoretical results. First, we establish necessary and sufficient conditions for one MT code to contain another, be contained in another, or trivially intersect another. Then, in Theorem \ref{Th-SO-DC-R}, we provide necessary and sufficient conditions for an MT code to be Galois self-orthogonal, Galois dual-containing, Galois LCD, or reversible. Unlike similar conditions in the literature, our conditions are both necessary and sufficient, depend solely on the GPM of the code, and do not require decomposing the code as a direct sum of shorter codes over different field extensions.

The subsequent sections are arranged as follows: Section \ref{Sec-Linear-Codes} investigates the intersection of any pair of linear codes. Section \ref{Sec-MT-Reversed-Code} analyzes the reversed code of any MT code. Section \ref{Sec-Intersection-MT-Codes} examines the intersection of a pair of MT codes and constructs a GPM for this intersection. Section \ref{Sec-applications} explores the necessary and sufficient conditions for certain properties of MT codes. Finally, the study is concluded in Section \ref{conclusion}.

\section{Intersection of linear codes}
\label{Sec-Linear-Codes}
The main objective of this section is to determine a generator matrix for the intersection of any pair of linear codes. While this result is significant in its own, its main significance is found in its subsequent application. For $i=1,2$, let $\mathcal{C}_i$ be a linear code of length $n$ over $\mathbb{F}_q$, with a generator matrix $G_i$ and a parity check matrix $H_i$. If $\mathcal{C}_i$ has dimension $k_i$, then $G_i$ is of size $k_i \times n$, while $H_i$ has size $(n-k_i) \times n$. The Euclidean dual $\mathcal{C}_i^\perp$ is the linear code of dimension $n-k_i$ generated by $H_i$. It follows directly that $G_i H_i^T = \mathbf{0}$ and $H_i G_i^T = \mathbf{0}$, where the transpose operation is denoted by $^T$ throughout the paper, and $\mathbf{0}$ consistently represents the zero matrix or vector of the appropriate size. A generator matrix formula for the intersection of any pair of linear codes can now be proved.

\begin{theorem}
\label{Th-Linear}
For $i = 1, 2$, let $\mathcal{C}_i$ be a linear code of length $n$ over $\mathbb{F}_q$ with dimension $k_i$, generator matrix $G_i$, and parity check matrix $H_i$. Define $\mathcal{Q}$ as the linear code of length $k_2$ over $\mathbb{F}_q$ generated by $H_1 G_2^T$, and let $P$ be a parity check matrix for $\mathcal{Q}$. Then, $P G_2$ is a generator matrix for the intersection $\mathcal{C}_1 \cap \mathcal{C}_2$.
\end{theorem}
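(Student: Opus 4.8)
The plan is to verify that every row of $P G_2$ lies in $\mathcal{C}_1 \cap \mathcal{C}_2$, and then to match dimensions to conclude that these rows span the whole intersection. Membership in $\mathcal{C}_2$ is immediate, since each row of $P G_2$ is an $\mathbb{F}_q$-linear combination of the rows of $G_2$. The substance lies in checking membership in $\mathcal{C}_1$, for which I would recall that $v \in \mathcal{C}_1$ precisely when $v H_1^T = \mathbf{0}$.

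The key algebraic identity I would establish is $(P G_2) H_1^T = \mathbf{0}$. Since $G_2 H_1^T = (H_1 G_2^T)^T$, one has $(P G_2) H_1^T = P (H_1 G_2^T)^T$. By definition $H_1 G_2^T$ is a generator matrix for $\mathcal{Q}$ and $P$ is a parity check matrix for $\mathcal{Q}$, so $P$ annihilates every codeword of $\mathcal{Q}$; in matrix form this reads exactly $P (H_1 G_2^T)^T = \mathbf{0}$. Hence each row of $P G_2$ has vanishing $H_1$-syndrome and therefore lies in $\mathcal{C}_1$. Combined with the previous observation, the row space of $P G_2$ is contained in $\mathcal{C}_1 \cap \mathcal{C}_2$. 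The one point to handle with care here is the bookkeeping with transposes and the precise meaning of ``parity check matrix for $\mathcal{Q}$''.

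It then remains to upgrade containment to equality, which I would do by a dimension count. Writing $r = \operatorname{rank}(H_1 G_2^T) = \dim \mathcal{Q}$, the parity check matrix $P$ has full row rank $k_2 - r$. Because $G_2$ has full row rank $k_2$, the map $u \mapsto u G_2$ is injective on $\mathbb{F}_q^{k_2}$, so $P G_2$ again has rank $k_2 - r$: indeed $x P G_2 = \mathbf{0}$ forces $x P = \mathbf{0}$ and then $x = \mathbf{0}$. On the other hand, I would consider the syndrome map $\phi \colon \mathcal{C}_2 \to \mathbb{F}_q^{\,n - k_1}$, $v \mapsto v H_1^T$, whose kernel is exactly $\mathcal{C}_1 \cap \mathcal{C}_2$. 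Composing with the isomorphism $u \mapsto u G_2$ turns $\phi$ into $u \mapsto u (H_1 G_2^T)^T$, whose image is the row space of $G_2 H_1^T$ and hence has dimension $r$. Rank--nullity then yields $\dim(\mathcal{C}_1 \cap \mathcal{C}_2) = k_2 - r$. Since the row space of $P G_2$ is a $(k_2 - r)$-dimensional subspace of the $(k_2 - r)$-dimensional space $\mathcal{C}_1 \cap \mathcal{C}_2$, the two coincide, so $P G_2$ is a generator matrix for the intersection.

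I expect the dimension count to be the main obstacle --- not because it is deep, but because one must correctly identify $\dim(\mathcal{C}_1 \cap \mathcal{C}_2)$ with $k_2 - \operatorname{rank}(H_1 G_2^T)$ and simultaneously confirm that $P G_2$ attains exactly this rank. The containment half, by contrast, reduces to the short transpose manipulation $(P G_2) H_1^T = P (H_1 G_2^T)^T = \mathbf{0}$ once the defining property of $P$ is invoked.
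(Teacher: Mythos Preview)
Your proof is correct and follows essentially the same approach as the paper: both establish the containment via the syndrome identity $H_1 (PG_2)^T = (H_1 G_2^T) P^T = \mathbf{0}$, and both obtain the dimension equality by analyzing the composite map $\mathbb{F}_q^{k_2} \xrightarrow{\cdot G_2} \mathcal{C}_2 \xrightarrow{\cdot H_1^T} \mathbb{F}_q^{\,n-k_1}$ and applying rank--nullity. The paper presents the dimension argument via a short diagram of linear maps, while you write out the same maps explicitly, but the content is identical.
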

\begin{proof}
Let $\mathcal{C}_{\text{int}}$ be the linear code of length $n$ over $\mathbb{F}_q$ generated by $PG_2$. Since $H_1 \left(P G_2\right)^T = H_1 G_2^T P^T = \mathbf{0}$, it follows that $\mathcal{C}_{\text{int}}\subseteq \mathcal{C}_1$. Furthermore, since the rows of $P G_2$ are linear combinations of the rows of $G_2$, we conclude that $\mathcal{C}_{\text{int}} \subseteq \mathcal{C}_1 \cap \mathcal{C}_2$. 

Let $\delta$ denote the row rank of $P$. We now present the following diagram of linear transformations:
\[
\begin{tikzcd}
\mathbb{F}_q^{\delta} \arrow[r, hook, "p"] & \mathbb{F}_q^{k_2} \arrow[r,equals, "g"] & \mathcal{C}_2  \arrow[r,"h"] & \mathbb{F}_q^{n-k_1}
  \end{tikzcd}
\]
where $p: a \mapsto a P$ is injective, $g: a \mapsto a G_2$ is bijective, and $h: a \mapsto a H_1^T$ has kernel $\mathcal{C}_1\cap \mathcal{C}2$. We abbreviate the dimension by $\dim$, the image by $\Ima$, and the kernel by $\ker$. Since $P$ is the parity check matrix of $\mathcal{Q}$, it follows that $\Ima (p)=\ker(h \circ g)$. Thus,
$$\dim(\mathcal{C}_{\text{int}})= \dim(\Ima (g\circ p))= \dim(\Ima (p))= \dim(\ker(h \circ g))= \dim(\ker(h))= \dim(\mathcal{C}_1\cap \mathcal{C}2).$$
Consequently, $\mathcal{C}_{\text{int}}=\mathcal{C}_1\cap \mathcal{C}2$.
\end{proof}

The following example illustrates the application of Theorem \ref{Th-Linear}. The particular codes defined in this example will be used frequently in other examples throughout the paper. An $[n,k,d]$ code refers to a linear code of length $n$, dimension $k$, and minimum Hamming distance $d$.
\begin{example}
\label{Code1-Code2}
In all examples presented in this paper, we define $\mathcal{C}_1$ as the $[8,6,2]$ code over $\mathbb{F}_4$ with generator and parity check matrices
\begin{equation*}
G_1=\begin{pmatrix}
 1 &0 &0 &0 &0 &\omega &0 &1\\
 0 &1 &0 &0 &0&\omega^2 &0 &1\\
 0 &0 &1 &0 &0 &1 &0 &0\\
 0 &0 &0 &1 &0 &\omega &0 &1\\
 0 &0 &0 &0 &1&\omega^2 &0 &1\\
 0 &0 &0 &0 &0 &0 &1 &\omega
\end{pmatrix} \quad \text{and}\quad 
H_1=\begin{pmatrix}
1& 0& 1& 1& 0& 1& 1& \omega^2\\
0& 1& 1& 0& 1& 1& \omega^2& \omega
\end{pmatrix},
\end{equation*}
respectively, where $\omega^2+\omega+1=0$. Then, $\mathcal{C}_1^\perp$ is the $[8,2,6]$ code generated by $H_1$. In all examples presented in this paper, we define $\mathcal{C}_2$ as the $[8,3,5]$ code  over $\mathbb{F}_4$ with generator matrix
\begin{equation*}
G_2=\begin{pmatrix}
1& 0& 0 &\omega^2 &\omega^2& 1& 1& 0\\
0& 1& 0 &\omega^2& 0& \omega& 1& 1\\
0& 0& 1& 1& \omega& \omega& 0& 1
\end{pmatrix}.
\end{equation*}

To apply Theorem \ref{Th-Linear} in determining a generator matrix for $\mathcal{C}_1 \cap \mathcal{C}_2$, we first consider the $[3,1,3]$ code $\mathcal{Q}$ over $\mathbb{F}_4$, generated by the rows of 
\begin{equation*}
H_1 G_2^T=\begin{pmatrix}
\omega & \omega^2  & 1\\
1  & \omega &\omega^2
\end{pmatrix}.
\end{equation*}
A parity check matrix for $\mathcal{Q}$ is given by 
\begin{equation*}
P=\begin{pmatrix}
1  & 0  & \omega\\
0  & 1 & \omega^2
\end{pmatrix}.
\end{equation*}
Then, 
\begin{equation*}
PG_2=\begin{pmatrix}
1  & 0  & \omega  & 1  & 0  & \omega  & 1  & \omega\\
0  & 1 & \omega^2  & 0  & 1 & \omega^2  & 1  & \omega
\end{pmatrix}
\end{equation*}
is a generator matrix for $\mathcal{C}_1\cap \mathcal{C}_2$. That is, $\mathcal{C}_1 \cap \mathcal{C}_2$ is an $[8,2,6]$ code.  
\hfill $\diamond$
\end{example}

The following is a necessary and sufficient condition for a pair of codes to intersect trivially, it follows as a direct consequence of Theorem \ref{Th-Linear}.  
\begin{corollary}
\label{Linear_LCP}
For $i = 1, 2$, let $\mathcal{C}_i$ be a linear code of length $n$ over $\mathbb{F}_q$ with dimension $k_i$, generator matrix $G_i$, and parity check matrix $H_i$. Then $\mathcal{C}_1 \cap \mathcal{C}_2=\{\mathbf{0}\}$ if and only if $\mathrm{rank}\left( H_1 G_2^T\right)=k_2$. 
\end{corollary}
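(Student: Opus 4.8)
The plan is to read the result directly off Theorem~\ref{Th-Linear} by tracking dimensions, so the whole argument reduces to computing $\dim(\mathcal{C}_1 \cap \mathcal{C}_2)$ explicitly. By that theorem, $PG_2$ generates $\mathcal{C}_1 \cap \mathcal{C}_2$, so the intersection is trivial exactly when $\mathrm{rank}(PG_2) = 0$. Thus the first step is to express $\dim(\mathcal{C}_1 \cap \mathcal{C}_2) = \mathrm{rank}(PG_2)$ in terms of the given matrices.

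Next I would reuse the map $g \colon a \mapsto a G_2$ from the proof of Theorem~\ref{Th-Linear}. Since the rows of $G_2$ are independent, $g$ is injective (indeed bijective onto $\mathcal{C}_2$), whence $\mathrm{rank}(PG_2) = \mathrm{rank}(P) = \delta$, the row rank of $P$. Now $P$ is a parity check matrix of $\mathcal{Q}$, a code of length $k_2$ generated by $H_1 G_2^T$, so by the standard relation between a code's dimension and the rank of its parity check matrix,
\[
\mathrm{rank}(P) = k_2 - \dim(\mathcal{Q}) = k_2 - \mathrm{rank}\!\left(H_1 G_2^T\right).
\]
Combining these gives $\dim(\mathcal{C}_1 \cap \mathcal{C}_2) = k_2 - \mathrm{rank}\!\left(H_1 G_2^T\right)$.

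Finally, the intersection is trivial if and only if this dimension vanishes, i.e. if and only if $\mathrm{rank}\!\left(H_1 G_2^T\right) = k_2$, which yields both implications simultaneously. There is no genuine obstacle here, as this is a routine bookkeeping consequence of the theorem; the one point I would state carefully is the injectivity of $g$, which is what guarantees $\mathrm{rank}(PG_2) = \mathrm{rank}(P)$ and thereby lets the rank of $H_1 G_2^T$ alone control the dimension of the intersection.
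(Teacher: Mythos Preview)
Your proof is correct and follows essentially the same chain of equalities as the paper: both compute $\dim(\mathcal{C}_1\cap\mathcal{C}_2)=\mathrm{rank}(PG_2)=\mathrm{rank}(P)=k_2-\dim(\mathcal{Q})=k_2-\mathrm{rank}(H_1G_2^T)$, with the paper simply writing these equalities in the reverse order. Your explicit justification of $\mathrm{rank}(PG_2)=\mathrm{rank}(P)$ via the injectivity of $g$ is exactly the step the paper takes for granted.
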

\begin{proof}
This follows from Theorem \ref{Th-Linear} since
\begin{equation*}
\begin{split}
\mathrm{rank}\left( H_1 G_2^T\right)=\dim\left(\mathcal{Q}\right)=k_2- \dim\left(\mathcal{Q}^\perp\right)&=k_2-\mathrm{rank}\left(P\right)\\
&=k_2-\mathrm{rank}\left(P G_2\right)=k_2-\dim\left( \mathcal{C}_1 \cap \mathcal{C}_2\right).
\end{split}
\end{equation*}
\end{proof}

The $\kappa$-Galois dual of a linear code $\mathcal{C}$ of length $n$ over $\mathbb{F}_q$ is used as a generalization of the Euclidean dual $\mathcal{C}^\perp$. In particular, $\mathcal{C}^\perp$ corresponds to the $\kappa$-Galois dual with $\kappa = 0$. Hereinafter, let $q = p^e$ and $0 \leq \kappa < e$, where $p$ is a prime and $e$ and $\kappa$ are positive integers. Denote by $\sigma$ the Frobenius automorphism of $\mathbb{F}_q$, given by $\sigma(\alpha) = \alpha^p$ for all $\alpha \in \mathbb{F}_q$. The $\kappa$-Galois inner product is defined as
$$\langle \mathbf{u},\mathbf{v}\rangle_\kappa=\sum_{i=0}^n u_i \sigma^\kappa(v_i),$$
for any $\mathbf{u} = \left(u_1, \ldots, u_n\right) \in \mathbb{F}_q^n$ and $\mathbf{v} = \left(v_1, \ldots, v_n\right) \in \mathbb{F}_q^n$. Similarly, the $\kappa$-Galois dual of $\mathcal{C}$ is defined as
$$\mathcal{C}^{\perp_\kappa}=\left\{\mathbf{v}\in\mathbb{F}_q^n \text{ such that } \langle \mathbf{c},\mathbf{v}\rangle_\kappa =0 \quad \forall \mathbf{c}\in \mathcal{C} \right\}.$$
It is straightforward to verify that $\mathcal{C}^{\perp_\kappa} = \sigma^{e-\kappa}(\mathcal{C}^\perp)$, where $\sigma^{e-\kappa}$ is applied component-wise to each codeword. Accordingly, $\sigma^{e-\kappa}(H)$ serves as a generator matrix for $\mathcal{C}^{\perp_\kappa}$ whenever $H$ is a parity check matrix for $\mathcal{C}$, with $\sigma$ acting element-wise on the matrix. This leads to the following consequence of Theorem \ref{Th-Linear}, which determines a generator matrix for the intersection between a linear code and the Galois dual of another. In the special case where the two codes are identical, this yields a generator matrix for the Galois hull. This is important, as existing results in the literature are limited to determining the dimension of the Galois hull \cite{Cao2024}, without offering a method for constructing its generator matrix.

\begin{corollary}
\label{Th-LinearGaloisIntersection}
For $i=1, 2$, let $\mathcal{C}_i$ be a linear code of length $n$ over $\mathbb{F}_q$ with dimension $k_i$, generator matrix $G_i$, and parity check matrix $H_i$. Let $\mathcal{Q}$ be the linear code of length $k_2$ over $\mathbb{F}_q$ generated by $\sigma^{e-\kappa}\left(G_1\right) G_2^T$, and let $P$ be a parity check matrix of $\mathcal{Q}$. Then $PG_2$ is a generator matrix for $\mathcal{C}_1^{\perp_\kappa} \cap \mathcal{C}_2$. 

In particular,  $PG$ is a generator matrix for the $\kappa$-Galois hull of the linear code with generator matrix $G$, where $P$ in this case is a parity check matrix of the linear code generated by $\sigma^{e-\kappa}\left(G\right) G^T$.
\end{corollary}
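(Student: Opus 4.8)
The plan is to reduce the statement to Theorem \ref{Th-Linear} by viewing $\mathcal{C}_1^{\perp_\kappa} \cap \mathcal{C}_2$ as the ordinary intersection of the two linear codes $\mathcal{C}_1^{\perp_\kappa}$ and $\mathcal{C}_2$, and then feeding the correct generator and parity check matrices into that theorem. Theorem \ref{Th-Linear} asks for a parity check matrix of the first code and a generator matrix of the second. Since $\mathcal{C}_2$ already comes with the generator matrix $G_2$, the entire problem collapses to exhibiting a convenient parity check matrix for $\mathcal{C}_1^{\perp_\kappa}$ and checking that it is exactly $\sigma^{e-\kappa}(G_1)$, so that the auxiliary code $\mathcal{Q}$ generated by $\sigma^{e-\kappa}(G_1) G_2^T$ coincides with the one named in the statement.

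The central step, and the only one requiring genuine verification, is the claim that $\sigma^{e-\kappa}(G_1)$ is a parity check matrix for $\mathcal{C}_1^{\perp_\kappa}$. I would argue it as follows. From the identity $\mathcal{C}_1^{\perp_\kappa} = \sigma^{e-\kappa}(\mathcal{C}_1^\perp)$ recalled before the corollary, a generator matrix for $\mathcal{C}_1^{\perp_\kappa}$ is $\sigma^{e-\kappa}(H_1)$, and this code has dimension $n - k_1$; hence a parity check matrix for it must have rank $k_1$ and must annihilate $\sigma^{e-\kappa}(H_1)$ after transposition. Because the Frobenius $\sigma$ is a field automorphism, applying $\sigma^{e-\kappa}$ entrywise is a ring homomorphism on matrices, so it commutes with matrix multiplication and preserves rank. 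Consequently,
\[
\sigma^{e-\kappa}(H_1)\,\bigl(\sigma^{e-\kappa}(G_1)\bigr)^T = \sigma^{e-\kappa}\bigl(H_1 G_1^T\bigr) = \sigma^{e-\kappa}(\mathbf{0}) = \mathbf{0},
\]
where $H_1 G_1^T = \mathbf{0}$ is merely the transpose of the defining relation $G_1 H_1^T = \mathbf{0}$. Moreover $\mathrm{rank}\bigl(\sigma^{e-\kappa}(G_1)\bigr) = \mathrm{rank}(G_1) = k_1$, which is precisely the required codimension. Thus $\sigma^{e-\kappa}(G_1)$ is a legitimate parity check matrix for $\mathcal{C}_1^{\perp_\kappa}$.

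With this identification in hand, I would apply Theorem \ref{Th-Linear} directly to the pair $\mathcal{C}_1^{\perp_\kappa}$ and $\mathcal{C}_2$, using $\sigma^{e-\kappa}(G_1)$ in the role of the parity check matrix of the first code and $G_2$ as the generator matrix of the second. The theorem then yields that $\mathcal{Q}$, generated by $\sigma^{e-\kappa}(G_1) G_2^T$, together with any parity check matrix $P$ of $\mathcal{Q}$, gives $P G_2$ as a generator matrix for $\mathcal{C}_1^{\perp_\kappa} \cap \mathcal{C}_2$, as claimed. The ``in particular'' clause is simply the special case $\mathcal{C}_1 = \mathcal{C}_2 = \mathcal{C}$ with $G_1 = G_2 = G$: the intersection $\mathcal{C}^{\perp_\kappa} \cap \mathcal{C}$ is by definition the $\kappa$-Galois hull, and the construction specializes to $P$ being a parity check matrix of the code generated by $\sigma^{e-\kappa}(G) G^T$, with $PG$ generating the hull. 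I do not anticipate any real obstacle beyond the rank-and-orthogonality bookkeeping in the central step; the only point to treat with care is that $\sigma^{e-\kappa}$ acts entrywise and is a homomorphism, which is exactly what makes both the annihilation identity and the rank preservation go through.
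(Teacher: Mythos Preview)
Your proposal is correct and follows essentially the same approach as the paper: apply Theorem~\ref{Th-Linear} to the pair $(\mathcal{C}_1^{\perp_\kappa},\mathcal{C}_2)$ after identifying $\sigma^{e-\kappa}(G_1)$ as a parity check matrix for $\mathcal{C}_1^{\perp_\kappa}$, then specialize to $\mathcal{C}_1=\mathcal{C}_2$ for the hull statement. The paper's proof is terser, simply asserting the identification, whereas you spell out the rank and orthogonality verification; but the substance is identical.
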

\begin{proof}
The result follows from Theorem \ref{Th-Linear} by using $\sigma^{e-\kappa}\left(H_1\right)$ and $\sigma^{e-\kappa}\left(G_1\right)$ as a generator matrix and a parity check matrix for the $\kappa$-Galois dual $\mathcal{C}_1^{\perp_\kappa}$, respectively. The second assertion follows by letting $\mathcal{C}_1 = \mathcal{C}_2$.
\end{proof}

\begin{example}
\label{Code1-Code2-R4}
Consider the linear codes $\mathcal{C}_1$ and $\mathcal{C}_2$ given in Example \ref{Code1-Code2}. In this example, we apply Corollary \ref{Th-LinearGaloisIntersection} to determine the intersection between the $1$-Galois dual of $\mathcal{C}_1$ and $\mathcal{C}_2$. To this end, we let $\mathcal{Q}$ be the linear code of length $k_2=3$ generated by
\begin{equation*}
\sigma\left(G_1\right) G_2^T=\begin{pmatrix}
 \omega  &  0  &  0\\
 \omega & \omega^2  &   \omega\\
 1  &  \omega&\omega^2\\
 0 & \omega^2  &   1\\
 1  &   \omega   &  0\\
 1  &   \omega & \omega^2
\end{pmatrix}.
\end{equation*}
Clearly, $\mathcal{Q}=\mathbb{F}_4^3$ since $\mathrm{rank}\left(\sigma\left(G_1\right) G_2^T\right) = 3$. Since the zero matrix is a parity check matrix of $\mathcal{Q}$, Corollary \ref{Th-LinearGaloisIntersection} implies that $\mathcal{C}_1^{\perp_1} \cap \mathcal{C}_2=\{\mathbf{0}\}$. 
\hfill $\diamond$
\end{example}

We now turn our attention to the reversibility of linear codes. Hereinafter, we denote the $n \times n$ identity matrix by $\mathbf{I}_n$ and the $n \times n$ backward identity matrix by $J_n$. Let $\mathcal{C}$ be a linear code of length $n$ over $\mathbb{F}_q$ with generator matrix $G$. The reversed code $\mathcal{R}$ of $\mathcal{C}$ is defined as the linear code obtained by reversing the coordinates of each codeword in $\mathcal{C}$. Clearly, $\mathcal{C}$ and $\mathcal{R}$ have the same length, dimension, and minimum distance. In addition, $G J_n$ and $H J_n$ are generator and parity check matrices for $\mathcal{R}$, respectively. We call $\mathcal{C}$ reversible if $\mathcal{R} = \mathcal{C}$. The following result provides a condition under which a linear code is reversible. It also determines when a linear code intersects its reversed code trivially. Moreover, it establishes a generator matrix for the largest reversible subcode contained in any linear code. These results will be proved using Theorem \ref{Th-Linear}.

\begin{theorem}
\label{reversible-subcode}
Let $\mathcal{C}$ be a linear code of length $n$ over $\mathbb{F}_q$ with dimension $k$, generator matrix $G$, and parity check matrix $H$. Then
\begin{enumerate}
\item $\mathcal{C}$ is reversible if and only if $H J_n G^T = \mathbf{0}$.
\item If $H J_n G^T \ne \mathbf{0}$, then $PG$ is a generator matrix for the largest reversible subcode of $\mathcal{C}$, where $P$ is a parity check matrix of the linear code of length $k$ generated by $H J_n G^T$. 
\item $\mathcal{C}$ contains no nontrivial reversible subcode if and only if $\mathrm{rank}\left(H J_n G^T\right)=k$.
\end{enumerate}
\end{theorem}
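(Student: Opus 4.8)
The plan is to identify the largest reversible subcode of $\mathcal{C}$ with the intersection $\mathcal{C} \cap \mathcal{R}$, and then to invoke Theorem \ref{Th-Linear} directly, taking care to orient the roles of the two codes so that the resulting generator matrix comes out as $PG$ rather than the equivalent but less tidy $P(GJ_n)$.

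First I would establish the structural fact that $\mathcal{C} \cap \mathcal{R}$ is reversible and contains every reversible subcode of $\mathcal{C}$. Writing $\mathbf{c}^{\mathrm{rev}} = \mathbf{c} J_n$ for the reversal of a row vector, and using $\mathcal{R} = \mathcal{C} J_n$ together with $J_n^2 = \mathbf{I}_n$, any $\mathbf{c} \in \mathcal{C} \cap \mathcal{R}$ satisfies $\mathbf{c} \in \mathcal{C} J_n$, so $\mathbf{c}^{\mathrm{rev}} \in \mathcal{C}$; and $\mathbf{c} \in \mathcal{C}$ gives $\mathbf{c}^{\mathrm{rev}} \in \mathcal{C} J_n = \mathcal{R}$. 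Hence $\mathbf{c}^{\mathrm{rev}} \in \mathcal{C} \cap \mathcal{R}$, so this intersection is closed under reversal. For maximality, if $\mathcal{S} \subseteq \mathcal{C}$ is reversible then $\mathcal{S} = \mathcal{S} J_n \subseteq \mathcal{C} J_n = \mathcal{R}$, whence $\mathcal{S} \subseteq \mathcal{C} \cap \mathcal{R}$. This pins down $\mathcal{C} \cap \mathcal{R}$ as the largest reversible subcode of $\mathcal{C}$.

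Next I would apply Theorem \ref{Th-Linear} with $\mathcal{C}_1 = \mathcal{R}$ and $\mathcal{C}_2 = \mathcal{C}$. Using $HJ_n$ as a parity check matrix for $\mathcal{R}$ and $G$ as a generator matrix for $\mathcal{C}$, the auxiliary code $\mathcal{Q}$ is generated by $(HJ_n)G^T = HJ_nG^T$, exactly the matrix of length $k$ appearing in the statement. The theorem then yields that $PG$ generates $\mathcal{R} \cap \mathcal{C} = \mathcal{C} \cap \mathcal{R}$, which proves (2) in the nontrivial case $HJ_nG^T \neq \mathbf{0}$; when $HJ_nG^T = \mathbf{0}$ one may take $P = \mathbf{I}_k$, so that $PG = G$ generates all of $\mathcal{C}$, consistent with $\mathcal{C}$ being reversible. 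For the dimension bookkeeping I would reuse the identity obtained inside the proof of Corollary \ref{Linear_LCP}, namely $\mathrm{rank}(HJ_nG^T) = k - \dim(\mathcal{C} \cap \mathcal{R})$. Part (1) then follows since $\mathcal{C}$ is reversible precisely when $\mathcal{C} \cap \mathcal{R} = \mathcal{C}$, i.e.\ $\dim(\mathcal{C} \cap \mathcal{R}) = k$, i.e.\ $\mathrm{rank}(HJ_nG^T) = 0$, which is equivalent to $HJ_nG^T = \mathbf{0}$. Part (3) follows since $\mathcal{C}$ has no nontrivial reversible subcode exactly when its largest one $\mathcal{C} \cap \mathcal{R}$ equals $\{\mathbf{0}\}$, i.e.\ $\dim(\mathcal{C} \cap \mathcal{R}) = 0$, i.e.\ $\mathrm{rank}(HJ_nG^T) = k$.

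I do not anticipate a serious obstacle, since the heavy lifting is already carried out by Theorem \ref{Th-Linear} and Corollary \ref{Linear_LCP}. The one point requiring genuine care is the bookkeeping of which code plays the role of $\mathcal{C}_1$ and which plays $\mathcal{C}_2$: assigning $\mathcal{C}_1 = \mathcal{R}$, so that its parity check matrix $HJ_n$ enters the product, is what produces both the clean generator $PG$ and the matrix $HJ_nG^T$ as stated. The small symmetry $J_n^T = J_n$ of the backward identity, which I would record in passing, is what guarantees this orientation is consistent with the opposite one.
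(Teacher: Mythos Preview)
Your proposal is correct and follows essentially the same approach as the paper: you set $\mathcal{C}_1=\mathcal{R}$ and $\mathcal{C}_2=\mathcal{C}$ in Theorem~\ref{Th-Linear}, identify $\mathcal{C}\cap\mathcal{R}$ as the largest reversible subcode via the maximality argument, and read off parts (1) and (3) from the rank identity in Corollary~\ref{Linear_LCP}. Your version is slightly more explicit than the paper's in verifying that $\mathcal{C}\cap\mathcal{R}$ is itself closed under reversal, but the overall structure and the key inputs are the same.
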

\begin{proof}
Using the same notation as in Theorem \ref{Th-Linear}, setting $\mathcal{C}_1 = \mathcal{R}$, the reversed code of $\mathcal{C}$, and $\mathcal{C}_2 = \mathcal{C}$, then $\mathcal{Q}$ is the code of length $k$ generated by $H J_n G^T$.
\begin{enumerate}
\item $H J_n G^T = \mathbf{0}$ if and only if $\mathcal{Q}$ is the zero code. However, $\mathcal{Q}$ is the zero code if and only if its parity check matrix $P$ is invertible. But $P$ is invertible if and only if $\mathcal{R} \cap \mathcal{C}=\mathcal{C}$, meaning that $\mathcal{C}$ is reversible.
\item If $H J_n G^T \ne \mathbf{0}$, then the intersection $\mathcal{R}\cap \mathcal{C}$ is generated by $PG$. We now show that $\mathcal{R}\cap \mathcal{C}$ is the largest reversible subcode of $\mathcal{C}$. Let $\mathcal{S}$ be an arbitrary reversible subcode of $\mathcal{C}$. Since $\mathcal{S} \subseteq \mathcal{C}$, taking the reversed codes of both sides implies that $\mathcal{S} \subseteq \mathcal{R}$. Consequently, for every reversible subcode $\mathcal{S}$ of $\mathcal{C}$, we have $\mathcal{S} \subseteq \mathcal{R}\cap \mathcal{C}$. Since $\mathcal{R}\cap \mathcal{C}$ is itself reversible, it follows that it is the largest reversible subcode of $\mathcal{C}$.
\item From the above assertion, $\mathcal{C}$ contains no nontrivial reversible subcode if and only if $\mathcal{C}$ intersects $\mathcal{R}$ trivially. The result then follows directly from Corollary \ref{Linear_LCP}.
\end{enumerate}
\end{proof}

\begin{example}
\label{Code3-Code4}
In this example, we define three linear codes, $\mathcal{C}_3$, $\mathcal{C}_4$, and $\mathcal{C}_5$. These codes will be referenced frequently in the examples throughout this paper. Let $\mathcal{C}_3$ be the $[9,3,6]$ code over $\mathbb{F}_3$ with generator and parity check matrices
\begin{equation*}
G_3=\begin{pmatrix}
1&0&0&1&1&2&1&1&0\\
0&1&0&2&1&1&0&1&1\\
0&0&1&1&2&1&1&0&1
\end{pmatrix} \quad \text{and} \quad
H_3=\begin{pmatrix}
1&0&0&0&0&0&1&1&2\\
0&1&0&0&0&0&2&1&1\\
0&0&1&0&0&0&1&2&1\\
0&0&0&1&0&0&0&2&2\\
0&0&0&0&1&0&2&0&2\\
0&0&0&0&0&1&2&2&0
\end{pmatrix},
\end{equation*}
respectively. Theorem \ref{reversible-subcode} asserts that $\mathcal{C}_3$ is not reversible since
\begin{equation*}
H_3 J_9 G_3^T =\begin{pmatrix}
2&2&2\\
2&2&2\\
2&2&2\\
1&0&1\\
0&1&1\\
1&1&0
\end{pmatrix}\ne \mathbf{0}.
\end{equation*}
In addition, $\mathcal{C}_3$ has no nontrivial reversible subcode because $\mathrm{rank}\left(H_3 J_9 G_3^T\right)=3$.

Our next code is the $[9,7,1]$ code $\mathcal{C}_4$ over $\mathbb{F}_3$ with generator and parity check matrices 
\begin{equation*}
G_4=\begin{pmatrix}
1&0&2&0&0&0&0&0&0\\
0&1&1&0&0&0&0&0&0\\
0&0&0&1&0&0&0&0&0\\
0&0&0&0&1&0&0&0&0\\
0&0&0&0&0&1&0&0&0\\
0&0&0&0&0&0&1&0&2\\
0&0&0&0&0&0&0&1&1
\end{pmatrix} \quad \text{and} \quad
H_4=\begin{pmatrix}
1&2&1&0&0&0&0&0&0\\
0&0&0&0&0&0&1&2&1
\end{pmatrix},
\end{equation*}
respectively. Theorem \ref{reversible-subcode} asserts that $\mathcal{C}_4$ is reversible since $H_4 J_9 G_4^T =\mathbf{0}$. To be used later in Example \ref{Code3-Code4-R4}, we obtain 
\begin{equation*}
\begin{pmatrix}
1&1&0&0&2&0&1&2&1
\end{pmatrix}
\end{equation*}
by Theorem \ref{Th-Linear} as a generator matrix for $\mathcal{C}_3 \cap \mathcal{C}_4$.

Another code that will be used in later examples is the $[9,6,3]$ code $\mathcal{C}_5$ over $\mathbb{F}_3$, with the following generator and parity check matrices 
\begin{equation*}
G_5=\begin{pmatrix}
1&0&0&0&0&0&1&0&2\\
0&1&0&0&0&0&1&1&0\\
0&0&1&0&0&0&0&1&1\\
0&0&0&1&0&0&1&0&1\\
0&0&0&0&1&0&2&1&0\\
0&0&0&0&0&1&0&2&1
\end{pmatrix} \quad \text{and} \quad
H_5=\begin{pmatrix}
1&0&2&0&2&0&1&2&2\\
0&1&1&0&1&2&0&2&0\\
0&0&0&1&2&1&1&2&1
\end{pmatrix}.
\end{equation*}
Theorem \ref{reversible-subcode} shows that $\mathcal{C}_5$ is not reversible since $H_5 J_9 G_5^T\ne\mathbf{0}$. The linear code of length $6$ generated by $H_5 J_9 G_5^T$ has a parity check matrix denoted by $P$, where
\begin{equation*}
H_5 J_9 G_5^T =\begin{pmatrix}
0&1&2&0&0&1\\
1&1&1&0&1&2\\
1&2&1&1&2&1
\end{pmatrix}\quad \text{and} \quad
P=\begin{pmatrix}
1&0&0&1&2&0\\
0&1&0&0&1&2\\
0&0&1&1&0&1
\end{pmatrix}.
\end{equation*}
Consequently, a generator matrix for the largest reversible subcode of $\mathcal{C}_5$ is given by
\begin{equation*}
P G_5 =\begin{pmatrix}
1&0&0&1&2&0&0&2&0\\
0&1&0&0&1&2&0&0&2\\
0&0&1&1&0&1&1&0&0
\end{pmatrix}.
\end{equation*}
\hfill $\diamond$
\end{example}

We conclude this section with the following remark, which plays an important role in generalizing certain results in Section \ref{Sec-applications}
\begin{remark}
\label{Rem_Chain}
In this remark, we present the standard form of a generator matrix for any linear code over a finite chain ring, as established in \cite[Theorem 2.12]{Dougherty2017}. Of particular interest is the case of a linear code $\mathcal{Q}$ over the finite chain ring $\mathbb{F}_q[x]/\langle p^f (x)\rangle$, where $\mathbb{F}_q[x]$ denotes the ring of polynomials over $\mathbb{F}_q$, $p(x) \in \mathbb{F}_q[x]$ is an irreducible polynomial, $f$ is a positive integer, and $\langle p^f (x)\rangle$ is the ideal generated by $p^f (x)$. It was shown in \cite{Dougherty2017} that $\mathcal{Q}$ admits a generator matrix of the form
\begin{equation}
\label{MatrixChainRing}
\begin{pmatrix}
\mathbf{I}_{r_0} & \star & \star & \star & \cdots & \cdots & \star \\
\mathbf{0} & p(x) \mathbf{I}_{r_1} & p(x) \star & p(x) \star & \cdots & \cdots & p(x) \star \\
\mathbf{0} & \mathbf{0} & p^2 (x) \mathbf{I}_{r_2} & p^2 (x) \star & \cdots & \cdots & p^2 (x) \star \\
\vdots & \vdots & \mathbf{0} & \ddots & \ddots & \ddots & \vdots \\
\vdots & \vdots & \vdots & \ddots & \ddots & \ddots & \vdots \\
\mathbf{0} & \mathbf{0} & \mathbf{0} & \cdots & \mathbf{0} & p^{f-1}(x) \mathbf{I}_{r_{f-1}} & p^{f-1}(x) \star
\end{pmatrix},
\end{equation}
where each $\star$ represents an arbitrary matrix with elements from $\mathbb{F}_q[x]/\langle p^f (x)\rangle$. A code with generator matrix of the form \eqref{MatrixChainRing} is said to have type $\left\{r_0, r_1, \ldots, r_{f-1}\right\}$. Corollary 2.2 in \cite{Dougherty2017} states that the size of $\mathcal{Q}$ is given by
\begin{equation}
\label{Eq_Chain}
|\mathcal{Q}|=q^{\deg\left(p(x)\right)\sum_{h=0}^{f-1}(f-h) r_h},
\end{equation}
where $\deg$ stands for the degree.
\hfill $\diamond$
\end{remark}

\section{MT codes and their reversed codes}
\label{Sec-MT-Reversed-Code}
From now on, we focus on the class of MT codes. We aim to specify the results of Section \ref{Sec-Linear-Codes} to MT codes. MT codes constitute a promising class as it includes several significant subclasses. For instance, a constacyclic code is an MT code with index $\ell = 1$, a generalized QC code corresponds to an MT code with unity shift constants, and a QT code is an MT code with equal shift constants and equal block lengths. Although all of these classes are linear and can be studied using generator matrices, their rich algebraic structures make them effectively studied using generator polynomials, or GPMs. Representing the results of Section \ref{Sec-Linear-Codes} in terms of GPMs makes them more effective for MT codes and their subclasses. In addition, deriving the theoretical results on MT codes makes it easy to derive corresponding results on any of the aforementioned subclases. Before we begin studying the reversed codes of MT codes, we begin this section by providing a brief description of the algebraic structure of MT codes, constructing their GPMs, and discussing some of the properties of GPMs. More details on these properties can be found in \cite{TakiEldin2024, TakiEldin2025, TakiEldin2023}.

For a given positive integer $\ell$, referred to as the code index, let $m_i$ be a positive integer and $\lambda_i\in \mathbb{F}_q$ be nonzero for each $i=1, 2, \ldots, \ell$. Set $\Lambda=\left(\lambda_1,\lambda_2, \ldots, \lambda_\ell\right)$ and define the code length $n$ as the sum of the block lengths $m_i$, i.e., $n=m_1+m_2+\cdots+m_\ell$. Each vector 
\begin{equation}
\label{random_vector}
\mathbf{a}=\left(a_{0,1},a_{1,1},\ldots, a_{m_1-1,1},a_{0,2},a_{1,2},\ldots, a_{m_2-1,2},\ldots, a_{0,\ell},a_{1,\ell},\ldots, a_{m_\ell-1,\ell} \right)\in\mathbb{F}_q^n
\end{equation}
can be expressed in a polynomial vector representation, denoted by $\mathbf{a}(x)$, which consists of $\ell$ components, where the $i$-th component is a polynomial of degree less than $m_i$ Specifically, 
\begin{equation}
\label{random_poly_vector}
\mathbf{a}(x)=\left(a_1(x), a_2(x),\ldots, a_\ell(x)\right),
\end{equation}
where $a_i(x)=a_{0,i}+a_{1,i}x+\cdots + a_{m_i-1,i}x^{m_i-1}$ is an element of the quotient ring $\mathbb{F}_q[x]/\langle x^{m_i}-\lambda_i\rangle$. This representation induces an $\mathbb{F}_q$-vector space isomorphism between $\mathbb{F}_q^n$ and the direct sum $\bigoplus_{i=1}^\ell \mathbb{F}_q[x]/\langle x^{m_i}-\lambda_i\rangle$. We naturally extend this isomorphism to an $\mathbb{F}_q[x]$-module isomorphism by endowing $\mathbb{F}_q^n$ with a module structure. Precisely, we make $\mathbb{F}_q^n$ into an $\mathbb{F}_q[x]$-module by defining the action $x(\mathbf{a}) = \mathcal{T}_\Lambda(\mathbf{a})$, where $\mathcal{T}_\Lambda$ is the linear transformation defined by
\begin{align*}
\mathcal{T}_\Lambda \left(\mathbf{a}\right)=\left(\lambda_1 a_{m_1-1,1},a_{0,1},a_{1,1},\ldots, a_{m_1-2,1},\lambda_2 a_{m_2-1,2}, a_{0,2},a_{1,2},\ldots, a_{m_2-2,2},\ldots, \qquad\qquad\qquad\right. \\ \left. \lambda_\ell a_{m_\ell-1,\ell}, a_{0,\ell},a_{1,\ell},\ldots, a_{m_\ell-2,\ell} \right)
\end{align*}
for every $\mathbf{a} \in\mathbb{F}_q^n$, as defined in \eqref{random_vector}. In other words, the map $\mathbf{a}\mapsto \mathbf{a}(x)$ not only establishes a one-to-one correspondence between vectors in $\mathbb{F}_q^n$ and polynomial vectors in $\bigoplus_{i=1}^\ell \mathbb{F}_q[x]/\langle x^{m_i}-\lambda_i\rangle$, but is also linear with respect to multiplication by elements of $\mathbb{F}_q[x]$.

A linear code $\mathcal{C}$ of length $n$ over $\mathbb{F}_q$ is called $\Lambda$-MT with block lengths $\left(m_1, m_2, \ldots, m_\ell\right)$ and shift constants $\Lambda$ if it is invariant under the transformation $\mathcal{T}_\Lambda$. That is, $\mathcal{C}$ is $\Lambda$-MT if $\mathcal{T}_\Lambda\left(\mathbf{c}\right)\in\mathcal{C}$ for every $\mathbf{c}\in\mathcal{C}$. By employing the $\mathbb{F}_q[x]$-module isomorphism discussed above, $\mathcal{C}$ can be viewed as an $\mathbb{F}_q[x]$-submodule of $\bigoplus_{i=1}^\ell \mathbb{F}_q[x]/\langle x^{m_i}-\lambda_i\rangle$, meaning that $a(x) \mathbf{c}(x) \in \mathcal{C}$ for any codeword $\mathbf{c}(x)\in \mathcal{C}$ written in the polynomial vector representation and any polynomial $a(x)\in\mathbb{F}_q[x]$. From the classical correspondence theorem of modules, any submodule $\mathcal{C}$ of $\bigoplus_{i=1}^\ell \mathbb{F}_q[x]/\langle x^{m_i}-\lambda_i\rangle$ corresponds to a submodule of $\left(\mathbb{F}_q[x]\right)^\ell$ that contains the submodule $\bigoplus_{i=1}^\ell \langle x^{m_i}-\lambda_i\rangle$. Throughout the paper, we will interchangeably regard $\mathcal{C}$ either as a subspace of $\mathbb{F}_q^n$ that is invariant under $\mathcal{T}_\Lambda$, or as an $\mathbb{F}_q[x]$-submodule of $\left(\mathbb{F}_q[x]\right)^\ell$ that contains $\bigoplus_{i=1}^\ell \langle x^{m_i}-\lambda_i\rangle$. Analogously, a codeword of $\mathcal{C}$ can be expressed as a vector $\mathbf{c}\in\mathbb{F}_q^n$ or as a polynomial vector $\mathbf{c}(x)\in\left(\mathbb{F}_q[x]\right)^\ell$. The second perspective is particularly useful in constructing a GPM for $\mathcal{C}$.

By regarding a $\Lambda$-MT code $\mathcal{C}$ as a submodule of $\left(\mathbb{F}_q[x]\right)^\ell$ containing the submodule $\bigoplus_{i=1}^\ell \langle x^{m_i}-\lambda_i\rangle$, a GPM for $\mathcal{C}$ is a matrix with polynomial entries whose rows generate $\mathcal{C}$. Throughout this paper, we use the term GPM to refer specifically to a polynomial matrix whose rows form a minimal generating set for $\mathcal{C}$. It was shown in \cite{TakiEldin2025} that $\mathcal{C}$ as a submodule of $\left(\mathbb{F}_q[x]\right)^\ell$ has rank $\ell$. Consequently, a GPM is of size $\ell \times \ell$, and its rows are $\mathbb{F}_q[x]$-linearly independent. However, any polynomial matrix with more than $\ell$ rows whose rows generate $\mathcal{C}$ can be reduced to a GPM, see Example \ref{Code1-Code2-R1}. In fact, GPMs for MT codes are similar to generator matrices for linear codes; a GPM acts as a generator matrix for $\mathcal{C}$ as a code of length $\ell$ over $\mathbb{F}_q[x]$. In addition, just as the row reduced echelon form determines a unique generator matrix for a linear code, the Hermite normal form defines a unique GPM for an MT code. A GPM in Hermite normal form is referred to as the reduced GPM. We denote a GPM by $\mathbf{G}$ and, more generally, use capital bold letters to represent polynomial matrices.

The requirement for $\mathcal{C}$ to include the submodule $\bigoplus_{i=1}^\ell \langle x^{m_i}-\lambda_i\rangle$ imposes a fundamental identity that any polynomial matrix generating $\mathcal{C}$ must fulfill. For instance, the rows of any GPM $\mathbf{G}$ are capable of generating a basis for $\bigoplus_{i=1}^\ell \langle x^{m_i}-\lambda_i\rangle$. Equivalently, there exists a polynomial matrix $\mathbf{A}$ such that
\begin{equation}
\label{Identical_Eq}
\mathbf{A}\mathbf{G}=\mathrm{diag}\left(x^{m_i}- \lambda_i \right),
\end{equation}
where $\mathrm{diag}\left(x^{m_i}- \lambda_i \right)$ denotes the $\ell \times \ell$ diagonal polynomial matrix with diagonal entries $x^{m_1}- \lambda_1, x^{m_2}- \lambda_2, \ldots, x^{m_\ell}- \lambda_\ell$. The identity given by \eqref{Identical_Eq} is referred to as the identical equation of $\mathbf{G}$. Conversely, any polynomial matrix that satisfies the identical equation for some $\mathbf{A}$ is a GPM for a $\Lambda$-MT code with block lengths $\left(m_1, m_2, \ldots, m_\ell\right)$. The polynomial matrix $\mathbf{A}$ plays a crucial role in constructing a GPM for the dual code of $\mathcal{C}$. Additionally, it determines the dimension of $\mathcal{C}$ as a subspace of $\mathbb{F}_q^n$ through the formula
\begin{equation}
\label{Dim_Eq_fromA}
\dim{\mathcal{C}}=\deg\left(\mathrm{det}\left( \mathbf{A}\right)\right),
\end{equation}
where $\mathrm{det}$ stands for the determinant. In the following example we show how a GPM can be constructed from a generator matrix of $\mathcal{C}$.

\begin{example}
\label{Code1-Code2-R1}
Consider the $[8,6,2]$ code $\mathcal{C}_1$ with generator matrix $G_1$, as presented in Example \ref{Code1-Code2}. Specifically, $\mathcal{C}_1$ is invariant under the transformation $\mathcal{T}_\Lambda$, where $\Lambda = (1, \omega)$, with $m_1 = 6$ and $m_2 = 2$. Consequently, $\mathcal{C}_1$ is a $(1,\omega)$-MT code with index $\ell = 2$ and block lengths $(6,2)$. A polynomial matrix generating $\mathcal{C}_1$ can be constructed using the polynomial vector representation of the rows of $G_1$, along with a basis for $\langle x^6-1\rangle\oplus \langle x^2-\omega\rangle$, and is given by:
\begin{equation*}
\begin{pmatrix}
1+\omega x^5 &  x \\
x+\omega^2 x^5 &  x \\
x^2+ x^5 &  0 \\
x^3+\omega x^5 &  x \\
x^4+\omega^2 x^5 &  x \\
0 &  1+\omega x \\
x^6-1 & 0\\
0 & x^2-\omega
\end{pmatrix}.
\end{equation*}
By reducing this polynomial matrix to its Hermite normal form, we obtain the reduced GPM for $\mathcal{C}_1$ as
\begin{equation*}
\mathbf{G}_1=\begin{pmatrix}
 \omega +x &  \omega \\
 0 & \omega^2+x
\end{pmatrix}.
\end{equation*}
However, the polynomial matrix $\mathbf{A}_1$ that satisfies the identical equation \eqref{Identical_Eq} for $\mathbf{G}_1$ is given by
\begin{equation*}
\mathbf{A}_1=\begin{pmatrix}
\omega^2+ \omega x + x^2 + \omega^2 x^3 + \omega x^4 +x^5  &  \omega+ \omega x + \omega x^3 +\omega x^4 \\
0 & \omega^2+x
\end{pmatrix}.
\end{equation*}
Observe that the dimension of $\mathcal{C}_1$ can be determined by \eqref{Dim_Eq_fromA}. In fact, $\deg\left(\mathrm{det}\left( \mathbf{A}_1\right)\right)=6$.

Similarly, consider the $[8,3,5]$ code $\mathcal{C}_2$ with generator matrix $G_2$, as described in Example \ref{Code1-Code2}. In fact, $\mathcal{C}_2$ is a $(1,\omega)$-MT code with index $\ell=2$ and block lengths $(6,2)$. A polynomial matrix generating $\mathcal{C}_2$ can be constructed using the polynomial vector representation of the rows of $G_2$, along with a basis for $\langle x^6-1\rangle\oplus \langle x^2-\omega\rangle$, and is given by:
\begin{equation*}
\begin{pmatrix}
1+\omega^2 x^3+\omega^2 x^4+ x^5 &  1 \\
x+\omega^2 x^3+\omega x^5 &  1+x \\
x^2+x^3+\omega x^4+\omega x^5 & x \\
x^6-1 & 0\\
0 & x^2-\omega
\end{pmatrix}.
\end{equation*}
By reducing this polynomial matrix to its Hermite normal form, we obtain the reduced GPM for $\mathcal{C}_2$ as
\begin{equation*}
\mathbf{G}_2=\begin{pmatrix}
 \omega^2+ \omega^2 x + x^2+x^3  & \omega x \\
 0 & \omega+x^2
\end{pmatrix}.
\end{equation*}
However, the polynomial matrix $\mathbf{A}_2$ that satisfies the identical equation for $\mathbf{G}_2$ is given by
\begin{equation*}
\mathbf{A}_2=\begin{pmatrix}
\omega + \omega x  + x^2 +x^3&  \omega x +\omega x^2 \\
 0 & 1
\end{pmatrix}.
\end{equation*}
\hfill $\diamond$
\end{example}

As shown in Example \ref{Code1-Code2-R1}, the Hermite normal form can be used to reduce a GPM in an upper triangular form. While this form can be achieved for any MT code, there exists the special case in which an MT code possesses a diagonal GPM. It is easily seen that a $\Lambda$-MT code $\mathcal{C}$ with block lengths $(m_1, m_2, \ldots, m_\ell)$ has a diagonal GPM if and only if $\mathcal{C}$ is the direct sum of $\ell$ constacyclic codes. Precisely, if $\mathcal{C}$ has a GPM of the form
$$\mathbf{G}=\mathrm{diag}\left(g_i(x) \right),$$
then $\mathcal{C}=\bigoplus_{i=1}^\ell \mathcal{C}_i$, where $\mathcal{C}_i$ is a $\lambda_i$-constacyclic code of length $m_i$ for $i=1,2,\ldots,\ell$. Other special forms of GPMs give rise to other subclasses of MT codes. Since our objective is to give a study that is valid for all MT codes, we do not impose any specific structure on the GPMs. However, as stated in Section \ref{Sec-Intro}, a particularly significant subclass is that of QC codes. A QC code is a special case of an MT code in which $m_1 = m_2 = \cdots = m_\ell = m$ and $\lambda_1 = \lambda_2 = \cdots = \lambda_\ell = 1$. Here, $m$ is referred to as the code co-index; consequently, a QC code has length $m\ell$. Since QC codes are MT codes, they can be represented by GPMs that satisfy identical equations. For instance, if $\mathcal{Q}$ is a QC code with index $\ell$, co-index $m$, and GPM $\mathbf{Q}$, then there exists an $\ell \times \ell$ polynomial matrix $\mathbf{P}$ satisfying the identical equation
\begin{equation}
\label{Identical_Eq_QC}
\mathbf{P}\mathbf{Q}=\mathrm{diag}\left(x^{m}- 1 \right)=(x^m-1)\mathbf{I}_\ell.
\end{equation}
It follows from \eqref{Dim_Eq_fromA} that $\dim{\mathcal{Q}}=\deg\left(\mathrm{det}\left( \mathbf{P}\right)\right)$, and from \eqref{Identical_Eq_QC} that $\mathbf{P}$ and $\mathbf{Q}$ commute.

\begin{example}
\label{Code3-Code4-R1}
Consider the $[9,3,6]$ code $\mathcal{C}_3$ with generator matrix $G_3$, as presented in Example \ref{Code3-Code4}. In fact, $\mathcal{C}_3$ is a $(1,1,1)$-MT code with index $\ell = 3$ and block lengths $(3,3,3)$. In other words, $\mathcal{C}_3$ is a QC code with co-index $m=3$. Following the construction outlined in Example \ref{Code1-Code2-R1}, we obtain a GPM $\mathbf{G}_3$ and a polynomial matrix $\mathbf{A}_3$ that satisfies the identical equation $\mathbf{A}_3 \mathbf{G}_3=(x^3-1)\mathbf{I}_3$ as follows
\begin{equation*}
\mathbf{G}_3=\begin{pmatrix}
1 & 1 + x +2 x^2 & 1+x \\
0 & x^3 -1 & 0\\
0 & 0 & x^3 -1
\end{pmatrix} \quad \text{and} \quad
\mathbf{A}_3=\begin{pmatrix}
x^3 -1 & 2 + 2x + x^2 & 2+2x \\
0 & 1 & 0\\
0 & 0 & 1
\end{pmatrix}.
\end{equation*}

Similarly, consider the $[9,7,1]$ code $\mathcal{C}_4$ with generator matrix $G_4$, as presented in Example \ref{Code3-Code4}. In fact, $\mathcal{C}_4$ is a $(2,1,2)$-MT code with index $\ell=3$ and block lengths $(3,3,3)$. We obtain the following GPM $\mathbf{G}_4$ and the polynomial matrix $\mathbf{A}_4$ satisfying its identical equation
\begin{equation*}
\mathbf{G}_4=\begin{pmatrix}
1+x&0&0\\
0&1&0\\
0&0&1+x
\end{pmatrix}\quad \text{and} \quad
\mathbf{A}_4=\begin{pmatrix}
1+2x+x^2&0&0\\
0&x^3-1&0\\
0&0&1+2x+x^2
\end{pmatrix}.
\end{equation*}
Since $\mathbf{G}_4$ is a diagonal polynomial matrix, it follows that $\mathcal{C}_4$ is the direct sum of three constacyclic codes, each of length $3$ with corresponding shift constants $2$, $1$, and $2$. 
\hfill $\diamond$
\end{example}

Let $\mathcal{C}$ be a $\Lambda$-MT code with block lengths $(m_1,m_2,\ldots,m_\ell)$ and a GPM $\mathbf{G}$, where $\Lambda=(\lambda_1,\lambda_2,\ldots,\lambda_\ell)$. It was shown in \cite[Theorems 5 and 8]{TakiEldin2023} that the Euclidean dual $\mathcal{C}^\perp$ (respectively, the $\kappa$-Galois dual $\mathcal{C}^{\perp_\kappa}$) is a $\Delta$-MT (respectively, $\sigma^{e-\kappa}\left(\Delta\right)$-MT) code with block lengths $(m_1,m_2,\ldots,m_\ell)$, where
\begin{align*}
\Delta&=\left(\lambda_1^{-1},\lambda_2^{-1},\ldots,\lambda_\ell^{-1}\right) =\Lambda^{-1}\\
\sigma^{e-\kappa}\left(\Delta\right)&= \left(\sigma^{e-\kappa}\left(\lambda_1^{-1}\right), \sigma^{e-\kappa}\left(\lambda_2^{-1}\right),\ldots, \sigma^{e-\kappa}\left(\lambda_\ell^{-1}\right)\right)=\sigma^{e-\kappa}\left(\Lambda^{-1}\right).
\end{align*}
In addition, $\mathcal{C}^\perp$ is the smallest submodule of $\left(\mathbb{F}_q[x]\right)^\ell$ that contains both the submodule $\bigoplus_{i=1}^\ell \langle x^{m_i}- \lambda_i^{-1}\rangle$ and the row space of
$$\mathbf{A}^T\!\left(\frac{1}{x}\right) \mathrm{diag}\left(x^{m_i}\right),$$
where $\mathbf{A}$ is the polynomial matrix satisfying the identical equation of $\mathbf{G}$. The notation $\mathbf{A}\!\left(\frac{1}{x}\right)$ refers to the matrix obtained from $\mathbf{A}$ by replacing $x$ with $\frac{1}{x}$, and this notation will be used frequently throughout the paper. Based on the construction outlined in Example \ref{Code1-Code2-R1}, a polynomial matrix that generates $\mathcal{C}^\perp$ can be explicitly formulated as
\begin{equation}
\label{Eq-Reciprocal1}
\begin{pmatrix}
\mathbf{A}^T\!\left(\frac{1}{x}\right) \mathrm{diag}\left(x^{m_i}\right)\\
\mathrm{diag}\left(x^{m_i}- \lambda_i^{-1} \right)
\end{pmatrix}.
\end{equation}
This polynomial matrix can be reduced to a GPM for $\mathcal{C}^\perp$ by performing elementary row operations over $\mathbb{F}_q[x]$. Throughout this paper, we denote a GPM for $\mathcal{C}^\perp$ by $\mathbf{H}$, and the polynomial matrix satisfying its identical equation by $\mathbf{B}$. Since $\mathcal{C}^\perp$ is a $\Delta$-MT code, the identical equation of $\mathbf{H}$ takes the form
\begin{equation}
\label{Identical_Eq_dual}
\mathbf{B}\mathbf{H}=\mathrm{diag}\left(x^{m_i}- \lambda_i^{-1} \right).
\end{equation}
From the discussion prior to Corollary \ref{Th-LinearGaloisIntersection}, it follows that $\sigma^{e-\kappa}\left(\mathbf{H}\right)$ is a GPM for the $\kappa$-Galois dual code $\mathcal{C}^{\perp_\kappa}$, where $\sigma^{e-\kappa}$ acts on the coefficients of the polynomial entries of $\mathbf{H}$. Moreover, an identical equation for $\mathcal{C}^{\perp_\kappa}$ is obtained by applying $\sigma^{e-\kappa}$ to both sides of \eqref{Identical_Eq_dual}; namely,
\begin{equation}
\label{Identical_Eq_dual_Galois}
\sigma^{e-\kappa}\left(\mathbf{B}\right) \sigma^{e-\kappa}\left(\mathbf{H}\right) =\mathrm{diag}\left(x^{m_i}- \sigma^{e-\kappa}\left(\lambda_i^{-1} \right) \right).
\end{equation}

\begin{example}
\label{Code1-Code2-R2}
The code $\mathcal{C}_1$ presented in Example \ref{Code1-Code2-R1} was shown to be $\Lambda$-MT, where $\Lambda=(1,\omega)$. Consequently, $\mathcal{C}_1^\perp$ is a $\Delta$-MT code with block lengths $(6,2)$, where $\Delta=(1,\omega^2)$. From \eqref{Eq-Reciprocal1}, a polynomial matrix generating $\mathcal{C}_1^\perp$ can be constructed as
\begin{equation*}
\begin{pmatrix}
\mathbf{A}_1^T\!\left(\frac{1}{x}\right) \mathrm{diag}\left(x^{m_i}\right)\\
\mathrm{diag}\left(x^{m_i}- \lambda_i^{-1} \right)
\end{pmatrix}=
\begin{pmatrix}
\omega^2 x^6+ \omega x^5 + x^4 + \omega^2 x^3 + \omega x^2 +x  &  0 \\
\omega x^6+ \omega x^5 + \omega x^3 +\omega x^2 & \omega^2 x^2+x \\
x^6-1 &0 \\
0 & x^2-\omega^2
\end{pmatrix}.
\end{equation*}
To obtain a GPM for $\mathcal{C}_1^\perp$, this polynomial matrix is reduced to its Hermite normal form, yielding the following reduced GPM $\mathbf{H}_1$ along with the polynomial matrix $\mathbf{B}_1$ that satisfies its identical equation
\begin{equation*}
\mathbf{H}_1=\begin{pmatrix}
 1+ x + x^3 +x^4   & \omega+x \\
0 &   \omega^2+x^2
\end{pmatrix}\quad \text{and}\quad
\mathbf{B}_1=\begin{pmatrix}
1+x+x^2  & \omega^2+x \\
0 &  1
\end{pmatrix}.
\end{equation*}
However, the $1$-Galois dual $\mathcal{C}_1^{\perp_1}$ of $\mathcal{C}$ is an MT code with the same block lengths, but with shift constants given by $\sigma^{e-\kappa}\left(\Delta\right)=\sigma\left(1,\omega^2\right)=\left(1,\omega\right)$. Moreover, since $\mathcal{C}_1^{\perp_1}=\sigma^{e-\kappa}\left(\mathcal{C}_1^\perp \right)=\sigma\left(\mathcal{C}_1^\perp\right)$, the corresponding GPM and the polynomial matrix satisfying its identical equation are given by
\begin{equation*}
\sigma\left(\mathbf{H}_1\right)=\begin{pmatrix}
 1+ x + x^3 +x^4   & \omega^2+x \\
0 &   \omega+x^2
\end{pmatrix}\quad \text{and}\quad
\sigma\left(\mathbf{B}_1\right)=\begin{pmatrix}
1+x+x^2  & \omega+x \\
0 &  1
\end{pmatrix}.
\end{equation*}
\hfill $\diamond$
\end{example}

So far, we have established that both the Euclidean and Galois duals of an MT code are MT. We have also shown how to construct their corresponding GPMs. In the remainder of this section, we prove that the reversed code of an MT code is also MT and provide a formula for its GPM. To this end, for a given block lengths $\left(m_1,m_2,\ldots,m_\ell\right)$, we define a permutation map $\mathcal{L}$ of $n$ elements, which acts on the vector $\mathbf{a}\in\mathbb{F}_q^n$, as given in \eqref{random_vector}, as follows:
$$\mathcal{L}\left(\mathbf{a}\right)=\left( a_{m_1-1,1},a_{m_1-2,1},\ldots,a_{1,1}, a_{0,1}, a_{m_2-1,2},a_{m_2-2,2},\ldots,a_{1,2}, a_{0,2},\ldots, a_{m_\ell-1,\ell},a_{m_\ell-2,\ell},\ldots,a_{1,\ell}, a_{0,\ell}\right).$$
Throughout the paper, $\mathcal{L}$ will be used frequently, and unless specifically mentioned, the block lengths will be deduced from the context. Given an MT code $\mathcal{C}$ with block lengths $\left(m_1,m_2,\ldots,m_\ell\right)$, we denote by $\mathcal{L}\left(\mathcal{C}\right)$ the code obtained by permuting each codeword of $\mathcal{C}$ by $\mathcal{L}$. Precisely,
$$\mathcal{L}\left(\mathcal{C}\right)=\left\{\mathcal{L}\left(\mathbf{c}\right) \quad \forall \mathbf{c}\in\mathcal{C}\right\}.$$
The following result shows that if $\mathcal{C}$ is an MT code, then $\mathcal{L}\left(\mathcal{C}\right)$ is also MT and provides its GPM. This result is fundamental in deriving a GPM for the reversed code of an MT code.

\begin{lemma}
\label{Th-Reciprocal}
Let $\Lambda=\left(\lambda_1,\lambda_2, \ldots, \lambda_\ell\right)$ and $\Delta=\left(\lambda_1^{-1}, \lambda_2^{-1}, \ldots, \lambda_\ell^{-1}\right)$. Consider a $\Lambda$-MT code $\mathcal{C}$ with block lengths $(m_1,m_2,\ldots,m_\ell)$. Then, the code $\mathcal{L}\left(\mathcal{C}\right)$ is $\Delta$-MT with block lengths $(m_1,m_2,\ldots,m_\ell)$ and a GPM $\mathbf{B}^T$, where $\mathbf{B}$ is the polynomial matrix satisfying \eqref{Identical_Eq_dual} for some GPM $\mathbf{H}$ of the dual code $\mathcal{C}^\perp$.
\end{lemma}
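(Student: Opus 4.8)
The plan is to prove the two assertions separately: first that $\mathcal{L}(\mathcal{C})$ is $\Delta$-MT with the same block lengths, and then that $\mathbf{B}^T$ is a GPM for it. For the first, the key device is a conjugation identity between the two twisting operators. Working one block at a time, I would verify directly that $\mathcal{L}\circ\mathcal{T}_\Lambda=\mathcal{T}_\Delta^{-1}\circ\mathcal{L}$ on $\mathbb{F}_q^n$: reversing a $\lambda_i$-block and then applying the $\lambda_i^{-1}$-shift reproduces the effect of first applying the $\lambda_i$-shift and then reversing. Rearranged, this reads $\mathcal{T}_\Delta\circ\mathcal{L}=\mathcal{L}\circ\mathcal{T}_\Lambda^{-1}$, so for any $\mathbf{c}\in\mathcal{C}$ we get $\mathcal{T}_\Delta(\mathcal{L}(\mathbf{c}))=\mathcal{L}(\mathcal{T}_\Lambda^{-1}(\mathbf{c}))\in\mathcal{L}(\mathcal{C})$. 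Hence $\mathcal{L}(\mathcal{C})$ is $\mathcal{T}_\Delta$-invariant, i.e.\ $\Delta$-MT, and the block lengths are unchanged because $\mathcal{L}$ permutes coordinates only within blocks. The same identity shows that $\mathcal{L}(\mathcal{C})$ is generated, as a $\Delta$-MT module, by the componentwise reciprocals $\mathcal{L}(\mathbf{g}_k)$ of the rows of $\mathbf{G}$; in polynomial terms the $i$-th entry of $\mathcal{L}(\mathbf{g}_k)$ is $x^{m_i-1}g_{ki}(1/x)$ reduced modulo $x^{m_i}-\lambda_i$.

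Next I would identify a GPM. Transposing the identical equation \eqref{Identical_Eq_dual} and using that $\mathrm{diag}(x^{m_i}-\lambda_i^{-1})$ is symmetric gives $\mathbf{H}^T\mathbf{B}^T=\mathrm{diag}(x^{m_i}-\lambda_i^{-1})$. This is precisely an identical equation of the form \eqref{Identical_Eq} for a $\Delta$-MT code, with $\mathbf{H}^T$ playing the role of $\mathbf{A}$, so by the converse characterization of GPMs the matrix $\mathbf{B}^T$ is a GPM for a $\Delta$-MT code $\mathcal{C}'$ with block lengths $(m_1,\ldots,m_\ell)$. It is a genuine GPM since $\det\mathbf{B}^T=\det\mathbf{B}\neq 0$, which follows from $\det\mathbf{B}\,\det\mathbf{H}=\prod_i(x^{m_i}-\lambda_i^{-1})$. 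By \eqref{Dim_Eq_fromA} applied to $\mathbf{B}^T$ we get $\dim\mathcal{C}'=\deg\det\mathbf{H}$, while \eqref{Dim_Eq_fromA} applied to $\mathbf{H}$ gives $\dim\mathcal{C}^\perp=\deg\det\mathbf{B}$. Since $\deg\det\mathbf{B}+\deg\det\mathbf{H}=n$, this yields $\dim\mathcal{C}'=n-\dim\mathcal{C}^\perp=\dim\mathcal{C}=\dim\mathcal{L}(\mathcal{C})$.

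It then remains to show $\mathcal{C}'=\mathcal{L}(\mathcal{C})$; as the two $\Delta$-MT codes have equal dimension, a single inclusion suffices. To connect $\mathbf{B}$ with the reciprocal of $\mathbf{G}$, I would substitute $x\mapsto 1/x$ in $\mathbf{A}\mathbf{G}=\mathrm{diag}(x^{m_i}-\lambda_i)$ and transpose to obtain $\mathbf{G}^T(1/x)\mathbf{A}^T(1/x)=\mathrm{diag}(x^{-m_i}-\lambda_i)$; multiplying on the right by $\mathrm{diag}(x^{m_i})$ and simplifying shows that $\mathrm{diag}(-\lambda_i^{-1})\mathbf{G}^T(1/x)$ serves as the $\mathbf{A}$-matrix for the generating set $\mathbf{A}^T(1/x)\mathrm{diag}(x^{m_i})$ of $\mathcal{C}^\perp$ from \eqref{Eq-Reciprocal1}. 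Transposing this relation and tracking the unimodular reduction of that generating set to the reduced GPM $\mathbf{H}$ identifies the row span of $\mathbf{B}^T$ with the $\Delta$-module generated by the reciprocals of the rows of $\mathbf{G}$, which is exactly $\mathcal{L}(\mathcal{C})$ by the first paragraph. The hard part will be precisely this bookkeeping: reconciling the unreduced generating matrix $\mathbf{A}^T(1/x)\mathrm{diag}(x^{m_i})$ with its Hermite reduction $\mathbf{H}$ (equivalently, relating $\mathrm{diag}(-\lambda_i^{-1})\mathbf{G}^T(1/x)$ to $\mathbf{B}$), together with the degree-$<m_i$ normalization that makes "reciprocal" coincide literally with coordinate reversal. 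The equal-dimension count from the second paragraph is what lets me avoid proving both inclusions and settle the matching with a single containment.
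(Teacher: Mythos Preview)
Your first paragraph is essentially the paper's argument for the $\Delta$-MT property, and your second paragraph (transposing the identical equation and counting dimensions) is a valid observation, though the paper does not use it. The problem is the third paragraph: you do not actually prove $\mathcal{C}'=\mathcal{L}(\mathcal{C})$, you only promise to. The object $\mathrm{diag}(-\lambda_i^{-1})\mathbf{G}^T(1/x)$ that you cast as an ``$\mathbf{A}$-matrix'' is not a polynomial matrix, and the generating set $\mathbf{A}^T(1/x)\mathrm{diag}(x^{m_i})$ is not a GPM for $\mathcal{C}^\perp$ (it must be stacked with the diagonal, per~\eqref{Eq-Reciprocal1}). So ``tracking the unimodular reduction'' to the actual GPM $\mathbf{H}$ and then to $\mathbf{B}$ is not a routine bookkeeping exercise; you have not supplied the argument, and it is not clear the path you outline closes cleanly.

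The paper bypasses all of this with a double-dual trick. It first establishes the generic fact that $\mathcal{L}(\mathcal{D})$ is generated by $\mathbf{G}_\mathcal{D}(1/x)\mathrm{diag}(x^{m_i})$ together with $\mathrm{diag}(x^{m_i}-\lambda_i^{-1})$, for any $\Lambda$-MT code $\mathcal{D}$. Applying this with $\mathcal{D}=\mathcal{C}^\perp$ and using the generating matrix $\mathbf{A}^T(1/x)\mathrm{diag}(x^{m_i})$ from~\eqref{Eq-Reciprocal1}, the reciprocal operation undoes itself and one lands on $\mathbf{A}^T$; since $\mathbf{G}^T\mathbf{A}^T=\mathrm{diag}(x^{m_i}-\lambda_i)$, the diagonal block is already in its row span, so $\mathbf{A}^T$ is a GPM for $\mathcal{L}(\mathcal{C}^\perp)$. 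This establishes the general principle ``$\mathbf{A}_\mathcal{D}^T$ is a GPM for $\mathcal{L}(\mathcal{D}^\perp)$''. Now simply apply the principle with $\mathcal{D}=\mathcal{C}^\perp$: since $\mathbf{B}$ is the $\mathbf{A}$-matrix of $\mathcal{C}^\perp$, the matrix $\mathbf{B}^T$ is a GPM for $\mathcal{L}\big((\mathcal{C}^\perp)^\perp\big)=\mathcal{L}(\mathcal{C})$. No bookkeeping, no dimension count, no inclusion-chasing.
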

\begin{proof}
We first show that $\mathcal{L}\left(\mathcal{C}\right)$ is a $\Delta$-MT code. Consider an arbitrary codeword $\mathcal{L}\left(\mathbf{b}\right)\in \mathcal{L}\left(\mathcal{C}\right)$. Then, we have 
$$\mathbf{b}=\mathcal{L}\left(\mathcal{L}\left(\mathbf{b}\right)\right)\in \mathcal{L}\left(\mathcal{L}\left(\mathcal{C}\right)\right)=\mathcal{C}.$$
Thus, there exists $\mathbf{c}\in\mathcal{C}$ such that $\mathbf{b}=\mathcal{T}_\Lambda\left(\mathbf{c}\right)$. Consequently, $\mathcal{L}\left(\mathcal{C}\right)$ is $\Delta$-MT because
$$\mathcal{T}_\Delta\left(\mathcal{L}\left(\mathbf{b}\right)\right)=\mathcal{T}_\Delta\left(\mathcal{L}\left(\mathcal{T}_\Lambda\left(\mathbf{c}\right)\right)\right)=\mathcal{L}\left(\mathbf{c}\right)\in \mathcal{L}\left(\mathcal{C}\right).$$

Now, we derive a polynomial matrix that generates $\mathcal{L}\left(\mathcal{C}\right)$. In fact, for any polynomial vector $\mathbf{c}\left(x\right)=\left(c_1(x), c_2(x), \ldots, c_\ell(x)\right)\in\mathcal{C}$, the polynomial vector  
$$\mathbf{c}\left(\frac{1}{x}\right)\mathrm{diag}\left(x^{m_i}\right)=x\left(x^{m_1-1}c_1\left(\frac{1}{x}\right), x^{m_2-1}c_2\left(\frac{1}{x}\right), \ldots, x^{m_\ell-1}c_\ell \left(\frac{1}{x}\right)\right)\in \mathcal{L}\left(\mathcal{C}\right).$$
This shows that if $\mathbf{G}$ is a GPM for $\mathcal{C}$, then $\mathcal{L}\left(\mathcal{C}\right)$ is generated by the polynomial matrix 
\begin{equation}
\label{Inproof_Th-Reciprocal}
\begin{pmatrix}
\mathbf{G}\!\left(\frac{1}{x}\right) \mathrm{diag}\left(x^{m_i}\right)\\
\mathrm{diag}\left(x^{m_i}- \lambda_i^{-1} \right)
\end{pmatrix}.
\end{equation}

Now, we consider $\mathcal{L}\left(\mathcal{C}^\perp\right)$. Since $\mathcal{C}^\perp$ is a $\Delta$-MT code, it follows that $\mathcal{L}\left(\mathcal{C}^\perp\right)$ is a $\Lambda$-MT code. As was observed in the previous paragraph, a polynomial matrix generating $\mathcal{L}\left(\mathcal{C}^\perp\right)$ can be derived from a polynomial matrix generating $\mathcal{C}^\perp$. Using the polynomial matrix given by \eqref{Eq-Reciprocal1} for $\mathcal{C}^\perp$, we observe that $\mathbf{A}^T$ generates $\mathcal{L}\left(\mathcal{C}^\perp\right)$. More precisely, $\mathbf{A}^T$ is a GPM for $\mathcal{L}\left(\mathcal{C}^\perp\right)$, where $\mathbf{A}$ is the polynomial matrix satisfying the identical equation given by \eqref{Identical_Eq} for $\mathcal{C}$. However, our goal is to determine a GPM for $\mathcal{L}\left(\mathcal{C}\right)=\mathcal{L}\left(\left(\mathcal{C}^\perp\right)^\perp \right)$. Since $\mathbf{B}$ is the polynomial matrix satisfying the identical equation given by \eqref{Identical_Eq_dual} for $\mathcal{C}^\perp$, it follows that $\mathbf{B}^T$ is a GPM for $\mathcal{L}\left(\mathcal{C}\right)$.
\end{proof}

We are now prepared to present the main result of this section. Specifically, we show that the reversed code $\mathcal{R}$ of an MT code $\mathcal{C}$ is also an MT code. Additionally, we provide a GPM for $\mathcal{R}$.

\begin{theorem}
\label{Th-Reversed}
Let $\Lambda=\left(\lambda_1,\lambda_2, \ldots, \lambda_{\ell-1}, \lambda_\ell\right)$ and $\Gamma=\left(\lambda_\ell^{-1},\lambda_{\ell-1}^{-1}, \ldots, \lambda_2^{-1}, \lambda_1^{-1} \right)$. Consider a $\Lambda$-MT code $\mathcal{C}$ with block lengths $(m_1, m_2, \ldots, m_{\ell-1}, m_\ell)$. Then, the reversed code $\mathcal{R}$ of $\mathcal{C}$ is $\Gamma$-MT with block lengths $(m_\ell, m_{\ell-1},\ldots, m_2, m_1)$ and a GPM $\mathbf{B}^T  J_\ell$, where $\mathbf{B}$ is the polynomial matrix satisfying \eqref{Identical_Eq_dual} for some GPM $\mathbf{H}$ of the dual code $\mathcal{C}^\perp$ and $J_\ell$ is the $\ell\times \ell$ backward identity matrix.
\end{theorem}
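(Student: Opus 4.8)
The plan is to factor the full coordinate reversal $J_n$ through the within-block reversal map $\mathcal{L}$ already analyzed in Lemma \ref{Th-Reciprocal}. The key observation is that reversing all $n$ coordinates of a vector organized into blocks of lengths $(m_1, \ldots, m_\ell)$ is the same as first reversing the coordinates inside each block (which is exactly $\mathcal{L}$) and then reversing the order of the blocks themselves. Writing $\Pi$ for the block-order-reversal permutation, we have the identity $J_n = \Pi \circ \mathcal{L}$ as permutations of $\mathbb{F}_q^n$: if $\mathbf{a} = (B_1, \ldots, B_\ell)$ with $B_i$ the $i$-th block, then both sides send $\mathbf{a}$ to $(\tilde{B}_\ell, \ldots, \tilde{B}_1)$, where $\tilde{B}_i$ denotes the reversal of $B_i$. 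Consequently $\mathcal{R} = J_n(\mathcal{C}) = \Pi(\mathcal{L}(\mathcal{C}))$.

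First I would invoke Lemma \ref{Th-Reciprocal} to conclude that $\mathcal{L}(\mathcal{C})$ is $\Delta$-MT with block lengths $(m_1, \ldots, m_\ell)$ and GPM $\mathbf{B}^T$, where $\Delta = (\lambda_1^{-1}, \ldots, \lambda_\ell^{-1})$. Then I would analyze the effect of the block-reversal $\Pi$. In the polynomial-vector picture a codeword of $\mathcal{L}(\mathcal{C})$ is a tuple $(c_1(x), \ldots, c_\ell(x))$, and $\Pi$ simply reverses the order of its components to $(c_\ell(x), \ldots, c_1(x))$; on row vectors this is right multiplication by the backward identity $J_\ell$. Applying $\Pi$ to each generator therefore sends the GPM $\mathbf{B}^T$ to $\mathbf{B}^T J_\ell$, so $\mathbf{B}^T J_\ell$ generates $\mathcal{R}$.

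It remains to confirm the claimed shift constants and block lengths, and that $\mathbf{B}^T J_\ell$ is genuinely a GPM. Reindexing the blocks by $\Pi$ moves the $i$-th block, which carries shift constant $\lambda_i^{-1}$ and length $m_i$, to position $\ell+1-i$; hence the new shift-constant tuple is $(\lambda_\ell^{-1}, \ldots, \lambda_1^{-1}) = \Gamma$ and the new block lengths are $(m_\ell, \ldots, m_1)$. Formally this is the intertwining relation $\Pi \circ \mathcal{T}_\Delta = \mathcal{T}_\Gamma \circ \Pi$, which gives $\mathcal{T}_\Gamma(\mathcal{R}) = \Pi(\mathcal{T}_\Delta(\mathcal{L}(\mathcal{C}))) \subseteq \mathcal{R}$, so that $\mathcal{R}$ is $\Gamma$-MT. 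Finally, because $J_\ell$ is invertible over $\mathbb{F}_q[x]$, right multiplication by it preserves $\mathbb{F}_q[x]$-linear independence of the rows, so $\mathbf{B}^T J_\ell$ remains a minimal generating set, i.e., a GPM.

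The main obstacle here is bookkeeping rather than conceptual: one must be careful that $\Pi$ reverses the order of whole blocks (of possibly unequal lengths) while $\mathcal{L}$ reverses coordinates inside each block, and that these two operations fit together correctly to reconstitute $J_n$. Once the factorization $J_n = \Pi \circ \mathcal{L}$ is established and Lemma \ref{Th-Reciprocal} is applied, the passage from $\mathbf{B}^T$ to $\mathbf{B}^T J_\ell$ and the reversal of $\Delta$ into $\Gamma$ are immediate.
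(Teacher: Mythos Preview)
Your proposal is correct and follows essentially the same route as the paper: it introduces the block-order-reversal permutation (called $\mathcal{J}$ there, $\Pi$ in your write-up), factors the full coordinate reversal as $\mathcal{J}\circ\mathcal{L}$, applies Lemma~\ref{Th-Reciprocal} to obtain the GPM $\mathbf{B}^T$ for $\mathcal{L}(\mathcal{C})$, and then right-multiplies by $J_\ell$. Your version is slightly more explicit about the intertwining relation $\Pi\circ\mathcal{T}_\Delta=\mathcal{T}_\Gamma\circ\Pi$ and about why $\mathbf{B}^T J_\ell$ remains a GPM, but the argument is the same.
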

\begin{proof}
We denote by $\mathcal{J}$ a permutation map of $n$ elements, which acts on the vector $\mathbf{a}\in\mathbb{F}_q^n$, as given in \eqref{random_vector}, as follows:
$$\mathcal{J}\left(\mathbf{a}\right)=\left( a_{0,\ell},a_{1,\ell},\ldots, a_{m_\ell-1,\ell}, \ldots, a_{0,2},a_{1,2},\ldots, a_{m_2-1,2}, a_{0,1},a_{1,1},\ldots, a_{m_1-1,1}\right).$$
In the polynomial vector representation, $\mathcal{J}$ acts on the polynomial vector $\mathbf{a}(x)$, as given in \eqref{random_poly_vector}, as follows:
$$\mathcal{J}\left(\mathbf{a}(x)\right)=\left(a_\ell(x), \ldots, a_2(x), a_1(x) \right).$$
This shows that if $\mathbf{G}$ is a GPM for $\mathcal{C}$, then the code $\mathcal{J}(\mathcal{C})$, obtained by applying $\mathcal{J}$ to each codeword of $\mathcal{C}$, is a $\left(\lambda_\ell,\lambda_{\ell-1}, \ldots, \lambda_2, \lambda_1 \right)$-MT code with block lengths $(m_\ell, m_{\ell-1}, \ldots, m_2, m_1)$ and a GPM given by $\mathbf{G} J_\ell$.

Our goal is to determine a GPM for $\mathcal{R}$. By combining Lemma \ref{Th-Reciprocal} with the observation in the previous paragraph, the result then follows from the fact that
$$\mathcal{R}=\mathcal{J}\left(\mathcal{L}\left(\mathcal{C}\right)\right).$$
\end{proof}

\begin{example}
\label{Code1-Code2-R6}
A direct application of Theorem \ref{Th-Reversed} ensures that the reversed code $\mathcal{R}_1$ of the MT code $\mathcal{C}_1$, presented in Example \ref{Code1-Code2-R1}, is MT. Since $\mathcal{C}_1$ is a $(1,\omega)$-MT code with block lengths $(6,2)$, it follows that $\mathcal{R}_1$ is a $(\omega^2,1)$-MT code with block lengths $(2,6)$. Moreover, a GPM for $\mathcal{R}_1$ can be easily derived from the polynomial matrix $\mathbf{B}_1$, given in Example \ref{Code1-Code2-R2}, as
\begin{equation*}
\mathbf{B}_1^T  J_2=
\begin{pmatrix}
0 & 1+x+x^2  \\
1 & \omega^2+x 
\end{pmatrix}.
\end{equation*}
By reducing this GPM to its Hermite normal form, the reduced GPM for $\mathcal{R}_1$ is
\begin{equation*}
\begin{pmatrix}
1 &  \omega^2+x \\
0 &   1+ x+x^2
\end{pmatrix}.
\end{equation*}
\hfill $\diamond$
\end{example}

\section{Intersection of MT codes}
\label{Sec-Intersection-MT-Codes}
In Section \ref{Sec-MT-Reversed-Code}, we showed how MT codes can be effectively represented using GPMs. In this section, we study the intersection of a pair of MT codes with identical block lengths. Our main objective is to employ GPMs to derive a formula for constructing a GPM for the intersection, provided that the intersection has an MT structure. This consideration is necessary because the intersection of two MT codes is not necessarily an MT code. In fact, the intersection of a pair of MT codes with different shift constants may or may not be MT. However, when the two MT codes have the same shift constants, their intersection is MT with the same shift constants. To summarize the objectives of this section, let $\mathcal{C}_1$ and $\mathcal{C}_2$ be a $\Lambda$-MT code and a $\Delta$-MT code, respectively, both with block lengths $(m_1,m_2,\ldots,m_\ell)$. We establish the following results:
\begin{enumerate}
\item We show that $\mathcal{C}_1\cap \mathcal{C}_2$ is not necessarily an MT code when $\Lambda\ne \Delta$. This is illustrated through Example \ref{Code3-Code4-R4}. While this result seems reasonable, we explicitly include it to confirm that the result established in \cite{LiuLiu2022} for constacyclic codes---particularly \cite[Example1]{LiuLiu2022}---remains valid in the broader class of MT codes.
\item We show that a pair of MT codes with different shift constants may still intersect in an MT code, as shown in Example \ref{Code3-Code4-R5}. In addition, when $\Lambda\ne \Delta$ and $\mathcal{C}_1 \cap \mathcal{C}_2$ admits an MT structure, we present Theorem \ref{Th-CondOfInter}, which proposes the shift constants for $\mathcal{C}_1\cap \mathcal{C}_2$. 
\item When $\Lambda= \Delta$, the intersection is a $\Lambda$-MT code, making it meaningful to define a GPM for the intersection. In this case, we prove Theorem \ref{Th-Intersection}, which provides a formula for constructing a GPM for $\mathcal{C}_1\cap \mathcal{C}_2$. 
\end{enumerate}

We begin with an example illustrating that the intersection of a pair of MT codes may result in a non-MT structure. In such cases, constructing a GPM for the intersection is meaningless. However, since the intersection remains a linear code, a generator matrix for this intersection can still be determined using Theorem \ref{Th-Linear}.

\begin{example}
\label{Code3-Code4-R4}
Consider the MT codes $\mathcal{C}_3$ and $\mathcal{C}_4$ over $\mathbb{F}_3$ with block lengths $(3,3,3)$, as presented in Example \ref{Code3-Code4-R1}. It was shown that $\mathcal{C}_3$ is a $(1,1,1)$-MT code, while $\mathcal{C}_4$ is a $(2,1,2)$-MT code. In addition, Theorem \ref{Th-Linear} was applied in Example \ref{Code3-Code4} to determine a generator matrix for their intersection $\mathcal{C}_3 \cap \mathcal{C}_4$, which is given by 
\begin{equation*}
\begin{pmatrix}
1&1&0&0&2&0&1&2&1
\end{pmatrix}.
\end{equation*}
Examining the code generated by this matrix, we observe that it does not offer an MT structure with block lengths $(3,3,3)$ for any choice of shift constants. This confirms that the intersection of a pair of MT codes is not necessarily MT.
\hfill $\diamond$
\end{example}

As shown in Example \ref{Code3-Code4-R4}, Theorem \ref{Th-Linear} can be used to determine the intersection of a pair of MT codes when the intersection does not offer an MT structure. Now, assuming that the intersection admits an MT structure, we aim to determine its shift constants when $\Lambda\ne\Delta$. The following result proposes shift constants for the intersection based on the minimum distances of the two codes. For completeness, we include the case when $\Lambda=\Delta$.

\begin{theorem}
\label{Th-CondOfInter}
Let $\mathcal{C}_1$ and $\mathcal{C}_2$ be $\Lambda$-MT and $\Delta$-MT codes over $\mathbb{F}_q$ with block lengths $(m_1, m_2, \ldots, m_\ell)$ and minimum distances $d\left(\mathcal{C}_1\right)$ and $d\left(\mathcal{C}_2\right)$, respectively. Define $D\left(\Lambda-\Delta\right)$ as the number of indices for which $\Lambda$ differs from $\Delta$. Then, 
\begin{enumerate}
\item If $\Lambda = \Delta$, then $\mathcal{C}_1\cap\mathcal{C}_2$ is a $\Lambda$-MT code with block lengths $(m_1,m_2,\ldots,m_\ell)$.
\item If $\Lambda \neq \Delta$ and $\mathcal{C}_1 \cap \mathcal{C}_2$ admits an MT structure with block lengths $(m_1,m_2,\ldots,m_\ell)$, then:
\begin{enumerate}
\item If $d\left(\mathcal{C}_1\right) > \ell$, then $\mathcal{C}_1\cap\mathcal{C}_2$ is a $\Lambda$-MT code.
\item If $d\left(\mathcal{C}_2\right) > \ell$, then $\mathcal{C}_1\cap\mathcal{C}_2$ is a $\Delta$-MT code.
\item If $d\left(\mathcal{C}_1\right) \leq \ell$ and $d\left(\mathcal{C}_2\right) \leq \ell$, then one of the subsequent possibilities occurs:
\begin{enumerate}
\item $\mathcal{C}_1\cap\mathcal{C}_2$ is simultaneously a $\Lambda$-MT and $\Delta$-MT code.
\item $\mathcal{C}_1\cap\mathcal{C}_2$ is a $\Lambda$-MT code, provided that $d\left(\mathcal{C}_2\right) \le D\left(\Lambda-\Delta\right)$. 
\item $\mathcal{C}_1\cap\mathcal{C}_2$ is a $\Delta$-MT code, provided that $d\left(\mathcal{C}_1\right) \le D\left(\Lambda-\Delta\right)$. 
\item $\mathcal{C}_1\cap\mathcal{C}_2$ is a $\Gamma$-MT code, provided that $d\left(\mathcal{C}_1\right) \le D\left(\Lambda-\Gamma\right)$ and $d\left(\mathcal{C}_2\right) \le D\left(\Delta-\Gamma\right)$. 
\end{enumerate}
\end{enumerate}
\end{enumerate}
\end{theorem}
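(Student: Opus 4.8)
The plan is to work directly from the invariance description of MT codes, namely that a code is $\Lambda$-MT exactly when it is stable under the shift operator $\mathcal{T}_\Lambda$, rather than through GPMs. Part~1 is then immediate: if $\Lambda=\Delta$ both $\mathcal{C}_1$ and $\mathcal{C}_2$ are $\mathcal{T}_\Lambda$-invariant, so for any $\mathbf{c}\in\mathcal{C}_1\cap\mathcal{C}_2$ we get $\mathcal{T}_\Lambda(\mathbf{c})\in\mathcal{C}_1$ and $\mathcal{T}_\Lambda(\mathbf{c})\in\mathcal{C}_2$, hence $\mathcal{T}_\Lambda(\mathbf{c})\in\mathcal{C}_1\cap\mathcal{C}_2$; the intersection is $\Lambda$-MT with the same block lengths, and no hypothesis on the minimum distances is needed.

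For Part~2, I would write $\mathcal{C}=\mathcal{C}_1\cap\mathcal{C}_2$ and use the standing assumption that $\mathcal{C}$ is $\Gamma$-MT for some $\Gamma=(\gamma_1,\ldots,\gamma_\ell)$ with block lengths $(m_1,\ldots,m_\ell)$. The central computation is to compare $\mathcal{T}_\Lambda$ and $\mathcal{T}_\Gamma$ on $\mathcal{C}$ using the explicit shift formula. On the $i$-th block the two operators agree everywhere except in the first coordinate, where $\mathcal{T}_\Lambda$ places $\lambda_i c_{m_i-1,i}$ and $\mathcal{T}_\Gamma$ places $\gamma_i c_{m_i-1,i}$. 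Consequently, for every $\mathbf{c}\in\mathcal{C}$ the vector $\mathcal{T}_\Lambda(\mathbf{c})-\mathcal{T}_\Gamma(\mathbf{c})$ is supported on the first coordinates of the blocks $i$ with $\lambda_i\ne\gamma_i$, so its Hamming weight is at most $D(\Lambda-\Gamma)$. Crucially, $\mathcal{T}_\Lambda(\mathbf{c})\in\mathcal{C}_1$ (since $\mathcal{C}_1$ is $\Lambda$-MT and $\mathbf{c}\in\mathcal{C}_1$) and $\mathcal{T}_\Gamma(\mathbf{c})\in\mathcal{C}\subseteq\mathcal{C}_1$, so the difference lies in $\mathcal{C}_1$. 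This produces the key dichotomy: either $\mathcal{T}_\Lambda(\mathbf{c})=\mathcal{T}_\Gamma(\mathbf{c})$ for all $\mathbf{c}\in\mathcal{C}$, whence $\mathcal{T}_\Lambda(\mathcal{C})=\mathcal{T}_\Gamma(\mathcal{C})=\mathcal{C}$ and $\mathcal{C}$ is $\Lambda$-MT, or some such difference is a nonzero codeword of $\mathcal{C}_1$ of weight at most $D(\Lambda-\Gamma)$, forcing $d(\mathcal{C}_1)\le D(\Lambda-\Gamma)$. Repeating verbatim with $(\mathcal{C}_2,\Delta)$ in place of $(\mathcal{C}_1,\Lambda)$ gives: either $\mathcal{C}$ is $\Delta$-MT, or $d(\mathcal{C}_2)\le D(\Delta-\Gamma)$.

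Parts~(a) and~(b) would then drop out immediately, since $D(\Lambda-\Gamma)\le\ell$ and $D(\Delta-\Gamma)\le\ell$ always hold: when $d(\mathcal{C}_1)>\ell$ the nonzero alternative of the dichotomy is impossible, so $\mathcal{C}$ must be $\Lambda$-MT, and symmetrically for $d(\mathcal{C}_2)>\ell$. For Part~(c), with $d(\mathcal{C}_1)\le\ell$ and $d(\mathcal{C}_2)\le\ell$, I would split according to the two independent binary alternatives ``$\mathcal{C}$ is $\Lambda$-MT'' and ``$\mathcal{C}$ is $\Delta$-MT''. If both hold we are in subcase~(i). If $\mathcal{C}$ is $\Lambda$-MT but not $\Delta$-MT, I would invoke the dichotomy with the witnessing shift taken to be $\Gamma=\Lambda$; failure of $\Delta$-invariance then forces $d(\mathcal{C}_2)\le D(\Delta-\Lambda)=D(\Lambda-\Delta)$, giving subcase~(ii), and the mirror argument with $\Gamma=\Delta$ yields subcase~(iii). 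Finally, if $\mathcal{C}$ is neither $\Lambda$- nor $\Delta$-MT but is $\Gamma$-MT for some $\Gamma$, both halves of the dichotomy fire, delivering $d(\mathcal{C}_1)\le D(\Lambda-\Gamma)$ and $d(\mathcal{C}_2)\le D(\Delta-\Gamma)$, which is subcase~(iv). These four cases are plainly exhaustive.

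The only genuinely delicate point is the block-wise support computation together with the observation that the discrepancy $\mathcal{T}_\Lambda(\mathbf{c})-\mathcal{T}_\Gamma(\mathbf{c})$ actually lands inside $\mathcal{C}_1$ (not merely inside $\mathbb{F}_q^n$); once this is secured the minimum-distance bound does all of the real work. The case analysis in~(c) is then essentially bookkeeping, with the one subtlety being the deliberate choice $\Gamma=\Lambda$ or $\Gamma=\Delta$ in the mixed cases so that the recorded bounds involve $D(\Lambda-\Delta)$ rather than a generic $D(\Lambda-\Gamma)$.
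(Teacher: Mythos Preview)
Your proposal is correct and follows essentially the same approach as the paper: both arguments rest on the observation that for $\mathbf{c}\in\mathcal{C}_1\cap\mathcal{C}_2$ the difference $\mathcal{T}_\Lambda(\mathbf{c})-\mathcal{T}_\Gamma(\mathbf{c})$ lies in $\mathcal{C}_1$ with weight at most $D(\Lambda-\Gamma)$, yielding the dichotomy that either the intersection is $\Lambda$-MT or $d(\mathcal{C}_1)\le D(\Lambda-\Gamma)$. Your treatment of part~(c) is in fact more explicit than the paper's, particularly in deliberately choosing $\Gamma=\Lambda$ or $\Gamma=\Delta$ in the mixed subcases to obtain the stated bounds involving $D(\Lambda-\Delta)$.
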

\begin{proof}
If $\Lambda = \Delta$, then for every $\mathbf{c} \in \mathcal{C}_1 \cap \mathcal{C}_2$, we have $\mathcal{T}_\Lambda(\mathbf{c}) \in \mathcal{C}_1 \cap \mathcal{C}_2$. Consequently, $\mathcal{C}_1 \cap \mathcal{C}_2$ is a $\Lambda$-MT code.

Now, assume that $\Lambda \neq \Delta$ and that $\mathcal{C}_1 \cap \mathcal{C}_2$ admits an MT structure. If $\mathcal{T}_\Lambda\left(\mathbf{c}\right)\in\mathcal{C}_1\cap\mathcal{C}_2$ for every $\mathbf{c}\in\mathcal{C}_1\cap\mathcal{C}_2$, then $\mathcal{C}_1\cap\mathcal{C}_2$ is a $\Lambda$-MT code. Suppose, on the contrary, that $\mathcal{C}_1 \cap \mathcal{C}_2$ is not a $\Lambda$-MT code. Then there exists some $\mathbf{c} \in \mathcal{C}_1 \cap \mathcal{C}_2$ such that $\mathcal{T}_\Lambda\left(\mathbf{c}\right) \not\in \mathcal{C}_1 \cap \mathcal{C}_2$. By assumption, $\mathcal{C}_1 \cap \mathcal{C}_2$ is an MT code with shift constants, say, $\Gamma \neq \Lambda$. Therefore, we have $\mathcal{T}_\Gamma\left(\mathbf{c}\right) \in \mathcal{C}_1 \cap \mathcal{C}_2$. It follows that the difference $\mathcal{T}_\Gamma\left(\mathbf{c}\right)-\mathcal{T}_\Lambda\left(\mathbf{c}\right)$ is a nonzero codeword of $\mathcal{C}_1$. Moreover, the number of nonzero coordinates in this codeword is at most $D\left(\Lambda-\Gamma\right)$. Consequently, we obtain
$$d\left(\mathcal{C}_1\right)\le \text{weight}\left(\mathcal{T}_\Gamma\left(\mathbf{c}\right)-\mathcal{T}_\Lambda\left(\mathbf{c}\right)\right)\le D\left(\Lambda-\Gamma\right).$$

In summary, $\mathcal{C}_1 \cap \mathcal{C}_2$ is either a $\Lambda$-MT code or a $\Gamma$-MT code with the constraint $d\left(\mathcal{C}_1\right)\leq D\left(\Lambda-\Gamma\right)$. This latter case is valid only when $d\left(\mathcal{C}_1\right) \leq \ell$, implying that $\mathcal{C}_1 \cap \mathcal{C}_2$ must be a $\Lambda$-MT code if $d\left(\mathcal{C}_1\right)>\ell$. A similar conclusion holds upon replacing $\Lambda$ with $\Delta$ and $\mathcal{C}_1$ with $\mathcal{C}_2$.
\end{proof}

\begin{example}
\label{Code3-Code4-R5}
Consider the $[9,6,3]$ code $\mathcal{C}_5$ with generator matrix $G_5$, as presented in Example \ref{Code3-Code4}. In fact, $\mathcal{C}_5$ is a $\Delta$-MT code with index $\ell = 3$ and block lengths $(3,3,3)$, where $\Delta=(2,2,2)$. In other words, $\mathcal{C}_5$ is a $\lambda$-QT code with $\lambda=2$. Following the construction outlined in Example \ref{Code1-Code2-R1}, we obtain its reduced GPM $\mathbf{G}_5$ and the polynomial matrix $\mathbf{A}_5$ that satisfies the identical equation as follows
\begin{equation*}
\mathbf{G}_5=\begin{pmatrix}
1&0&1+2 x^2\\
0&1&1+x^2\\
0&0&x^3 - 2
\end{pmatrix}\quad \text{and} \quad
\mathbf{A}_5=\begin{pmatrix}
x^3 - 2 &0 & 2+x^2 \\
0 & x^3 - 2 &2+2 x^2 \\
0 & 0 & 1
\end{pmatrix} .
\end{equation*}

In this example, We examine the intersection $\mathcal{C}_3 \cap \mathcal{C}_5$, where $\mathcal{C}_3$ is the $\Lambda$-MT code with parameters $[9,3,6]$ and block lengths $(3,3,3)$, as presented in Example \ref{Code3-Code4-R1}, with shift constants $\Lambda = (1,1,1)$. In other words, $\mathcal{C}_3$ is a QC code. According to Theorem \ref{Th-CondOfInter}, since $d\left(\mathcal{C}_3\right)=6 > 3=\ell$, if $\mathcal{C}_3 \cap \mathcal{C}_5$ admits an MT structure, then it must be a $\Lambda$-MT code, i.e., QC.

Applying Theorem \ref{Th-Linear}, a generator matrix for $\mathcal{C}_3 \cap \mathcal{C}_5$ is determined as
\begin{equation*}
\begin{pmatrix}
1&1&1&1&1&1&2&2&2
\end{pmatrix}.
\end{equation*}
Thus, $\mathcal{C}_3 \cap \mathcal{C}_5$ is a maximum distance separable (MDS) code with parameters $[9,1,9]$. In addition, verifying its MT structure with block lengths $(3,3,3)$ and the shift constants $\Lambda=(1,1,1)$ proposed by Theorem \ref{Th-CondOfInter}, we confirm that $\mathcal{C}_3 \cap \mathcal{C}_5$ is indeed a $\Lambda$-MT code, i.e., QC. 
\hfill $\diamond$
\end{example}

In practical applications, codes with large minimum distances are of significant importance. Consequently, studying the intersection of MT codes with minimum distances greater than the code index $\ell$ is particularly relevant. The following consequence of Theorem \ref{Th-CondOfInter} provides an important insight into the intersection of MT codes with minimum distances exceeding the code index.

\begin{corollary}
\label{Corr-Th-CondOfInter}
Let $\Lambda=\left(\lambda_1,\lambda_2,\ldots,\lambda_\ell\right)$ and $\Delta=\left(\delta_1,\delta_2,\ldots,\delta_\ell\right)$. Suppose $\mathcal{C}_1$ and $\mathcal{C}_2$ are $\Lambda$-MT and $\Delta$-MT codes, respectively, with block lengths $(m_1, m_2, \ldots, m_\ell)$ and minimum distances satisfying $d\left(\mathcal{C}_1\right)>\ell$ and $d\left(\mathcal{C}_2\right)>\ell$. Then, $\mathcal{C}_1\cap\mathcal{C}_2$ admits an MT structure with block lengths $(m_1, m_2, \ldots, m_\ell)$ if and only if the projection of $\mathcal{C}_1\cap\mathcal{C}_2$ onto its $i$-th block is zero for every index $1\le i\le \ell$ such that $\lambda_i\ne\delta_i$. In particular, if $\lambda_i\ne\delta_i$ for all $i=1,2,\ldots,\ell$, then $\mathcal{C}_1\cap\mathcal{C}_2$ admits an MT structure with block lengths $(m_1, m_2, \ldots, m_\ell)$ if and only if $\mathcal{C}_1\cap\mathcal{C}_2=\left\{\mathbf{0}\right\}$.
\end{corollary}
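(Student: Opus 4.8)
The plan is to use Theorem \ref{Th-CondOfInter} to convert the hypothesis ``$\mathcal{C}_1\cap\mathcal{C}_2$ admits an MT structure'' into the much stronger statement that $\mathcal{C}_1\cap\mathcal{C}_2$ is \emph{simultaneously} $\Lambda$-MT and $\Delta$-MT. Indeed, since $d\left(\mathcal{C}_1\right)>\ell$, Theorem \ref{Th-CondOfInter} forces any MT structure on the intersection to be $\Lambda$-MT, and since $d\left(\mathcal{C}_2\right)>\ell$ it likewise forces that structure to be $\Delta$-MT. The key computational observation, already exploited in the proof of Theorem \ref{Th-CondOfInter}, is that for any $\mathbf{c}\in\mathbb{F}_q^n$ the vector $\mathcal{T}_\Lambda\left(\mathbf{c}\right)-\mathcal{T}_\Delta\left(\mathbf{c}\right)$ is supported only on the leading coordinate of each block $i$ with $\lambda_i\neq\delta_i$, where its entry equals $\left(\lambda_i-\delta_i\right)c_{m_i-1,i}$; hence its Hamming weight is at most $D\left(\Lambda-\Delta\right)\le\ell$.

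For the forward direction I would take $\mathbf{c}\in\mathcal{C}_1\cap\mathcal{C}_2$ and note that both $\mathcal{T}_\Lambda\left(\mathbf{c}\right)$ and $\mathcal{T}_\Delta\left(\mathbf{c}\right)$ lie in $\mathcal{C}_1\cap\mathcal{C}_2\subseteq\mathcal{C}_1$ by the simultaneity just established. Their difference is therefore a codeword of $\mathcal{C}_1$ of weight at most $\ell<d\left(\mathcal{C}_1\right)$, so it must vanish; this yields $c_{m_i-1,i}=0$ for every block $i$ with $\lambda_i\neq\delta_i$ and every $\mathbf{c}\in\mathcal{C}_1\cap\mathcal{C}_2$. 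To upgrade this from the single leading coordinate to the whole block, I would feed the shifts $\mathcal{T}_\Lambda^{\,j}\left(\mathbf{c}\right)$, $j=0,1,\ldots,m_i-1$, back into this fact: since $\mathcal{C}_1\cap\mathcal{C}_2$ is $\Lambda$-MT it contains each $\mathcal{T}_\Lambda^{\,j}\left(\mathbf{c}\right)$, and the $(m_i-1)$-th coordinate of the $i$-th block of $\mathcal{T}_\Lambda^{\,j}\left(\mathbf{c}\right)$ is exactly $c_{m_i-1-j,i}$ (no wrap-around factor appears for $0\le j\le m_i-1$), so letting $j$ run over a full period shows the entire $i$-th block projection of $\mathbf{c}$ is zero.

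For the converse I would assume the $i$-th block projection of $\mathcal{C}_1\cap\mathcal{C}_2$ vanishes for every $i$ with $\lambda_i\neq\delta_i$ and prove directly that the intersection is $\Lambda$-MT. Given $\mathbf{c}\in\mathcal{C}_1\cap\mathcal{C}_2$, the vector $\mathcal{T}_\Lambda\left(\mathbf{c}\right)-\mathcal{T}_\Delta\left(\mathbf{c}\right)$ is supported only on the leading coordinates of blocks $i$ with $\lambda_i\neq\delta_i$, and each such entry $\left(\lambda_i-\delta_i\right)c_{m_i-1,i}$ vanishes because $c_{m_i-1,i}=0$ by hypothesis; hence $\mathcal{T}_\Lambda\left(\mathbf{c}\right)=\mathcal{T}_\Delta\left(\mathbf{c}\right)$, which lies in $\mathcal{C}_2$ (as $\mathcal{C}_2$ is $\Delta$-MT) and in $\mathcal{C}_1$ (as $\mathcal{C}_1$ is $\Lambda$-MT), so $\mathcal{T}_\Lambda\left(\mathbf{c}\right)\in\mathcal{C}_1\cap\mathcal{C}_2$ and the intersection is $\Lambda$-MT. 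The final ``in particular'' assertion is then immediate, since when $\lambda_i\neq\delta_i$ for all $i$ the vanishing of every block projection is equivalent to $\mathcal{C}_1\cap\mathcal{C}_2=\{\mathbf{0}\}$, and the zero code is trivially MT. I expect the only delicate point to be the iteration step in the forward direction, where one must carefully verify that closure under $\mathcal{T}_\Lambda$ propagates the single vanishing coordinate across the whole block without spurious $\lambda_i$ factors.
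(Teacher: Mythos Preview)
Your proposal is correct and follows essentially the same approach as the paper: both directions hinge on the observation that $\mathcal{T}_\Lambda(\mathbf{c})-\mathcal{T}_\Delta(\mathbf{c})$ has weight at most $\ell$ and is supported on the leading coordinates of blocks where $\lambda_i\neq\delta_i$, combined with the simultaneous $\Lambda$- and $\Delta$-MT structure furnished by Theorem~\ref{Th-CondOfInter}. The only cosmetic difference is that the paper handles the forward direction via a ``without loss of generality $c_{m_i-1,i}\neq 0$'' reduction (implicitly using the $\Lambda$-MT closure to shift a nonzero block entry into the leading position), whereas you make the same shifting argument explicit by iterating over $\mathcal{T}_\Lambda^{\,j}(\mathbf{c})$.
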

\begin{proof}
Assume that the projection of $\mathcal{C}_1\cap\mathcal{C}_2$ onto its $i$-th block is zero for every $1\le i\le \ell$ with $\lambda_i\ne\delta_i$. Then, for any $\mathbf{c}\in \mathcal{C}_1\cap\mathcal{C}_2$, we have $\mathcal{T}_\Lambda\left(\mathbf{c}\right)=\mathcal{T}_\Delta\left(\mathbf{c}\right)$. Consequently, $\mathcal{C}_1\cap\mathcal{C}_2$ is simultaneously $\Lambda$-MT and $\Delta$-MT, implying that $\mathcal{C}_1\cap\mathcal{C}_2$ admits an MT structure. 

Conversely, assume that $\mathcal{C}_1\cap\mathcal{C}_2$ admits an MT structure. By Theorem \ref{Th-CondOfInter}, since $d\left(\mathcal{C}_1\right)>\ell$ and $d\left(\mathcal{C}_2\right)>\ell$, it follows that $\mathcal{C}_1\cap\mathcal{C}_2$ is simultaneously $\Lambda$-MT and $\Delta$-MT. Let $1\le i\le \ell$ be an index such that $\lambda_i\ne\delta_i$. Assume there exists a codeword $\mathbf{c} \in \mathcal{C}_1\cap\mathcal{C}_2$ with a nonzero $i$-th block. We may assume without loss of generality that $c_{m_i-1,i}\ne 0$, see \eqref{random_vector}. Then the difference $\mathcal{T}_\Lambda\left(\mathbf{c}\right)-\mathcal{T}_\Delta\left(\mathbf{c}\right)$ is nonzero. Since $\mathcal{C}_1\cap\mathcal{C}_2$ is simultaneously $\Lambda$-MT and $\Delta$-MT, $\mathcal{T}_\Lambda\left(\mathbf{c}\right)-\mathcal{T}_\Delta\left(\mathbf{c}\right)$ is a codeword of $\mathcal{C}_1\cap\mathcal{C}_2$, and moreover, it has weight at most $\ell$. But this contradicts the fact that $d\left(\mathcal{C}_1 \cap\mathcal{C}_2\right) \ge d\left(\mathcal{C}_1\right)>\ell$. Thus, every codeword $\mathbf{c} \in \mathcal{C}_1\cap\mathcal{C}_2$ must have a zero $i$-th block for every index $1\le i\le \ell$ such that $\lambda_i\ne\delta_i$.

In the special case where $\lambda_i\ne\delta_i$ for all $i=1,2,\ldots,\ell$, it follows that $\mathcal{C}_1\cap\mathcal{C}_2$ admits an MT structure if and only if every codeword $\mathbf{c} \in \mathcal{C}_1\cap\mathcal{C}_2$ has all of its blocks equal to zero, i.e., $\mathcal{C}_1\cap\mathcal{C}_2=\left\{\mathbf{0}\right\}$.
\end{proof}

\begin{remark}
\label{Imp-Remark}
Corollary \ref{Corr-Th-CondOfInter} shows that if $\Lambda \ne\Delta$, then the intersection of the corresponding MT codes will have trivial blocks at the indices where $\Lambda$ and $\Delta$ differ. We remark that, despite that the two codes ($\mathcal{C}_3$ and $\mathcal{C}_5$) presented in Example \ref{Code3-Code4-R5} satisfy the condition $\lambda_i\ne\delta_i$ for all $i=1,2,\ldots,\ell$, they have a nontrivial intersection. This does not contradict Corollary \ref{Corr-Th-CondOfInter} because one of the two codes, namely $\mathcal{C}_5$, has a minimum distance equal to the code index. In contrast, $\mathcal{C}_5^\perp$ is a $(2,2,2)$-MT code with block lengths $(3,3,3)$ and a minimum distance $d\left(\mathcal{C}_5^\perp\right) = 5 > \ell$. We find that $\mathcal{C}_3 \cap \mathcal{C}_5^\perp = \{\mathbf{0}\}$ by Corollary \ref{Linear_LCP}, since $\mathrm{rank}\left(G_5 G_3^T\right) = k_3$; this is consistent with Corollary \ref{Corr-Th-CondOfInter}. In fact, Corollary \ref{Corr-Th-CondOfInter} holds under the condition $d\left(\mathcal{C}_1\right)>\ell$ and $d\left(\mathcal{C}_2\right)>\ell$. This condition may be of primary interest in practical applications, since codes with small minimum distances are generally less relevant. On the other hand, an intersection with trivial blocks at the indices where $\Lambda$ and $\Delta$ differ may seem meaningless. Thus, Corollary \ref{Corr-Th-CondOfInter} naturally suggests that the case where $\lambda_i=\delta_i$ for all $1\le i\le \ell$, i.e., $\Lambda=\Delta$, seems the more relevant framework. Accordingly, hereinafter, we assume identical shift constants for the two codes whose intersection is under consideration. This assumption not only to prevent trivial blocks but also guarantees that the intersection always admits an MT structure, regardless of the minimum distances of the pair of codes. This is because, by setting $\Lambda=\Delta$, Theorem \ref{Th-CondOfInter} shows that the intersection is $\Lambda$-MT without imposing any constraints on the minimum distances.
\hfill $\diamond$
\end{remark}

Our next result, which is the main result of this section, provides a GPM construction for the intersection of a pair of $\Lambda$-MT codes. We introduce the following convenient notation that will be used in all subsequent results. Define $N$ as the smallest integer for which the map $\mathcal{T}_\Lambda^N$ acts as the identity map on $\mathbb{F}_q^n$. For a QC code, $N$ is the code co-index $m$. In contrast, for a $\lambda$-QT code, $N$ is given by $N=t m$, where $t$ is the multiplicative order of $\lambda$  in $\mathbb{F}_q$. More generally, for a $\left(\lambda_1,\lambda_2, \ldots, \lambda_\ell\right)$-MT code with block lengths $(m_1,m_2,\ldots,m_\ell)$, $N$ is the least common multiple of the integers $t_1 m_1, t_2 m_2, \ldots, t_\ell m_\ell$, where $t_i$ is the multiplicative order of $\lambda_i$. We denote $N$ by $\mathrm{lcm}(t_i m_i)$, and this notation will be used frequently in the remainder of the paper. In the proof of the next theorem, we will use the following identity. Let $\mathbf{G}$ be a GPM for a $\Lambda$-MT code with block lengths $(m_1,m_2,\ldots,m_\ell)$, and let $\mathbf{A}$ be the polynomial matrix satisfying \eqref{Identical_Eq}. Since $x^{m_i}-\lambda_i$ divides $x^N-1$ for each $i=1,2,\ldots,\ell$, we define the diagonal polynomial matrix 
$$\mathrm{diag}\left(\frac{x^N-1}{x^{m_i}-\lambda_i}\right),$$
where each diagonal entry is given by $(x^N-1)/(x^{m_i}-\lambda_i)$. Now, we observe that
$$\mathrm{diag}\left(\frac{x^N-1}{x^{m_i}-\lambda_i}\right) \mathbf{G}^T \mathbf{A}^T=\mathrm{diag}\left(\frac{x^N-1}{x^{m_i}-\lambda_i}\right)\left(\mathbf{A}\mathbf{G}\right)^T=\mathrm{diag}\left(\frac{x^N-1}{x^{m_i}-\lambda_i}\right) \mathrm{diag}\left(x^{m_i}- \lambda_i \right)=(x^N-1)\mathbf{I}_\ell.$$
This equation implies that the terms on the left-hand side can be interchanged, leading to the following identity:
\begin{equation}
\label{Eq-for-proof}
 \mathbf{A}^T \mathrm{diag}\left(\frac{x^N-1}{x^{m_i}-\lambda_i}\right) \mathbf{G}^T=(x^N-1)\mathbf{I}_\ell.
\end{equation}
This identity will be used in the proof of the following theorem.

\begin{theorem}
\label{Th-Intersection}
Let $\mathcal{C}_1$ and $\mathcal{C}_2$ be $\Lambda$-MT codes over $\mathbb{F}_q$ with index $\ell$, block lengths $(m_1,m_2,\ldots,m_\ell)$, and GPMs $\mathbf{G}_1$ and $\mathbf{G}_2$, respectively. Let $\mathbf{A}_1$ and $\mathbf{A}_2$ denote the corresponding polynomial matrices that satisfy the identical equation given by \eqref{Identical_Eq} for $\mathbf{G}_1$ and $\mathbf{G}_2$. Define $\mathcal{Q}$ as the QC code over $\mathbb{F}_q$ with index $\ell$ and co-index $N$, generated by the polynomial matrix
\begin{equation*}\begin{pmatrix}
\mathbf{A}_1^T \mathrm{diag}\left(\frac{x^N-1}{x^{m_i}-\lambda_i}\right) \mathbf{G}_2^T\\
(x^N-1)\mathbf{I}_\ell
\end{pmatrix},
\end{equation*}
where $N=\mathrm{lcm}(t_i m_i)$. Let $\mathbf{Q}$ be a GPM for $\mathcal{Q}$, and let $\mathbf{P}$ be the polynomial matrix satisfying the identical equation \eqref{Identical_Eq_QC} for $\mathbf{Q}$, that is, 
$$\mathbf{P}\mathbf{Q}=(x^N-1)\mathbf{I}_\ell.$$
Then, $\mathbf{P}^T\mathbf{G}_2$ is a GPM for the intersection $\mathcal{C}_1\cap \mathcal{C}_2$.
\end{theorem}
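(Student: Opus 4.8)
The plan is to mirror the proof of Theorem~\ref{Th-Linear}, replacing the parity-check matrix $H_1$ by a polynomial ``syndrome'' operator that detects membership in $\mathcal{C}_1$. Abbreviate $E=\mathrm{diag}\!\left(\frac{x^N-1}{x^{m_i}-\lambda_i}\right)$ and $D_i=\mathrm{diag}(x^{m_i}-\lambda_i)$, so that $ED_i=D_iE=(x^N-1)\mathbf{I}_\ell$, and let $\mathbf{S}=\mathbf{A}_1^{T}E\,\mathbf{G}_2^{T}$ denote the top block generating $\mathcal{Q}$. I would work throughout modulo $x^N-1$, viewing $\mathbf{P},\mathbf{Q},\mathbf{S}$ as operators on the QC ambient $\left(\mathbb{F}_q[x]/\langle x^N-1\rangle\right)^\ell$; the divisibility $x^{m_i}-\lambda_i\mid x^N-1$ makes the MT ambient embed blockwise into this QC ambient via $E$.

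First I would establish the membership criterion that $\mathbf{v}\in\mathcal{C}_1$ if and only if $\mathbf{v}E\mathbf{A}_1\equiv\mathbf{0}\pmod{x^N-1}$. Transposing the identity \eqref{Eq-for-proof} gives $\mathbf{G}_1E\mathbf{A}_1=(x^N-1)\mathbf{I}_\ell$, so every $\mathbf{v}=\mathbf{u}\mathbf{G}_1$ is annihilated; conversely, if $\mathbf{v}E\mathbf{A}_1=(x^N-1)\mathbf{w}$, then right-multiplying by $\mathbf{G}_1$ and using $\mathbf{A}_1\mathbf{G}_1=D_1$ together with $ED_1=(x^N-1)\mathbf{I}_\ell$ yields $(x^N-1)\mathbf{v}=(x^N-1)\mathbf{w}\mathbf{G}_1$, and cancelling $x^N-1$ in the integral domain $\mathbb{F}_q[x]$ gives $\mathbf{v}=\mathbf{w}\mathbf{G}_1\in\mathcal{C}_1$. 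Thus the map $h:\mathbf{v}\mapsto\mathbf{v}E\mathbf{A}_1$ has kernel exactly $\mathcal{C}_1$, its restriction to $\mathcal{C}_2$ has kernel $\mathcal{C}_1\cap\mathcal{C}_2$, and $h(\mathbf{a}\mathbf{G}_2)=\mathbf{a}\mathbf{S}^{T}$ since $\mathbf{S}^{T}=\mathbf{G}_2E\mathbf{A}_1$.

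The easy inclusion then goes as follows. Since the rows of $\mathbf{S}$ lie in $\mathcal{Q}$ I may write $\mathbf{S}\equiv\mathbf{M}\mathbf{Q}$, and because $\mathbf{P}$ and $\mathbf{Q}$ commute with $\mathbf{P}\mathbf{Q}=(x^N-1)\mathbf{I}_\ell$ (equation \eqref{Identical_Eq_QC}), $\mathbf{S}\mathbf{P}\equiv\mathbf{M}\mathbf{Q}\mathbf{P}=\mathbf{M}\mathbf{P}\mathbf{Q}=(x^N-1)\mathbf{M}\equiv\mathbf{0}$; transposing, $\mathbf{P}^{T}\mathbf{S}^{T}\equiv\mathbf{0}$, so $h(\mathbf{P}^{T}\mathbf{G}_2)=\mathbf{0}$ and hence $\langle\mathbf{P}^{T}\mathbf{G}_2\rangle\subseteq\mathcal{C}_1\cap\mathcal{C}_2$. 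For the reverse inclusion I would take $\mathbf{v}\in\mathcal{C}_1\cap\mathcal{C}_2$, write $\mathbf{v}=\mathbf{b}\mathbf{G}_2$, and read $h(\mathbf{v})=\mathbf{b}\mathbf{S}^{T}\equiv\mathbf{0}$ as the statement that $\mathbf{b}$ lies in the left kernel $L=\{\mathbf{b}:\mathbf{b}\mathbf{S}^{T}\equiv\mathbf{0}\}$. If $L=\langle\mathbf{P}^{T}\rangle$, then $\mathbf{b}\equiv\mathbf{c}\mathbf{P}^{T}$ and $\mathbf{v}=\mathbf{c}\mathbf{P}^{T}\mathbf{G}_2$, the discarded $(x^N-1)$-part vanishing in the MT ambient; so the whole theorem reduces to the single identity $L=\langle\mathbf{P}^{T}\rangle$.

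The hard part will be upgrading the containment $\langle\mathbf{P}^{T}\rangle\subseteq L$ (already obtained above) to an equality. By an $\mathbb{F}_q$-rank--nullity count this amounts to showing $\dim L=N\ell-\dim\langle\mathbf{S}^{T}\rangle$ equals $\dim\langle\mathbf{P}^{T}\rangle=N\ell-\dim\langle\mathbf{S}\rangle$, where the latter uses that $\mathbf{P}^{T}$ generates a reversal of $\mathcal{Q}^{\perp}$ by the reasoning of Lemma~\ref{Th-Reciprocal}; that is, it reduces to proving $\dim\langle\mathbf{S}\rangle=\dim\langle\mathbf{S}^{T}\rangle$, namely that $\langle\mathbf{S}\rangle$ and its transpose code share the same $\mathbb{F}_q$-dimension. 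I expect to settle this through the Smith normal form of $\mathbf{S}$ over the principal ideal domain $\mathbb{F}_q[x]$, whose invariant factors are preserved under transposition, or equivalently through the type/size formula \eqref{Eq_Chain} of Remark~\ref{Rem_Chain}, which expresses each dimension via how those invariant factors meet $x^N-1$. A useful consistency check along the way is $\mathbf{A}_2\mathbf{S}^{T}=D_2E\mathbf{A}_1=(x^N-1)\mathbf{A}_1\equiv\mathbf{0}$, confirming that the kernel of $\mathbf{b}\mapsto\mathbf{b}\mathbf{G}_2$ already sits inside $L$, as the passage to $\mathcal{C}_1\cap\mathcal{C}_2$ requires. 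The essential difference from Theorem~\ref{Th-Linear} is precisely that its one clean rank--nullity step over the field $\mathbb{F}_q$ is here replaced by this transpose-invariance of dimension over the non-field ring $\mathbb{F}_q[x]/\langle x^N-1\rangle$.
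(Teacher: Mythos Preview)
Your approach is correct and genuinely different from the paper's. You parallel Theorem~\ref{Th-Linear} by introducing a polynomial syndrome map $h$, proving the easy inclusion via the commuting relation $\mathbf{P}\mathbf{Q}=\mathbf{Q}\mathbf{P}=(x^N-1)\mathbf{I}_\ell$, and then closing the gap by an $\mathbb{F}_q$-dimension count that reduces to $\dim\langle\mathbf{S}\rangle=\dim\langle\mathbf{S}^T\rangle$; this last identity is indeed immediate from the Smith normal form $\mathbf{S}=\mathbf{U}\mathbf{D}\mathbf{V}$ over $\mathbb{F}_q[x]$, since unimodular left- and right-factors preserve the size of the image in $\left(\mathbb{F}_q[x]/\langle x^N-1\rangle\right)^\ell$ and $\mathbf{D}$ is symmetric. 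One small point worth tightening: rather than saying the $(x^N-1)$-part ``vanishes in the MT ambient'', note that $(x^N-1)\mathbf{d}\mathbf{G}_2=\mathbf{d}\,\mathbf{Q}^T\mathbf{P}^T\mathbf{G}_2$ already lies in the row span of $\mathbf{P}^T\mathbf{G}_2$, so the equality holds at the level of $(\mathbb{F}_q[x])^\ell$ and $\mathbf{P}^T\mathbf{G}_2$ is a genuine GPM, not just a generator modulo the ambient.

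The paper instead argues ``from the inside out'': it fixes a GPM $\overline{\mathbf{G}}=\mathbf{R}_1\mathbf{G}_1=\mathbf{R}_2\mathbf{G}_2$ of the intersection with companion $\overline{\mathbf{A}}$, proves via a column-basis construction the key identity $\overline{\mathbf{A}}=\mathbf{A}_1\mathbf{M}_1+\mathbf{A}_2\mathbf{M}_2$, and then shows directly that $\overline{\mathbf{A}}^T E\,\mathbf{G}_2^T$ is a GPM for $\mathcal{Q}$, from which $\mathbf{P}^T\mathbf{G}_2$ drops out by matrix algebra. That route never needs the transpose-dimension identity or Smith normal form, but pays with the somewhat delicate factorization $\mathbf{R}_i=\mathbf{R}\mathbf{X}_i$ and the inversion $\mathbf{Y}\mathbf{R}=\mathbf{I}_\ell$. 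Your route trades that structural lemma for a clean invariance-under-transpose fact, stays closer in spirit to the linear-code proof, and arguably isolates more sharply where the passage from fields to $\mathbb{F}_q[x]/\langle x^N-1\rangle$ costs something.
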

\begin{proof}
Suppose that $\overline{\mathbf{G}}$ is a GPM for $\mathcal{C}_1\cap \mathcal{C}_2$. Then, there exist polynomial matrices $\mathbf{R}_1$ and $\mathbf{R}_2$ such that $\overline{\mathbf{G}}=\mathbf{R}_1\mathbf{G}_1=\mathbf{R}_2\mathbf{G}_2$. In addition, there exists a polynomial matrix $\overline{\mathbf{A}}$ satisfying the identical equation 
$$\overline{\mathbf{A}}\overline{\mathbf{G}}=\mathrm{diag}(x^{m_i}-\lambda_i).$$
Then, for $i=1,2$, we have
$$\left(\overline{\mathbf{A}}\mathbf{R}_i\right)\mathbf{G}_i=\overline{\mathbf{A}}\left(\mathbf{R}_i\mathbf{G}_i\right)=\overline{\mathbf{A}}\overline{\mathbf{G}}=\mathrm{diag}(x^{m_i}-\lambda_i)=\mathbf{A}_i \mathbf{G}_i.$$
This implies that $\mathbf{A}_i=\overline{\mathbf{A}}\mathbf{R}_i$ due to the linear independence of the rows of $\mathbf{G}_i$.

We claim that 
\begin{equation}
\label{Th-Intersection-InProof}
\overline{\mathbf{A}}=\mathbf{A}_1\mathbf{M}_1+\mathbf{A}_2\mathbf{M}_2
\end{equation} 
for some polynomial matrices $\mathbf{M}_1$ and $\mathbf{M}_2$. To show this, we construct an $\ell\times \ell$ polynomial matrix $\mathbf{R}$ whose columns form a basis for the submodule of $\left(\mathbb{F}_q[x]\right)^\ell$ generated by the columns of $\mathbf{R}_1$ and $\mathbf{R}_2$. Consequently, there exist polynomial matrices $\mathbf{M}_1'$ and $\mathbf{M}_2'$ such that $\mathbf{R}=\mathbf{R}_1\mathbf{M}_1'+\mathbf{R}_2\mathbf{M}_2'$. Furthermore, $\mathbf{R}_1$ and $\mathbf{R}_2$ can be expressed as 
$$\mathbf{R}_1=\mathbf{R}\mathbf{X}_1 \quad \text{and}\quad \mathbf{R}_2=\mathbf{R}\mathbf{X}_2,$$ 
for some polynomial matrices $\mathbf{X}_1$ and $\mathbf{X}_2$ with nonzero determinants. Using this result, we get
$$\mathbf{R}\mathbf{X}_1\mathbf{G}_1=\mathbf{R}_1\mathbf{G}_1=\overline{\mathbf{G}}=\mathbf{R}_2\mathbf{G}_2=\mathbf{R}\mathbf{X}_2\mathbf{G}_2.$$
It follows that $\mathbf{R}\left(\mathbf{X}_1\mathbf{G}_1-\mathbf{X}_2\mathbf{G}_2\right)=\mathbf{0}$. Since the columns of $\mathbf{R}$ are linearly independent, we conclude that $\mathbf{X}_1\mathbf{G}_1=\mathbf{X}_2\mathbf{G}_2$. Consequently, the rows of $\mathbf{X}_1\mathbf{G}_1$ correspond to codewords of $\mathcal{C}_1\cap\mathcal{C}_2$. This implies the existence of a polynomial matrix $\mathbf{Y}$ such that $\mathbf{Y}\overline{\mathbf{G}}=\mathbf{X}_1\mathbf{G}_1$, which further yields $\mathbf{Y}\mathbf{R}_1=\mathbf{X}_1$ after replacing $\overline{\mathbf{G}}$ with $\mathbf{R}_1\mathbf{G}_1$. Since we have established that $\mathbf{Y}\mathbf{R}\mathbf{X}_1=\mathbf{Y}\mathbf{R}_1=\mathbf{X}_1$, and $\mathbf{X}_1$ has a nonzero determinant, it follows that $\mathbf{Y}\mathbf{R}=\mathbf{I}_\ell$. Thus, we obtain
$$\overline{\mathbf{A}}= \overline{\mathbf{A}}\mathbf{I}_\ell= \overline{\mathbf{A}}\mathbf{R}\mathbf{Y}= \overline{\mathbf{A}}\left(\mathbf{R}_1\mathbf{M}_1'+\mathbf{R}_2\mathbf{M}_2'\right)\mathbf{Y}= \overline{\mathbf{A}}\mathbf{R}_1\mathbf{M}_1' \mathbf{Y}+\overline{\mathbf{A}}\mathbf{R}_2\mathbf{M}_2' \mathbf{Y}=\mathbf{A}_1\mathbf{M}_1' \mathbf{Y}+\mathbf{A}_2\mathbf{M}_2' \mathbf{Y}.$$
This proves the claim \eqref{Th-Intersection-InProof} with $\mathbf{M}_1=\mathbf{M}_1' \mathbf{Y}$ and $\mathbf{M}_2=\mathbf{M}_2' \mathbf{Y}$.

Our next claim is that $\overline{\mathbf{A}}^T \mathrm{diag}\left(\frac{x^N-1}{x^{m_i}-\lambda_i}\right) \mathbf{G}_2^T$ forms a GPM for the QC code $\mathcal{Q}$. This claim is justified by the following two observations. Using \eqref{Eq-for-proof}, we observe that
\begin{equation*}
\begin{split}
\begin{pmatrix}
\mathbf{M}_1^T  & \mathbf{M}_2^T  
\end{pmatrix}\begin{pmatrix}
\mathbf{A}_1^T \mathrm{diag}\left(\frac{x^N-1}{x^{m_i}-\lambda_i}\right) \mathbf{G}_2^T\\
(x^N-1)\mathbf{I}_\ell
\end{pmatrix}&= \mathbf{M}_1^T \mathbf{A}_1^T \mathrm{diag}\left(\frac{x^N-1}{x^{m_i}-\lambda_i}\right) \mathbf{G}_2^T + \mathbf{M}_2^T (x^N-1)\\& =\mathbf{M}_1^T \mathbf{A}_1^T \mathrm{diag}\left(\frac{x^N-1}{x^{m_i}-\lambda_i}\right) \mathbf{G}_2^T + \mathbf{M}_2^T \mathbf{A}_2^T \mathrm{diag}\left(\frac{x^N-1}{x^{m_i}-\lambda_i}\right) \mathbf{G}_2^T \\& =\left( \mathbf{M}_1^T \mathbf{A}_1^T + \mathbf{M}_2^T \mathbf{A}_2^T \right)\mathrm{diag}\left(\frac{x^N-1}{x^{m_i}-\lambda_i}\right) \mathbf{G}_2^T \\&=\overline{\mathbf{A}}^T \mathrm{diag}\left(\frac{x^N-1}{x^{m_i}-\lambda_i}\right) \mathbf{G}_2^T 
\end{split}
\end{equation*}
and
\begin{equation*}
\begin{split}
\begin{pmatrix}
\mathbf{R}_1^T \\
\mathbf{R}_2^T
\end{pmatrix}\overline{\mathbf{A}}^T \mathrm{diag}\left(\frac{x^N-1}{x^{m_i}-\lambda_i}\right) \mathbf{G}_2^T &= \begin{pmatrix}
\mathbf{R}_1^T \overline{\mathbf{A}}^T \mathrm{diag}\left(\frac{x^N-1}{x^{m_i}-\lambda_i}\right) \mathbf{G}_2^T\\
\mathbf{R}_2^T \overline{\mathbf{A}}^T \mathrm{diag}\left(\frac{x^N-1}{x^{m_i}-\lambda_i}\right) \mathbf{G}_2^T
\end{pmatrix}\\& =
\begin{pmatrix}
\mathbf{A}_1^T \mathrm{diag}\left(\frac{x^N-1}{x^{m_i}-\lambda_i}\right) \mathbf{G}_2^T\\
\mathbf{A}_2^T \mathrm{diag}\left(\frac{x^N-1}{x^{m_i}-\lambda_i}\right) \mathbf{G}_2^T
\end{pmatrix} =
\begin{pmatrix}
\mathbf{A}_1^T \mathrm{diag}\left(\frac{x^N-1}{x^{m_i}-\lambda_i}\right) \mathbf{G}_2^T\\
(x^N-1)\mathbf{I}_\ell
\end{pmatrix}. 
\end{split}
\end{equation*}
Therefore, $\overline{\mathbf{A}}^T \mathrm{diag}\left(\frac{x^N-1}{x^{m_i}-\lambda_i}\right) \mathbf{G}_2^T$ forms a GPM for $\mathcal{Q}$ because it generates the polynomial matrix that generates $\mathcal{Q}$ and vice versa. Since $\mathbf{Q}$ is also a GPM for $\mathcal{Q}$, there exists an invertible matrix $\mathbf{U}$ such that $\mathbf{Q}=\mathbf{U}\overline{\mathbf{A}}^T \mathrm{diag}\left(\frac{x^N-1}{x^{m_i}-\lambda_i}\right) \mathbf{G}_2^T$.

Now, since
$$\mathbf{G}_2\mathrm{diag}\left(\frac{x^N-1}{x^{m_i}-\lambda_i}\right)\overline{\mathbf{A}}\mathbf{U}^T\mathbf{P}^T = \mathbf{Q}^T\mathbf{P}^T =(x^N-1)\mathbf{I}_\ell,$$
it follows that
$$\overline{\mathbf{A}}\mathbf{U}^T\mathbf{P}^T \mathbf{G}_2\mathrm{diag}\left(\frac{x^N-1}{x^{m_i}-\lambda_i}\right) = (x^N-1)\mathbf{I}_\ell = \mathrm{diag}\left( x^{m_i}-\lambda_i \right) \mathrm{diag}\left(\frac{x^N-1}{x^{m_i}-\lambda_i}\right) = \overline{\mathbf{A}}\overline{\mathbf{G}} \mathrm{diag}\left(\frac{x^N-1}{x^{m_i}-\lambda_i}\right).$$
Since the columns of $\overline{\mathbf{A}}$ are linearly independent, we get $\overline{\mathbf{G}}=\mathbf{U}^T \mathbf{P}^T \mathbf{G}_2$. Moreover, since $\mathbf{U}$ is invertible and $\overline{\mathbf{G}}$ is a GPM for $\mathcal{C}_1\cap \mathcal{C}_2$, we conclude that $\mathbf{P}^T \mathbf{G}_2$ is another GPM for $\mathcal{C}_1\cap \mathcal{C}_2$.
\end{proof}

\begin{example}
\label{Code1-Code2-R3}
The codes $\mathcal{C}_1$ and $\mathcal{C}_2$ presented in Example~\ref{Code1-Code2} were considered as linear codes and Theorem~\ref{Th-Linear} was employed to determine their intersection. In Example~\ref{Code1-Code2-R1}, we showed that $\mathcal{C}_1$ and $\mathcal{C}_2$ are $(1,\omega)$-MT codes with block lengths $(6,2)$. Hence, Theorem~\ref{Th-Intersection} enables computing the intersection $\mathcal{C}_1 \cap \mathcal{C}_2$ within the structure of $(1,\omega)$-MT codes, and further provides a GPM for this intersection. In the notation of Theorem~\ref{Th-Intersection}, we observe that $t_1 = 1$ and $t_2 = 3$, which implies that $N = \mathrm{lcm}(6,6) = 6$. Define $\mathcal{Q}$ as the QC code over $\mathbb{F}_4$ of index $2$ and co-index $6$, generated by the polynomial matrix 
\begin{equation*}
\begin{pmatrix}
\mathbf{A}_1^T \mathrm{diag}\left(\frac{x^6-1}{x^{m_i}-\lambda_i}\right) \mathbf{G}_2^T\\
(x^6-1)\mathbf{I}_2
\end{pmatrix}=
\begin{pmatrix}
\omega + \omega^2 x + x^2 + \omega x^6 + \omega^2 x^7 +x^8 & 0\\
 1 + \omega^2 x + \omega x^2 + \omega^2 x^3 + x^4 + \omega x^5  + \omega x^6 +\omega x^7 & \omega^2+ x + \omega^2 x^6 +x^7 \\
x^6-1 & 0\\
0 & x^6-1
\end{pmatrix}.
\end{equation*}
By reducing to Hermite normal form, we obtain the reduced GPM for $\mathcal{Q}$, as well as the corresponding polynomial matrix that satisfies its identical equation. These are, respectively, given by
\begin{equation*}
\mathbf{Q}=
\begin{pmatrix}
 \omega + \omega^2 x + x^2 + \omega x^3 + \omega^2 x^4 +x^5 &0\\
0&x^6 + 1
\end{pmatrix}\quad \text{and}\quad
\mathbf{P}=
\begin{pmatrix}
\omega^2+x &0 \\
0& 1
\end{pmatrix}.
\end{equation*}

Applying Theorem~\ref{Th-Intersection}, the following GPM is obtained for $\mathcal{C}_1 \cap \mathcal{C}_2$:
\begin{equation*}
\mathbf{P}^T\mathbf{G}_2=
\begin{pmatrix}
\omega+ x + \omega x^3 +x^4 & x+\omega x^2 \\
0 & \omega+x^2
\end{pmatrix}.
\end{equation*}
Hence, the reduced GPM for $\mathcal{C}_1 \cap \mathcal{C}_2$ is 
\begin{equation*}
\begin{pmatrix}
\omega + x + \omega x^3 +x^4 &  \omega^2 + x\\
0    &   \omega+x^2
\end{pmatrix}.
\end{equation*}
It can be verified that this polynomial matrix is the reduced GPM for the generator matrix obtained for $\mathcal{C}_1 \cap \mathcal{C}_2$ in Example~\ref{Code1-Code2} via Theorem~\ref{Th-Linear}.
\hfill $\diamond$
\end{example}

Theorem~\ref{Th-Intersection} for MT codes is the analogue of Theorem~\ref{Th-Linear} which deals with linear codes. Therefore, it is natural to present the analogue of Corollary~\ref{Th-LinearGaloisIntersection} in the context of MT codes. This leads to the following result, which is proven as a consequence of Theorem~\ref{Th-Intersection}. Specifically, we construct a GPM for the intersection of the $\kappa$-Galois dual of a $\Lambda$-MT code $\mathcal{C}_1$ with another $\Delta$-MT code $\mathcal{C}_2$. According to Remark~\ref{Imp-Remark}, we assume that $\mathcal{C}_1^{\perp_\kappa}$ and $\mathcal{C}_2$ have the same shift constants. Equivalently, since $\mathcal{C}_1^{\perp_\kappa}$ is a $\sigma^{e-\kappa}(\Lambda^{-1})$-MT code, we impose the condition $\sigma^{e-\kappa}(\Lambda^{-1}) = \Delta$. This is to say, if $\Lambda=\left(\lambda_1,\lambda_2,\ldots,\lambda_\ell\right)$ and $\Delta=\left(\delta_1,\delta_2,\ldots,\delta_\ell\right)$, we require that $\sigma^{e-\kappa}(\lambda_i^{-1}) = \delta_i$ for all $1 \le i \le \ell$.

\begin{corollary}
\label{Th-Galois-Intersection}
Let $\mathcal{C}_1$ and $\mathcal{C}_2$ be a $\Lambda$-MT code and a $\Delta$-MT code over $\mathbb{F}_{p^e}$, respectively, each with index $\ell$, block lengths $(m_1,m_2,\ldots,m_\ell)$, and GPMs $\mathbf{G}_1$ and $\mathbf{G}_2$. Let $\mathbf{A}_1$ and $\mathbf{A}_2$ denote the corresponding polynomial matrices satisfying the identical equation \eqref{Identical_Eq} for $\mathbf{G}_1$ and $\mathbf{G}_2$, respectively. Suppose that $\sigma^{e-\kappa}\left(\Lambda^{-1}\right)=\Delta$ for some $0\le \kappa <e$. Define $\mathcal{Q}$ as the QC code over $\mathbb{F}_{p^e}$ with index $\ell$ and co-index $N$, generated by the polynomial matrix
\begin{equation*}
\begin{pmatrix}
\sigma^{e-\kappa}\left(\mathbf{G}_1\left(\frac{1}{x}\right) \mathrm{diag}\left(x^{m_i}\right) \right)\mathrm{diag}\left(\frac{x^N-1}{x^{m_i}-\delta_i}\right) \mathbf{G}_2^T\\
(x^N-1)\mathbf{I}_\ell
\end{pmatrix},
\end{equation*}
where $N=\mathrm{lcm}(t_i m_i)$. Let $\mathbf{Q}$ be a GPM for $\mathcal{Q}$, and let $\mathbf{P}$ be the corresponding polynomial matrix satisfying the identical equation \eqref{Identical_Eq_QC} for $\mathbf{Q}$. Then, the intersection $\mathcal{C}^{\perp_\kappa}_1\cap \mathcal{C}_2$ is a $\Delta$-MT code with GPM $\mathbf{P}^T\mathbf{G}_2$.
\end{corollary}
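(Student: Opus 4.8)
The plan is to deduce this corollary directly from Theorem~\ref{Th-Intersection} applied to the pair $\mathcal{C}_1^{\perp_\kappa}$ and $\mathcal{C}_2$. Both are $\Delta$-MT codes: $\mathcal{C}_2$ by hypothesis, and $\mathcal{C}_1^{\perp_\kappa}$ because $\sigma^{e-\kappa}(\Lambda^{-1})=\Delta$. Theorem~\ref{Th-Intersection} only needs, for the first code, a GPM together with the matrix satisfying its identical equation. From the discussion preceding Example~\ref{Code1-Code2-R2}, $\sigma^{e-\kappa}(\mathbf{H}_1)$ is a GPM for $\mathcal{C}_1^{\perp_\kappa}$ and, applying $\sigma^{e-\kappa}$ to \eqref{Identical_Eq_dual} to obtain \eqref{Identical_Eq_dual_Galois}, the matrix $\sigma^{e-\kappa}(\mathbf{B}_1)$ satisfies $\sigma^{e-\kappa}(\mathbf{B}_1)\,\sigma^{e-\kappa}(\mathbf{H}_1)=\mathrm{diag}(x^{m_i}-\delta_i)$. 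Thus the role of ``$\mathbf{A}_1$'' in Theorem~\ref{Th-Intersection} is played by $\sigma^{e-\kappa}(\mathbf{B}_1)$.

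Substituting into Theorem~\ref{Th-Intersection} (with every $\lambda_i$ replaced by $\delta_i$), the auxiliary QC code is generated by
$$\begin{pmatrix} \sigma^{e-\kappa}(\mathbf{B}_1)^T\,\mathrm{diag}\!\left(\tfrac{x^N-1}{x^{m_i}-\delta_i}\right)\mathbf{G}_2^T \\ (x^N-1)\mathbf{I}_\ell \end{pmatrix},$$
and the theorem already yields that $\mathbf{P}^T\mathbf{G}_2$ is a GPM for $\mathcal{C}_1^{\perp_\kappa}\cap\mathcal{C}_2$, which is $\Delta$-MT since both codes are invariant under $\mathcal{T}_\Delta$ (the case $\Lambda=\Delta$ of Theorem~\ref{Th-CondOfInter}). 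It therefore remains only to verify that this QC code equals the code $\mathcal{Q}$ of the statement, in which $\sigma^{e-\kappa}(\mathbf{B}_1)^T$ is replaced by $\sigma^{e-\kappa}\!\big(\mathbf{G}_1(\tfrac1x)\,\mathrm{diag}(x^{m_i})\big)$; this replacement is precisely what makes the final formula depend only on the given $\mathbf{G}_1$ rather than on a GPM of the dual.

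The key point is that the two matrices generate the same module modulo the diagonal relations. By \eqref{Inproof_Th-Reciprocal}, the rows of $\mathbf{G}_1(\tfrac1x)\,\mathrm{diag}(x^{m_i})$ together with $\mathrm{diag}(x^{m_i}-\lambda_i^{-1})$ generate $\mathcal{L}(\mathcal{C}_1)$, while Lemma~\ref{Th-Reciprocal} shows $\mathbf{B}_1^T$ is itself a GPM for $\mathcal{L}(\mathcal{C}_1)$. Applying $\sigma^{e-\kappa}$ and using $\sigma^{e-\kappa}(\lambda_i^{-1})=\delta_i$, I can write
$$\sigma^{e-\kappa}\!\big(\mathbf{G}_1(\tfrac1x)\,\mathrm{diag}(x^{m_i})\big)=\mathbf{U}\,\sigma^{e-\kappa}(\mathbf{B}_1)^T+\mathbf{V}\,\mathrm{diag}(x^{m_i}-\delta_i)$$
for suitable polynomial matrices $\mathbf{U},\mathbf{V}$, with a symmetric relation in the reverse direction. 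Right-multiplying by $\mathrm{diag}(\tfrac{x^N-1}{x^{m_i}-\delta_i})\mathbf{G}_2^T$ and invoking the commutation identity $\mathrm{diag}(x^{m_i}-\delta_i)\,\mathrm{diag}(\tfrac{x^N-1}{x^{m_i}-\delta_i})=(x^N-1)\mathbf{I}_\ell$, the contribution of the $\mathbf{V}$ term becomes $(x^N-1)\mathbf{V}\mathbf{G}_2^T$, which lies in the row space of $(x^N-1)\mathbf{I}_\ell$. Hence the two generating matrices define the same QC code $\mathcal{Q}$, and since Theorem~\ref{Th-Intersection} already absorbs the unimodular ambiguity in the choice of $\mathbf{Q}$ (and hence of $\mathbf{P}$), the formula $\mathbf{P}^T\mathbf{G}_2$ is unaffected by the substitution.

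I expect the main obstacle to be exactly this last absorption step: one must argue cleanly that passing from $\sigma^{e-\kappa}(\mathbf{B}_1)^T$ to $\sigma^{e-\kappa}\!\big(\mathbf{G}_1(\tfrac1x)\,\mathrm{diag}(x^{m_i})\big)$ alters the generating set only by rows already contained in $(x^N-1)\mathbf{I}_\ell$, so that $\mathcal{Q}$ is genuinely unchanged rather than merely one-sidedly contained. The diagonal commutation identity is the mechanism that converts the $\Delta$-MT diagonal relations $x^{m_i}-\delta_i$ into the QC modulus $x^N-1$, and establishing it precisely—together with the two-sided module equivalence coming from Lemma~\ref{Th-Reciprocal} and \eqref{Inproof_Th-Reciprocal}—is the crux of the argument.
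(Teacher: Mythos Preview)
Your proposal is correct and follows essentially the same approach as the paper: apply Theorem~\ref{Th-Intersection} to the pair $(\mathcal{C}_1^{\perp_\kappa},\mathcal{C}_2)$ with $\sigma^{e-\kappa}(\mathbf{B}_1)$ playing the role of $\mathbf{A}_1$, then invoke Lemma~\ref{Th-Reciprocal} and \eqref{Inproof_Th-Reciprocal} to replace $\sigma^{e-\kappa}(\mathbf{B}_1^T)$ by $\sigma^{e-\kappa}\!\big(\mathbf{G}_1(\tfrac1x)\,\mathrm{diag}(x^{m_i})\big)$ together with $\mathrm{diag}(x^{m_i}-\delta_i)$, the latter being absorbed by $(x^N-1)\mathbf{I}_\ell$ via the diagonal identity. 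Your write-up is in fact more careful than the paper's in spelling out the two-sided module equivalence and the absorption step, which the paper handles by simply ``replacing'' one generating matrix with the other.
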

\begin{proof}
From~\eqref{Identical_Eq_dual_Galois}, it follows that $\mathcal{C}_1^{\perp_\kappa}$ is a $\sigma^{e-\kappa}\left(\Lambda^{-1}\right)$-MT with GPM $\sigma^{e-\kappa}\left(\mathbf{H}_1\right)$; that is, it is a $\Delta$-MT code. Consequently, by Theorem~\ref{Th-CondOfInter}, the intersection $\mathcal{C}^{\perp_\kappa}_1\cap \mathcal{C}_2$ is also a $\Delta$-MT code. To determine a GPM for this intersection using Theorem~\ref{Th-Intersection}, we construct the associated QC code $\mathcal{Q}$, generated by
\begin{equation}
\label{Th-Galois-Intersection-inproof1}
\begin{pmatrix}
\sigma^{e-\kappa}\left(\mathbf{B}_1^T\right) \mathrm{diag}\left(\frac{x^N-1}{x^{m_i}-\delta_i}\right) \mathbf{G}_2^T\\
(x^N-1)\mathbf{I}_\ell
\end{pmatrix}.
\end{equation}
According to Lemma~\ref{Th-Reciprocal}, $\mathbf{B}_1^T$ is a GPM for $\mathcal{L}\left(\mathcal{C}_1\right)$. Furthermore, \eqref{Inproof_Th-Reciprocal} provides an alternative polynomial matrix generating $\mathcal{L}\left(\mathcal{C}_1\right)$ given by 
\begin{equation}
\label{Th-Galois-Intersection-inproof2}
\begin{pmatrix}
\mathbf{G}_1\left(\frac{1}{x}\right) \mathrm{diag}\left(x^{m_i}\right)\\
\mathrm{diag}\left(x^{m_i}- \lambda_i^{-1}\right)
\end{pmatrix}.
\end{equation}
Replacing $\mathbf{B}_1^T$ in ~\eqref{Th-Galois-Intersection-inproof1} with the polynomial matrix in~\eqref{Th-Galois-Intersection-inproof2}, we find that $\mathcal{Q}$ is generated by 
\begin{equation*}
\begin{pmatrix}
\begin{pmatrix}
\sigma^{e-\kappa}\left(\mathbf{G}_1\left(\frac{1}{x}\right) \mathrm{diag}\left(x^{m_i}\right) \right)\\
\mathrm{diag}\left(x^{m_i}-\delta_i\right)
\end{pmatrix}
\mathrm{diag}\left(\frac{x^N-1}{x^{m_i}-\delta_i}\right) \mathbf{G}_2^T\\
(x^N-1)\mathbf{I}_\ell
\end{pmatrix},
\end{equation*}
which shows that $\mathcal{Q}$ is equivalently generated by 
\begin{equation*}\begin{pmatrix}
\sigma^{e-\kappa}\left(\mathbf{G}_1\left(\frac{1}{x}\right) \mathrm{diag}\left(x^{m_i}\right) \right) \mathrm{diag}\left(\frac{x^N-1}{x^{m_i}-\delta_i}\right) \mathbf{G}_2^T\\
(x^N-1)\mathbf{I}_\ell
\end{pmatrix}.\end{equation*}
The result then follows directly from Theorem~\ref{Th-Intersection}.
\end{proof}

\begin{example}
\label{Code1-Code2-R5}
Consider the codes $\mathcal{C}_1$ and $\mathcal{C}_2$ presented in Example~\ref{Code1-Code2}. In Example~\ref{Code1-Code2-R4}, Corollary~\ref{Th-LinearGaloisIntersection} was applied to show that $\mathcal{C}_1^{\perp_1} \cap \mathcal{C}_2=\{\mathbf{0}\}$. In Example~\ref{Code1-Code2-R1}, $\mathcal{C}_1$ and $\mathcal{C}_2$ were shown to be $(1,\omega)$-MT codes with block lengths $(6,2)$ and GPMs 
\begin{equation*}
\mathbf{G}_1=\begin{pmatrix}
 \omega +x &  \omega \\
 0 & \omega^2+x
\end{pmatrix}\quad \text{and} \quad \mathbf{G}_2=\begin{pmatrix}
 \omega^2+ \omega^2 x + x^2+x^3  & \omega x \\
 0 & \omega+x^2
\end{pmatrix},
\end{equation*}
respectively. Furthermore, Example~\ref{Code1-Code2-R2} showed that the $1$-Galois dual $\mathcal{C}_1^{\perp_1}$ is also a $\left(1,\omega\right)$-MT code with block lengths $(6,2)$. Consequently, Corollary~\ref{Th-Galois-Intersection} implies that the intersection $\mathcal{C}_1^{\perp_1} \cap \mathcal{C}_2$ is a $\left(1,\omega\right)$-MT code with the same block lengths. In addition, this corollary provides a means to determine a GPM for this intersection. To this end, we consider the QC code $\mathcal{Q}$ over $\mathbb{F}_4$ of index $2$ and co-index $6$, generated by 
\begin{equation*}\begin{pmatrix}
\sigma\left(\mathbf{G}_1\left(\frac{1}{x}\right) \mathrm{diag}\left(x^{m_i}\right) \right)\mathrm{diag}\left(\frac{x^6-1}{x^{m_i}-\delta_i}\right) \mathbf{G}_2^T\\
(x^6-1)\mathbf{I}_2
\end{pmatrix}
=\begin{pmatrix}
\omega^2 x^3 + x^5 + x^6 + \omega x^7 + \omega x^8 +\omega^2 x^9 & \omega^2 x^2+\omega^2 x^8 \\
 x^2 + \omega x^3 + \omega^2 x^4 + x^5 + \omega x^6 +\omega^2 x^7 &  x+ \omega x^2 + x^7 +\omega x^8 \\
x^6-1 &0\\
0 & x^6-1
\end{pmatrix}.
\end{equation*}
By reducing to Hermite normal form, we obtain the reduced GPM for $\mathcal{Q}$, as well as the corresponding polynomial matrix that satisfies its identical equation. These are, respectively, given by
\begin{equation*}\mathbf{Q}=\begin{pmatrix}
x^3 + x^2 + \omega^2 x + \omega^2 & 0\\
0 & x^6 + 1
\end{pmatrix} \quad \text{and}\quad
\mathbf{P}=\begin{pmatrix}
 \omega +  \omega x + x^2 +x^3 & 0\\
 0 & 1
\end{pmatrix}.
\end{equation*}
In particular, it is evident that $\mathbf{P}\mathbf{U}=\mathbf{A}_2^T$, where 
\begin{equation*}
\mathbf{U}=\begin{pmatrix}
1 &  0  \\
\omega x +\omega x^2  & 1
\end{pmatrix}
\end{equation*}
is an invertible matrix, and $\mathbf{A}_2$ is defined in Example~\ref{Code1-Code2-R1}. As a result, $\mathbf{P}^T\mathbf{G}_2$ corresponds to the zero code. Alternatively, we observe from \eqref{Dim_Eq_fromA} and \eqref{Identical_Eq_QC} that $\dim{\mathcal{Q}}=\deg\left(\mathrm{det}\left( \mathbf{P}\right)\right)=\deg\left(\mathrm{det}\left( \mathbf{A}_2\right)\right)=\dim{\mathcal{C}_2}$. These two findings align with the results that will be presented in Lemma~\ref{Th-Containment}. Consequently, we have $\mathcal{C}^{\perp_1}_1\cap \mathcal{C}_2=\{\mathbf{0}\}$, consistent with Example~\ref{Code1-Code2-R4}.
\hfill $\diamond$
\end{example}

\section{Applications to MT codes intersection}
\label{Sec-applications}
This section is devoted for several applications of the theoretical results established in Sections \ref{Sec-MT-Reversed-Code} and \ref{Sec-Intersection-MT-Codes}. Specifically, we aim to establish necessary and sufficient conditions under which an MT code is Galois self-orthogonal, Galois dual-containing, Galois LCD, or reversible. These properties have been widely studied in the literature for certain subclasses of MT codes, and occasionally for MT codes in general. Nevertheless, the results presented in this section are significant for several reasons:
\begin{enumerate}
\item In literature, some prior conditions for self-orthogonal, dual-containing, or LCD MT codes have been shown to be incorrect in \cite{Takieldin2025}. Even if these conditions are corrected, they are sufficient but not necessary and apply only to special subclasses of MT codes. In contrast, the conditions we propose are necessary and sufficient and apply to general MT codes without restrictive assumptions, other than requiring the intersecting MT codes to have identical shift constants. This requirement is justified by Remark \ref{Imp-Remark}, which asserts the existence of unnecessary trivial blocks when this requirement is not met. 
\item Our proposed conditions rely entirely on the GPM defining the MT code, unlike some prior conditions in literature that necessitate decomposing the MT code into a direct sum of constituents of linear codes over various extension fields of $\mathbb{F}_q$.
\item To the best of our knowledge, this study is the first to systematically examine the reversibility in the class of MT codes.
\end{enumerate}

The main result of this section is presented in Theorem \ref{Th-SO-DC-R}, which establishes necessary and sufficient conditions for an MT code to be Galois self-orthogonal, Galois dual-containing, Galois LCD, or reversible. The condition of being Galois LCD, in particular, will be deduced as a consequence of the following result, where we examine a more general context. Namely, we provide a necessary and sufficient condition for the trivial intersection of two MT codes. This directly yields the condition for a Galois LCD MT code, since a Galois LCD code is equivalent to a trivial intersection between the code and its Galois dual. Although the following result requires the block lengths being coprime to $q$, Remark \ref{remark1} illustrates that this restriction may be eliminated with a minor adjustment. 

\begin{theorem}
\label{MT-LCP}
Let $\mathcal{C}_1$ and $\mathcal{C}_2$ be $\Lambda$-MT codes over $\mathbb{F}_q$ with index $\ell$, block lengths $(m_1, m_2, \ldots, m_\ell)$, and GPMs $\mathbf{G}_1$ and $\mathbf{G}_2$, respectively. Let $\mathbf{A}_1$ and $\mathbf{A}_2$ denote the corresponding polynomial matrices that satisfy the identical equation given by \eqref{Identical_Eq} for $\mathbf{G}_1$ and $\mathbf{G}_2$. Assume that each $m_i$ is coprime to $q$, and consider the factorization of the polynomial $x^N - 1$ over $\mathbb{F}_q$ as
$$x^N-1=\prod_{j=1}^{s}p_j(x),$$
where $N=\mathrm{lcm}(t_i m_i)$ and each $p_j(x)$ is an irreducible polynomial in $\mathbb{F}_q[x]$ for $1 \le j \le s$. Then, $\mathcal{C}_1 \cap \mathcal{C}_2 = \{\mathbf{0}\}$ if and only if
$$\sum_{j=1}^{s} r_j \deg\left(p_j(x)\right)=\dim\left(\mathcal{C}_2\right),$$
where $r_j$ is the rank of the matrix
$$\mathbf{G}_2 \ \mathrm{diag}\left(\frac{x^N-1}{x^{m_i}-\lambda_i}\right) \mathbf{A}_1 \pmod{p_j(x)}.$$
\end{theorem}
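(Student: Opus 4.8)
The plan is to reduce the statement to a dimension count for the auxiliary QC code $\mathcal{Q}$ supplied by Theorem~\ref{Th-Intersection}, and then to evaluate $\dim(\mathcal{Q})$ by splitting $\mathcal{Q}$ through the Chinese Remainder Theorem. Since $\Lambda=\Delta$ here, Theorem~\ref{Th-CondOfInter} guarantees that $\mathcal{C}_1\cap\mathcal{C}_2$ is itself $\Lambda$-MT, so $\mathcal{C}_1\cap\mathcal{C}_2=\{\mathbf{0}\}$ is equivalent to $\dim(\mathcal{C}_1\cap\mathcal{C}_2)=0$. By Theorem~\ref{Th-Intersection}, $\mathbf{P}^T\mathbf{G}_2$ is a GPM for $\mathcal{C}_1\cap\mathcal{C}_2$, where $\mathbf{P}\mathbf{Q}=(x^N-1)\mathbf{I}_\ell$; the first step aims to convert this into the relation $\dim(\mathcal{C}_1\cap\mathcal{C}_2)=\dim(\mathcal{C}_2)-\dim(\mathcal{Q})$, and the second step to show $\dim(\mathcal{Q})=\sum_{j}r_j\deg(p_j)$.

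First I would record the dimension relation. Let $\overline{\mathbf{A}}$ be the polynomial matrix satisfying the identical equation \eqref{Identical_Eq} for the GPM $\mathbf{P}^T\mathbf{G}_2$, so that $\overline{\mathbf{A}}\,\mathbf{P}^T\mathbf{G}_2=\mathrm{diag}(x^{m_i}-\lambda_i)=\mathbf{A}_2\mathbf{G}_2$. Because the rows of $\mathbf{G}_2$ are $\mathbb{F}_q[x]$-linearly independent, $\mathbf{G}_2$ is a square matrix with nonzero determinant, hence invertible over the field of fractions $\mathbb{F}_q(x)$, so it may be cancelled on the right to give $\overline{\mathbf{A}}\,\mathbf{P}^T=\mathbf{A}_2$. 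Comparing degrees of determinants and invoking \eqref{Dim_Eq_fromA}, together with $\dim(\mathcal{Q})=\deg\det(\mathbf{P})=\deg\det(\mathbf{P}^T)$ coming from the QC identical equation \eqref{Identical_Eq_QC}, yields $\dim(\mathcal{C}_1\cap\mathcal{C}_2)+\dim(\mathcal{Q})=\dim(\mathcal{C}_2)$. Consequently $\mathcal{C}_1\cap\mathcal{C}_2=\{\mathbf{0}\}$ if and only if $\dim(\mathcal{Q})=\dim(\mathcal{C}_2)$, and it remains only to compute $\dim(\mathcal{Q})$.

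The second step evaluates $\dim(\mathcal{Q})$. Since each $m_i$ is coprime to $q$ and each $t_i$ (the multiplicative order of $\lambda_i$) divides $q-1$, the integer $N=\mathrm{lcm}(t_i m_i)$ is coprime to $q$, so $x^N-1$ is separable and its factors $p_1(x),\dots,p_s(x)$ are distinct irreducibles. By the Chinese Remainder Theorem, $\mathbb{F}_q[x]/\langle x^N-1\rangle\cong\prod_{j=1}^{s}\mathbb{F}_q[x]/\langle p_j(x)\rangle$, a product of the fields $\mathbb{F}_{q^{\deg p_j}}$. Reducing the defining generator matrix of $\mathcal{Q}$ modulo each $p_j(x)$ splits $\mathcal{Q}$ into constituents $\mathcal{Q}_j$, where $\mathcal{Q}_j$ is the $\mathbb{F}_{q^{\deg p_j}}$-row space of $\mathbf{A}_1^T\,\mathrm{diag}\!\left(\tfrac{x^N-1}{x^{m_i}-\lambda_i}\right)\mathbf{G}_2^T \pmod{p_j(x)}$. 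Its dimension over $\mathbb{F}_{q^{\deg p_j}}$ is the rank of this reduced matrix, which equals $r_j$ because rank is invariant under transposition and $\mathbf{A}_1^T\,\mathrm{diag}(\cdot)\,\mathbf{G}_2^T=\bigl(\mathbf{G}_2\,\mathrm{diag}(\cdot)\,\mathbf{A}_1\bigr)^T$. Summing $\mathbb{F}_q$-dimensions across the constituents (equivalently, applying \eqref{Eq_Chain} of Remark~\ref{Rem_Chain} with $f=1$) gives $\dim(\mathcal{Q})=\sum_{j=1}^{s}r_j\deg(p_j)$, and combining with the previous step proves the equivalence.

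The main obstacle is the bookkeeping in the second step: one must verify that $x^N-1$ is genuinely squarefree, so that the CRT factors are fields and not higher-length chain rings, that the CRT decomposition of the QC module $\mathcal{Q}$ is compatible with reduction modulo the $p_j(x)$, and that the rank in the statement—computed from $\mathbf{G}_2\,\mathrm{diag}(\cdot)\,\mathbf{A}_1$—coincides with the rank of the transposed generator appearing inside $\mathcal{Q}$. The cancellation of $\mathbf{G}_2$ over $\mathbb{F}_q(x)$ in the first step is routine but worth stating explicitly, as it relies on $\mathbf{G}_2$ being square with nonzero determinant.
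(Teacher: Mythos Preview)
Your proof is correct and follows essentially the same approach as the paper: first establish the relation $\dim(\mathcal{Q})=\dim(\mathcal{C}_2)-\dim(\mathcal{C}_1\cap\mathcal{C}_2)$ via determinants and the identical equations, then compute $\dim(\mathcal{Q})$ by the CRT decomposition of $\mathbb{F}_q[x]/\langle x^N-1\rangle$ into a product of fields. The only cosmetic difference is that you obtain the dimension relation by cancelling $\mathbf{G}_2$ over $\mathbb{F}_q(x)$ in $\overline{\mathbf{A}}\mathbf{P}^T\mathbf{G}_2=\mathbf{A}_2\mathbf{G}_2$, whereas the paper reaches the same identity directly from $\deg\det(\mathbf{P}^T\mathbf{G}_2)=\deg\det(\mathbf{P})+\deg\det(\mathbf{G}_2)$; your explicit mention of the transposition step linking $\mathbf{G}_2\,\mathrm{diag}(\cdot)\,\mathbf{A}_1$ to the generator of $\mathcal{Q}_j$ is a welcome clarification that the paper leaves implicit.
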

\begin{proof} 
The assumption that each $m_i$ is coprime to $q$ implies that $N$ is also coprime to $q$. This ensures that $x^N - 1$ factors into distinct irreducible polynomials $p_j(x)$. From Equations \eqref{Dim_Eq_fromA} and \eqref{Identical_Eq_QC}, along with the same notation used in Theorem \ref{Th-Intersection}, we have 
\begin{equation}
\label{MT-LCP-Inproof}
\begin{split}
\dim\left(\mathcal{Q}\right)=\deg{\mathrm{det}}\left(\mathbf{P}\right)&=\deg{\mathrm{det}}\left(\mathbf{P}^T\mathbf{G}_2\right)-\deg{\mathrm{det}}\left(\mathbf{G}_2\right)\\
&=n-\dim\left(\mathcal{C}_1\cap \mathcal{C}_2\right)-\left(n-\dim\left(\mathcal{C}_2\right)\right)=\dim\left(\mathcal{C}_2\right)-\dim\left(\mathcal{C}_1\cap \mathcal{C}_2\right).
\end{split}
\end{equation}
The discussion at the beginning of Section \ref{Sec-MT-Reversed-Code} shows that the QC code $\mathcal{Q}$ can be regarded as a linear code of length $\ell$ over the quotient ring $R = \mathbb{F}_q[x]/\langle x^N - 1 \rangle$. Meanwhile, a generator matrix for $\mathcal{Q}$ as a code over $R$ is provided by Theorem \ref{Th-Intersection} and is given by
$$\mathbf{A}_1^T \mathrm{diag}\left(\frac{x^N-1}{x^{m_i}-\lambda_i}\right) \mathbf{G}_2^T \pmod{x^N-1}.$$
According to \cite{Ling2001}, $\mathcal{Q}$ decomposes into a direct sum of linear codes $\mathcal{Q}_j$, each of length $\ell$, over the finite field $R_j=\mathbb{F}_q[x]/\langle p_j\left(x\right)\rangle$. For each $1\le j\le s$, the code $\mathcal{Q}_j$ has a generator matrix over $R_j$ given by
$$\mathbf{A}_1^T \mathrm{diag}\left(\frac{x^N-1}{x^{m_i}-\lambda_i}\right) \mathbf{G}_2^T \pmod{p_j(x)}.$$
Since the rank of this generator matrix is $r_j$, and the order of $R_j$ is $q^{\deg\left(p_j(x)\right)}$, it follows that $|\mathcal{Q}_j|=q^{r_j \deg\left(p_j(x)\right)}$. Therefore, using \eqref{MT-LCP-Inproof}, we have
\begin{equation}
\label{MT-LCP-Inproof2}
\begin{split}
\dim\left(\mathcal{C}_2\right)-\dim\left(\mathcal{C}_1\cap \mathcal{C}_2\right)&=\dim\left(\mathcal{Q}\right)=\log_q |\mathcal{Q}| =\log_q \prod_{j=1}^{s} |\mathcal{Q}_j| \\&=\sum_{j=1}^{s} \log_q |\mathcal{Q}_j| =\sum_{j=1}^{s} \log_q\left( q^{r_j \deg\left(p_j(x)\right)} \right)=\sum_{j=1}^{s} r_j \deg\left(p_j(x)\right).
\end{split}
\end{equation}
It follows that $\mathcal{C}_1 \cap \mathcal{C}_2 =\{\mathbf{0}\}$ if and only if $\sum_{j=1}^{s} r_j \deg\left(p_j(x)\right)=\dim\left(\mathcal{C}_2\right)$. 
\end{proof}

\begin{remark}
\label{remark1}
The case in which at least one $m_i$ is not coprime to $q$ is indeed possible, as shown in the next example. In fact, removing the assumption that each $m_i$ is coprime to $q$ from Theorem \ref{MT-LCP} necessitates a slight adjustment in its result. In this more general setting, the polynomial $x^N - 1$ factors into irreducible polynomials that are not necessarily distinct. That is,
$$x^N-1=\prod_{j=1}^{s} p_j^{f_j}(x),$$
where $p_j(x)$ are distinct irreducible polynomials and $f_j \ge 1$ for $1 \le j \le s$. In this case, $\mathcal{Q}$ decomposes into a direct sum of linear codes $\mathcal{Q}_j$, each of length $\ell$, over the finite chain ring $R_j=\mathbb{F}_q[x]/\langle p_j^{f_j}\left(x\right)\rangle$. Each $\mathcal{Q}_j$ as a code over $R_j$ is generated by the matrix
\begin{equation}
\label{Eq-in-remark1}
\mathbf{A}_1^T \mathrm{diag}\left(\frac{x^N-1}{x^{m_i}-\lambda_i}\right) \mathbf{G}_2^T \pmod{p_j^{f_j}(x)}.
\end{equation}
According to Remark \ref{Rem_Chain}, if $\mathcal{Q}_j$ is of type $\left\{r_0^{(j)}, r_1^{(j)}, \ldots, r_{f_j-1}^{(j)}\right\}$, then \eqref{Eq_Chain} yields
$$|\mathcal{Q}_j|=q^{\deg\left(p_j(x)\right)\sum_{h=0}^{f_j-1}(f_j-h) r_h^{(j)}}.$$ 
Following steps analogous to those in the proof of Theorem \ref{MT-LCP}, Equation \eqref{MT-LCP-Inproof2} takes the form
\begin{equation*}
\dim\left(\mathcal{C}_2\right)-\dim\left(\mathcal{C}_1\cap \mathcal{C}_2\right)=\sum_{j=1}^{s} \deg\left(p_j(x)\right)\sum_{h=0}^{f_j-1}(f_j-h) r_h^{(j)}.
\end{equation*}
Consequently, $\mathcal{C}_1 \cap \mathcal{C}_2 =\{\mathbf{0}\}$ if and only if 
$$\sum_{j=1}^{s} \deg\left(p_j(x)\right)\sum_{h=0}^{f_j-1}(f_j-h) r_h^{(j)}=\dim\left(\mathcal{C}_2\right).$$ 
\hfill $\diamond$
\end{remark}

\begin{example}
\label{Code1-Code2-R7}
Consider the $(1, \omega)$-MT codes $\mathcal{C}_1^{\perp_1}$ and $\mathcal{C}_2$ over $\mathbb{F}_4$ with block lengths $(6, 2)$, as presented in Examples \ref{Code1-Code2-R1} and \ref{Code1-Code2-R2}. Since $N=\mathrm{lcm}(t_i m_i)=6$, the polynomial $x^6 - 1$ factors over $\mathbb{F}_4$ as
$$x^6-1=(x + 1)^2  (x + \omega)^2  (x + \omega^2)^2.$$
Define $p_1 = x + 1$, $p_2 = x + \omega$, $p_3 = x + \omega^2$, and $f_1 = f_2 = f_3 = 2$, since the block lengths are not coprime to $q$. Example \ref{Code1-Code2-R2} provides the GPM $\sigma \left(\mathbf{H}_1\right)$ for $\mathcal{C}_1^{\perp_1}$, along with the corresponding polynomial matrix $\sigma\left(\mathbf{B}_1\right)$ that satisfies its identical equation. On the other hand, Example \ref{Code1-Code2-R1} provides $\mathbf{G}_2$ as a GPM for $\mathcal{C}_2$. Recall that
\begin{equation*}
\sigma\left(\mathbf{B}_1\right)=\begin{pmatrix}
1+x+x^2  & \omega+x \\
0 &  1
\end{pmatrix}
\quad \text{and} \quad \mathbf{G}_2=\begin{pmatrix}
 \omega^2+ \omega^2 x + x^2+x^3  & \omega x \\
 0 & \omega+x^2
\end{pmatrix}.
\end{equation*}
Equation \eqref{Eq-in-remark1} requires computing
\begin{equation*}\begin{split}
\sigma\left(\mathbf{B}_1\right)^T \mathrm{diag}\left(\frac{x^6-1}{x^{m_i}-\lambda_i}\right) \mathbf{G}_2^T 
&=\begin{pmatrix}
(x + 1)   (x + \omega)^3 (x + \omega^2)  & 0\\
 \omega (x + 1)  (x + \omega)^2  (x^2 + \omega x + 1) & x^6-1
\end{pmatrix}.
\end{split}\end{equation*}
Reductions modulo $p_j^{f_j}(x)$ shows that $\mathcal{Q}_1$ is of type $\{0,1\}$, $\mathcal{Q}_2$ is of type $\{0,0\}$, and $\mathcal{Q}_3$ is of type $\{1,0\}$. Therefore, 
$$\sum_{j=1}^{s} \deg\left(p_j(x)\right)\sum_{h=0}^{f_j-1}(f_j-h) r_h^{(j)}=3=\dim\left(\mathcal{C}_2\right).$$ 
By Remark \ref{remark1}, this confirms that $\mathcal{C}^{\perp_1}_1\cap \mathcal{C}_2=\{\mathbf{0}\}$, which is consistent with the results in Examples \ref{Code1-Code2-R4} and \ref{Code1-Code2-R5}.
\hfill $\diamond$
\end{example}

Alternative conditions to those in Theorem \ref{MT-LCP} and Remark \ref{remark1} under which two MT codes intersect trivially are given in the first part of the following result. The second part is more significant since it is the first step toward proving the conditions under which an MT code is Galois self-orthogonal, Galois dual-containing, or reversible, as will be shown in Theorem \ref{Th-SO-DC-R}. Recall that a code is Galois self-orthogonal if it is contained in its Galois dual, while it is Galois dual-containing if it contains its Galois dual. We start with a more general context that provides a condition under which one MT code $\mathcal{C}_1$ is contained in another $\mathcal{C}_2$. In light of Remark \ref{Imp-Remark}, we assume that $\mathcal{C}_1$ and $\mathcal{C}_2$ are both $\Lambda$-MT. This is due to the following reason: Let $\mathcal{C}_1$ and $\mathcal{C}_2$ be $\Delta$-MT and $\Lambda$-MT, respectively, and suppose that $d\left(\mathcal{C}_2\right) > D\left(\Lambda-\Delta\right)$, where $D\left(\Lambda-\Delta\right)$ denotes the number of indices at which $\Lambda$ and $\Delta$ differ. We aim to propose a condition under which $\mathcal{C}_1 \subseteq \mathcal{C}_2$, or equivalently, $\mathcal{C}_1 \cap \mathcal{C}_2 = \mathcal{C}_1$. This means that the intersection admits an MT structure. By a similar reasoning to that used in the proof of Corollary \ref{Corr-Th-CondOfInter}, this implies that the projection of $\mathcal{C}_1$ onto its $i$-th block must be zero for every index $i$ at which $\Lambda$ and $\Delta$ differ. Consequently, $\mathcal{C}_1$ is also $\Lambda$-MT. Therefore, we may assume from the beginning that both $\mathcal{C}_1$ and $\mathcal{C}_2$ are $\Lambda$-MT codes. This assumption moreover eliminates the need to impose any conditions on the minimum distances of the codes.

\begin{lemma}
\label{Th-Containment}
Using the same notation as in Theorem \ref{Th-Intersection}, the following statements are equivalent: 
\begin{enumerate}
\item $\mathcal{C}_1 \cap \mathcal{C}_2=\{\mathbf{0}\}$.
\item The columns of $\mathbf{A}_1$ and $\mathbf{A}_2$ together generate the entire module $\left(\mathbb{F}_q[x]\right)^\ell$.
\item $\mathrm{diag}\left(\frac{x^N-1}{x^{m_i}-\lambda_i}\right) \mathbf{G}_2^T$ is a GPM for $\mathcal{Q}$.
\item There exists an invertible polynomial matrix $\mathbf{U}$ such that $\mathbf{P}\mathbf{U}=\mathbf{A}_2^T$.
\item $\dim\left(\mathcal{Q}\right)=\dim\left(\mathcal{C}_2\right)$.
\end{enumerate}
On the other hand, the inclusion $\mathcal{C}_1 \subseteq \mathcal{C}_2$ holds if and only if 
\begin{equation*}
\mathbf{G}_1 \mathrm{diag}\left(\frac{x^N-1}{x^{m_i}-\lambda_i}\right) \mathbf{A}_2 \equiv \mathbf{0} \pmod{x^N-1}.
\end{equation*}
\end{lemma}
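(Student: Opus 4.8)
The plan is to prove the five-way equivalence through a short cycle of implications anchored on the dimension identity already available for $\mathcal{Q}$, and to treat the containment criterion separately via a membership test for $\mathcal{C}_2$. Throughout I abbreviate $\mathbf{E}=\mathrm{diag}\left(\frac{x^N-1}{x^{m_i}-\lambda_i}\right)$, so that $\mathbf{E}\,\mathrm{diag}(x^{m_i}-\lambda_i)=(x^N-1)\mathbf{I}_\ell$ and, by \eqref{Eq-for-proof} applied to each code, $\mathbf{A}_i^T\mathbf{E}\,\mathbf{G}_i^T=(x^N-1)\mathbf{I}_\ell$ for $i=1,2$. I would also record at the outset that $\mathbf{E}\,\mathbf{G}_2^T$ has nonzero determinant, hence $\mathbb{F}_q[x]$-independent rows, so it is a legitimate generating matrix to which the dimension formula \eqref{Dim_Eq_fromA} applies.

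For $(1)\Leftrightarrow(5)$ I would reuse the computation \eqref{MT-LCP-Inproof}, which gives $\dim(\mathcal{Q})=\dim(\mathcal{C}_2)-\dim(\mathcal{C}_1\cap\mathcal{C}_2)$, so $\dim(\mathcal{Q})=\dim(\mathcal{C}_2)$ precisely when $\mathcal{C}_1\cap\mathcal{C}_2=\{\mathbf{0}\}$. To link $(3),(4),(5)$ I would write $\mathcal{Q}'$ for the QC code generated by $\mathbf{E}\,\mathbf{G}_2^T$; the identity $\mathbf{A}_2^T\mathbf{E}\,\mathbf{G}_2^T=(x^N-1)\mathbf{I}_\ell$ shows that $\dim(\mathcal{Q}')=\deg\det(\mathbf{A}_2)=\dim(\mathcal{C}_2)$ and that the rows of both $\mathbf{A}_1^T\mathbf{E}\,\mathbf{G}_2^T$ and $(x^N-1)\mathbf{I}_\ell$ lie in $\mathcal{Q}'$, whence $\mathcal{Q}\subseteq\mathcal{Q}'$. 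Thus $(5)$ forces $\mathcal{Q}=\mathcal{Q}'$, i.e. $\mathbf{E}\,\mathbf{G}_2^T$ is a GPM for $\mathcal{Q}$, which is $(3)$; comparing this GPM with $\mathbf{Q}$ produces a unimodular $\mathbf{U}$ with $\mathbf{Q}=\mathbf{U}\,\mathbf{E}\,\mathbf{G}_2^T$, and substituting into $\mathbf{P}\mathbf{Q}=(x^N-1)\mathbf{I}_\ell=\mathbf{A}_2^T\mathbf{E}\,\mathbf{G}_2^T$ and right-cancelling the invertible $\mathbf{E}\,\mathbf{G}_2^T$ yields $\mathbf{P}\mathbf{U}=\mathbf{A}_2^T$, which is $(4)$. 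Finally $\mathbf{P}\mathbf{U}=\mathbf{A}_2^T$ with $\mathbf{U}$ unimodular gives $\deg\det(\mathbf{P})=\deg\det(\mathbf{A}_2)=\dim(\mathcal{C}_2)$, and since $\dim(\mathcal{Q})=\deg\det(\mathbf{P})$ this returns $(5)$, closing the cycle.

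The equivalence $(1)\Leftrightarrow(2)$ is where the module-versus-code bookkeeping must be done carefully, and I expect it to be the main obstacle. The key input is the identification, established inside the proof of Lemma \ref{Th-Reciprocal}, that $\mathbf{A}_i^T$ is a GPM for $\mathcal{L}(\mathcal{C}_i^\perp)$. Since the columns of $\mathbf{A}_i$ are exactly the rows of $\mathbf{A}_i^T$, the submodule of $(\mathbb{F}_q[x])^\ell$ generated by the columns of $\mathbf{A}_1$ and $\mathbf{A}_2$ together is $\mathcal{L}(\mathcal{C}_1^\perp)+\mathcal{L}(\mathcal{C}_2^\perp)=\mathcal{L}(\mathcal{C}_1^\perp+\mathcal{C}_2^\perp)$. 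Because $\mathcal{C}_1,\mathcal{C}_2$ are both $\Lambda$-MT, both $\mathcal{C}_i^\perp$ are $\Lambda^{-1}$-MT, so by Lemma \ref{Th-Reciprocal} both $\mathcal{L}(\mathcal{C}_i^\perp)$ are $\Lambda$-MT; in particular their sum contains $\bigoplus_i\langle x^{m_i}-\lambda_i\rangle$, and therefore equals the full free module $(\mathbb{F}_q[x])^\ell$ if and only if the associated $\Lambda$-MT code equals $\mathbb{F}_q^n$. Invoking that $\mathcal{L}$ is a coordinate permutation (hence a bijection on $\mathbb{F}_q^n$) together with the De Morgan identity $(\mathcal{C}_1\cap\mathcal{C}_2)^\perp=\mathcal{C}_1^\perp+\mathcal{C}_2^\perp$, this condition reads $(\mathcal{C}_1\cap\mathcal{C}_2)^\perp=\mathbb{F}_q^n$, equivalently $\mathcal{C}_1\cap\mathcal{C}_2=\{\mathbf{0}\}$, which is $(1)$. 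The delicate point to get exactly right is the passage between ``full free module'' and ``full code'', which is legitimate only because both summands contain the same copy of $\bigoplus_i\langle x^{m_i}-\lambda_i\rangle$.

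For the containment criterion I would first establish a membership test for a single polynomial vector. Over the field of fractions one has $\mathbf{G}_2^{-1}=\mathrm{diag}(x^{m_i}-\lambda_i)^{-1}\mathbf{A}_2=\tfrac{1}{x^N-1}\mathbf{E}\mathbf{A}_2$, so for any $\mathbf{v}(x)\in(\mathbb{F}_q[x])^\ell$ we have $\mathbf{v}\in\mathcal{C}_2$ (the row space of $\mathbf{G}_2$ over $\mathbb{F}_q[x]$, which already contains $\bigoplus_i\langle x^{m_i}-\lambda_i\rangle$) if and only if $\mathbf{v}\mathbf{G}_2^{-1}$ is a polynomial vector, that is, if and only if $\mathbf{v}\mathbf{E}\mathbf{A}_2\equiv\mathbf{0}\pmod{x^N-1}$. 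Applying this row by row to $\mathbf{G}_1$, whose rows generate $\mathcal{C}_1$ as a module, gives $\mathcal{C}_1\subseteq\mathcal{C}_2$ if and only if $\mathbf{G}_1\mathbf{E}\mathbf{A}_2\equiv\mathbf{0}\pmod{x^N-1}$, which is the stated condition. The only thing to verify carefully here is that divisibility of $\mathbf{v}\mathbf{E}\mathbf{A}_2$ by $x^N-1$ is genuinely equivalent to integrality of $\mathbf{v}\mathbf{G}_2^{-1}$; this is immediate since $x^N-1$ is the single common denominator introduced by $\mathbf{E}$.
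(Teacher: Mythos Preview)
Your argument is correct but diverges from the paper's in two substantial ways. The paper exploits the internals of the proof of Theorem~\ref{Th-Intersection}: once $\mathcal{C}_1\cap\mathcal{C}_2=\{\mathbf{0}\}$, one has $\overline{\mathbf{G}}=\mathrm{diag}(x^{m_i}-\lambda_i)$ and $\overline{\mathbf{A}}=\mathbf{I}_\ell$, and then \eqref{Th-Intersection-InProof} gives $\mathbf{A}_1\mathbf{M}_1+\mathbf{A}_2\mathbf{M}_2=\mathbf{I}_\ell$ (yielding (2)) while the already-established fact that $\overline{\mathbf{A}}^T\mathbf{E}\,\mathbf{G}_2^T$ is a GPM for $\mathcal{Q}$ collapses to (3), from which (4) and back to (1) follow as you do. Your route replaces this internal reuse with two independent ingredients: a clean duality argument for $(1)\Leftrightarrow(2)$ via $\mathbf{A}_i^T$ being a GPM for $\mathcal{L}(\mathcal{C}_i^\perp)$ and the De Morgan identity $(\mathcal{C}_1\cap\mathcal{C}_2)^\perp=\mathcal{C}_1^\perp+\mathcal{C}_2^\perp$, and an auxiliary QC code $\mathcal{Q}'$ generated by $\mathbf{E}\,\mathbf{G}_2^T$ to pin down $(5)\Rightarrow(3)$ by dimension comparison. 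For the containment criterion the paper simply rewrites the congruence as $\mathbf{G}_1\mathbf{E}\mathbf{A}_2=\mathbf{M}(x^N-1)=\mathbf{M}\mathbf{G}_2\mathbf{E}\mathbf{A}_2$ and cancels, whereas you invert $\mathbf{G}_2$ over the field of fractions; both amount to the same cancellation of the nonsingular factor $\mathbf{E}\mathbf{A}_2$. The paper's proof is shorter because it piggybacks on Theorem~\ref{Th-Intersection}, but yours is more self-contained and the De~Morgan step for $(1)\Leftrightarrow(2)$ is arguably more conceptual than extracting the Bezout-type identity \eqref{Th-Intersection-InProof}.
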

\begin{proof} 
The notation follows that used in the proof of Theorem \ref{Th-Intersection}. Suppose that $\mathcal{C}_1 \cap \mathcal{C}_2 = \{\mathbf{0}\}$, then $\overline{\mathbf{G}}=\mathrm{diag}(x^{m_i}-\lambda_i)$. Consequently, we have $\overline{\mathbf{A}}=\mathbf{I}_\ell$, and thus $\mathbf{A}_1\mathbf{M}_1+\mathbf{A}_2\mathbf{M}_2=\mathbf{I}_\ell$ from \eqref{Th-Intersection-InProof}. This implies that the columns of $\mathbf{A}_1$ and $\mathbf{A}_2$ generate $\left(\mathbb{F}_q[x]\right)^\ell$. Therefore, the GPM for $\mathcal{Q}$ given in the proof of Theorem \ref{Th-Intersection} is 
$$\overline{\mathbf{A}}^T \mathrm{diag}\left(\frac{x^N-1}{x^{m_i}-\lambda_i}\right) \mathbf{G}_2^T=\mathrm{diag}\left(\frac{x^N-1}{x^{m_i}-\lambda_i}\right) \mathbf{G}_2^T.$$
If $\mathbf{Q}$ denotes another GPM for $\mathcal{Q}$, then it must take the form $\mathbf{Q}=\mathbf{U}\mathrm{diag}\left(\frac{x^N-1}{x^{m_i}-\lambda_i}\right) \mathbf{G}_2^T$, for some invertible $\mathbf{U}$. From \eqref{Eq-for-proof}, we have
$$\mathbf{P}\mathbf{U}\mathrm{diag}\left(\frac{x^N-1}{x^{m_i}-\lambda_i}\right) \mathbf{G}_2^T=\mathbf{P}\mathbf{Q}=(x^N-1)\mathbf{I}_\ell=\mathbf{A}_2^T \mathrm{diag}\left(\frac{x^N-1}{x^{m_i}-\lambda_i}\right) \mathbf{G}_2^T,$$ 
which implies $\mathbf{P}\mathbf{U} =\mathbf{A}_2^T $. According to Theorem \ref{Th-Intersection}, a GPM for $\mathcal{C}_1 \cap \mathcal{C}_2$ is given by $\mathbf{P}^T \mathbf{G}_2$, or equivalently, $\mathbf{A}_2 \mathbf{G}_2=\mathrm{diag}\left( x^{m_i}-\lambda_i\right)$, showing that $\mathcal{C}_1 \cap \mathcal{C}_2=\{\mathbf{0}\}$. Finally, from \eqref{MT-LCP-Inproof}, $\mathcal{C}_1 \cap \mathcal{C}_2=\{\mathbf{0}\}$ holds if and only if $\dim\left(\mathcal{Q}\right)=\dim\left(\mathcal{C}_2\right)$.

The second assertion is straightforward to verify since the congruence $\mathbf{G}_1 \mathrm{diag}\left(\frac{x^N-1}{x^{m_i}-\lambda_i}\right) \mathbf{A}_2 \equiv \mathbf{0} \pmod{x^N-1}$ holds if and only if there exists a polynomial matrix $\mathbf{M}$ such that
\begin{equation*}
\mathbf{G}_1 \mathrm{diag}\left(\frac{x^N-1}{x^{m_i}-\lambda_i}\right) \mathbf{A}_2 =\mathbf{M}\left( x^N-1\right)=\mathbf{M} \mathbf{G}_2 \mathrm{diag}\left(\frac{x^N-1}{x^{m_i}-\lambda_i}\right) \mathbf{A}_2.
\end{equation*}
This is equivalent to $\mathbf{G}_1=\mathbf{M} \mathbf{G}_2$, or, in other words, $\mathcal{C}_1 \subseteq \mathcal{C}_2$. 
\end{proof}

The following theorem constitutes the main result of this section. By using Theorem \ref{MT-LCP} and Lemma \ref{Th-Containment}, we derive necessary and sufficient conditions under which an MT code is Galois self-orthogonal, Galois dual-containing, Galois LCD, or reversible.

\begin{theorem}
\label{Th-SO-DC-R}
Let $\mathcal{C}$ be a $\Lambda$-MT code over $\mathbb{F}_{p^e}$ with index $\ell$, block lengths $(m_1,m_2,\ldots,m_\ell)$, GPM $\mathbf{G}$, and a polynomial matrix $\mathbf{A}$ satisfying the identical equation given by \eqref{Identical_Eq} for $\mathbf{G}$. Set $N=\mathrm{lcm}(t_i m_i)$.
\begin{enumerate}
\item Suppose that $\sigma^{e-\kappa}\left(\Lambda^{-1}\right)=\Lambda$ for some integer $0\le \kappa <e$. Then,
\begin{enumerate}
\item $\mathcal{C}$ is $\kappa$-Galois self-orthogonal if and only if 
$$\sigma^{e-\kappa}\left(\mathbf{G}\left(\frac{1}{x}\right) \mathrm{diag}\left(x^{m_i}\right) \right) \mathrm{diag}\left(\frac{x^N-1}{x^{m_i}-\lambda_i}\right) \mathbf{G}^T \equiv \mathbf{0} \pmod{x^N-1}.$$
\item $\mathcal{C}$ is $\kappa$-Galois dual containing if and only if 
$$\sigma^{e-\kappa}\left(\mathbf{A}^T\left(\frac{1}{x}\right) \mathrm{diag}\left(x^{m_i}\right) \right) \mathrm{diag}\left(\frac{x^N-1}{x^{m_i}-\lambda_i}\right) \mathbf{A} \equiv \mathbf{0} \pmod{x^N-1}.$$
\item Assume that each $m_i$ is coprime to $q$, and consider the factorization of the polynomial $x^N - 1$ over $\mathbb{F}_q$ to irreducible polynomials as $x^N-1=\prod_{j=1}^{s}p_j(x)$. Then $\mathcal{C}$ is $\kappa$-Galois LCD if and only if 
$$\sum_{j=1}^{s} r_j \deg\left(p_j(x)\right)=\dim\left(\mathcal{C}\right),$$
where $r_j$ is the rank of the matrix
$$\sigma^{e-\kappa}\left(\mathbf{G}\left(\frac{1}{x}\right) \mathrm{diag}\left(x^{m_i}\right) \right)\mathrm{diag}\left(\frac{x^N-1}{x^{m_i}-\lambda_i}\right) \mathbf{G}^T \pmod{p_j(x)}.$$
\end{enumerate}

\item Suppose that $m_i=m_{\ell-i+1}$ and $\lambda_i= \lambda_{\ell-i+1}^{-1}$ for $1\le i\le \ell$. Then, $\mathcal{C}$ is reversible if and only if 
$$\mathbf{G}\left(\frac{1}{x}\right) \mathrm{diag}\left(x^{m_i}\right)  J_\ell \ \mathrm{diag}\left(\frac{x^N-1}{x^{m_i}-\lambda_i}\right) \mathbf{A} \equiv \mathbf{0} \pmod{x^N-1}.$$
\end{enumerate}
\end{theorem}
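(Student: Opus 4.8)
The plan is to split the four properties into two groups according to the algebraic shape of the criterion. Parts~1(a) and 1(c) will be read off from the $\kappa$-Galois hull $\mathcal{C}\cap\mathcal{C}^{\perp_\kappa}$ computed by Corollary~\ref{Th-Galois-Intersection}, while parts~1(b) and 2 are instances of the containment criterion in the second assertion of Lemma~\ref{Th-Containment}. In all of part~1 the hypothesis $\sigma^{e-\kappa}(\Lambda^{-1})=\Lambda$ guarantees, via \eqref{Identical_Eq_dual_Galois}, that $\mathcal{C}^{\perp_\kappa}$ is again $\Lambda$-MT, so the intersection results of Section~\ref{Sec-Intersection-MT-Codes} apply with identical shift constants.

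First I would dispatch 1(a) and 1(c) by taking $\mathcal{C}_1=\mathcal{C}_2=\mathcal{C}$ (so $\delta_i=\lambda_i$) in Corollary~\ref{Th-Galois-Intersection}. The associated QC code $\mathcal{Q}$ is then generated by $\sigma^{e-\kappa}(\mathbf{G}(1/x)\,\mathrm{diag}(x^{m_i}))\,\mathrm{diag}(\frac{x^N-1}{x^{m_i}-\lambda_i})\,\mathbf{G}^T$ together with $(x^N-1)\mathbf{I}_\ell$, and the dimension identity \eqref{MT-LCP-Inproof}, applied to the pair $\mathcal{C}^{\perp_\kappa},\mathcal{C}$, yields $\dim\mathcal{Q}=\dim\mathcal{C}-\dim(\mathcal{C}\cap\mathcal{C}^{\perp_\kappa})$. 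Since $\mathcal{C}$ is $\kappa$-Galois self-orthogonal exactly when the hull equals $\mathcal{C}$, i.e.\ $\dim\mathcal{Q}=0$, and $\dim\mathcal{Q}=0$ holds precisely when every row of the displayed matrix lies in $\bigoplus_i\langle x^N-1\rangle$, this gives the congruence of 1(a). For 1(c), $\mathcal{C}$ is $\kappa$-Galois LCD exactly when the hull is trivial, i.e.\ $\dim\mathcal{Q}=\dim\mathcal{C}$; under the coprimality of each $m_i$ with $q$ the polynomial $x^N-1$ splits into distinct irreducibles, and repeating the decomposition in the proof of Theorem~\ref{MT-LCP} gives $\dim\mathcal{Q}=\sum_j r_j\deg p_j$ with $r_j$ the rank of that same generating matrix modulo $p_j$, so equating to $\dim\mathcal{C}$ produces 1(c).

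Next I would treat 1(b) and 2 as containments of one $\Lambda$-MT code inside another via Lemma~\ref{Th-Containment}, first noting that its congruence criterion is row-wise, hence remains valid when $\mathbf{G}_1$ is replaced by any (not necessarily reduced) generating matrix of the smaller code. For 1(b), dual-containment means $\mathcal{C}^{\perp_\kappa}\subseteq\mathcal{C}$; applying $\sigma^{e-\kappa}$ to \eqref{Eq-Reciprocal1} exhibits $\sigma^{e-\kappa}(\mathbf{A}^T(1/x)\,\mathrm{diag}(x^{m_i}))$ as a generating matrix of $\mathcal{C}^{\perp_\kappa}$ whose remaining rows $\mathrm{diag}(x^{m_i}-\lambda_i)$ already lie in $\mathcal{C}$, so the containment is equivalent to the criterion of Lemma~\ref{Th-Containment} applied to this matrix with $\mathbf{A}_2=\mathbf{A}$, which is exactly the stated congruence. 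For 2, the symmetry hypotheses $m_i=m_{\ell-i+1}$ and $\lambda_i=\lambda_{\ell-i+1}^{-1}$ force $\Gamma=\Lambda$ and equal block lengths in Theorem~\ref{Th-Reversed}, so $\mathcal{R}$ is $\Lambda$-MT; writing $\mathcal{R}=\mathcal{J}(\mathcal{L}(\mathcal{C}))$ and right-multiplying the generating matrix \eqref{Inproof_Th-Reciprocal} of $\mathcal{L}(\mathcal{C})$ by $J_\ell$ yields $\mathbf{G}(1/x)\,\mathrm{diag}(x^{m_i})\,J_\ell$ as a generating matrix for $\mathcal{R}$, whose auxiliary rows $\mathrm{diag}(x^{m_i}-\lambda_i^{-1})\,J_\ell$ lie in $\mathcal{C}$ by the symmetry of the block data. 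Because $\dim\mathcal{R}=\dim\mathcal{C}$, reversibility is equivalent to $\mathcal{R}\subseteq\mathcal{C}$, and Lemma~\ref{Th-Containment} converts this into the displayed congruence.

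The hard part is bookkeeping rather than any single deep step. In 1(b) and 2 the crux is to confirm that the auxiliary rows (the diagonal $\mathrm{diag}(x^{m_i}-\lambda_i)$, respectively the anti-diagonal $\mathrm{diag}(x^{m_i}-\lambda_i^{-1})J_\ell$) genuinely belong to $\mathcal{C}$; this is precisely where the hypotheses $\sigma^{e-\kappa}(\lambda_i^{-1})=\lambda_i$ and $\lambda_i=\lambda_{\ell-i+1}^{-1}$, $m_i=m_{\ell-i+1}$ are consumed, and without them these rows would add spurious generators so the congruence would no longer characterize containment. A second point to state carefully is the equivalence ``$\dim\mathcal{Q}=0$ iff the generating matrix of $\mathcal{Q}$ vanishes modulo $x^N-1$'', which follows from $\dim\mathcal{Q}=\deg\det\mathbf{P}$ together with $\mathbf{P}\mathbf{Q}=(x^N-1)\mathbf{I}_\ell$, and similarly the identification of $\dim\mathcal{Q}$ with $\dim\mathcal{C}-\dim(\mathcal{C}\cap\mathcal{C}^{\perp_\kappa})$ in the Galois setting.
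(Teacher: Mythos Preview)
Your proposal is correct and follows essentially the same route as the paper: parts~1(a) and~1(c) are obtained by applying Corollary~\ref{Th-Galois-Intersection} (respectively Theorem~\ref{MT-LCP}) to the pair $\mathcal{C}^{\perp_\kappa},\mathcal{C}$ and reading off the conditions $\dim\mathcal{Q}=0$ and $\dim\mathcal{Q}=\dim\mathcal{C}$, while parts~1(b) and~2 are exactly the containment criterion of Lemma~\ref{Th-Containment} with $\mathbf{G}_1$ replaced by the larger generating matrices \eqref{Eq-Reciprocal1} and \eqref{Inproof_Th-Reciprocal}$\cdot J_\ell$, the extraneous diagonal rows being absorbed precisely by the hypotheses on~$\Lambda$. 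The only cosmetic difference is that for~1(a) the paper phrases the vanishing of $\mathcal{Q}$ as ``$\mathbf{P}$ invertible'' rather than ``$\dim\mathcal{Q}=0$'', and it also notes (as you do implicitly) that~1(a) can alternatively be derived directly from Lemma~\ref{Th-Containment}.
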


\begin{proof}
\begin{enumerate}
\item The assumption $\sigma^{e-\kappa}\left(\Lambda^{-1}\right)=\Lambda$ implies that $\mathcal{C}^{\perp_\kappa}$ is also a $\Lambda$-MT code with block lengths $(m_1,m_2,\ldots,m_\ell)$. 
\begin{enumerate}
\item Consider Corollary \ref{Th-Galois-Intersection} with $\mathcal{C}_1 = \mathcal{C}_2 = \mathcal{C}$. In this case, $\mathbf{P}^T \mathbf{G}$ is a GPM for the $\Lambda$-MT code $\mathcal{C}^{\perp_\kappa} \cap \mathcal{C}$. Therefore, $\mathcal{C}$ is $\kappa$-Galois self-orthogonal if and only if $\mathbf{P}^T \mathbf{G}$ is a GPM for $\mathcal{C}$, which holds if and only if $\mathbf{P}$ is invertible. By \eqref{Identical_Eq_QC}, this is equivalent to
$$\mathbf{Q}=(x^N-1)\mathbf{P}^{-1}\equiv \mathbf{0} \pmod{x^N-1}.$$
Corollary \ref{Th-Galois-Intersection} shows that this condition holds if and only if
$$\sigma^{e-\kappa}\left(\mathbf{G}\left(\frac{1}{x}\right) \mathrm{diag}\left(x^{m_i}\right) \right) \mathrm{diag}\left(\frac{x^N-1}{x^{m_i}-\lambda_i}\right) \mathbf{G}^T\equiv \mathbf{0} \pmod{x^N-1}.$$
We remark that this result may alternatively be proved by applying Lemma \ref{Th-Containment} with $\mathcal{C}_1 = \mathcal{C}$ and $\mathcal{C}_2 = \mathcal{C}^{\perp_\kappa}$, which yields $\mathbf{A}_2 = \sigma^{e-\kappa}\left(\mathbf{B}\right)$. Consequently, $\mathcal{C}$ is $\kappa$-Galois self-orthogonal (i.e., $\mathcal{C} \subseteq  \mathcal{C}^{\perp_\kappa}$) if and only if 
$$\sigma^{e-\kappa}\left(\mathbf{B}^T\right) \mathrm{diag}\left(\frac{x^N-1}{x^{m_i}-\lambda_i}\right) \mathbf{G}^T \equiv \mathbf{0} \pmod{x^N-1}.$$
According to Lemma \ref{Th-Reciprocal}, $\mathbf{B}^T$ is a GPM for $\mathcal{L}\left(\mathcal{C}\right)$, and thus it can be replaced by the polynomial matrix given in \eqref{Inproof_Th-Reciprocal}.
\item Consider Lemma \ref{Th-Containment} with $\mathcal{C}_1=\mathcal{C}^{\perp_\kappa}$ and $\mathcal{C}_2=\mathcal{C}$. In this setting, we take $\mathbf{G}_1=\sigma^{e-\kappa}\left(\mathbf{H} \right)$ and $\mathbf{A}_2=\mathbf{A}$. According to \eqref{Eq-Reciprocal1}, $\sigma^{e-\kappa}\left(\mathbf{H} \right)$ is a GPM for $\mathcal{C}^{\perp_\kappa}$ and can be replaced by the polynomial matrix
\begin{equation*}
\sigma^{e-\kappa}\left(\begin{pmatrix}
\mathbf{A}^T\left(\frac{1}{x}\right) \mathrm{diag}\left(x^{m_i}\right)\\
\mathrm{diag}\left(x^{m_i}- \lambda_i^{-1} \right)
\end{pmatrix}\right)= \begin{pmatrix}
\sigma^{e-\kappa}\left(\mathbf{A}^T\left(\frac{1}{x}\right) \mathrm{diag}\left(x^{m_i}\right)\right)\\
\mathrm{diag}\left(x^{m_i}- \lambda_i \right)
\end{pmatrix}.
\end{equation*}
Lemma \ref{Th-Containment} shows that $\mathcal{C}$ is $\kappa$-Galois dual containing (i.e., $\mathcal{C}^{\perp_\kappa}\subseteq \mathcal{C}$) if and only if 
$$\begin{pmatrix}
\sigma^{e-\kappa}\left(\mathbf{A}^T\left(\frac{1}{x}\right) \mathrm{diag}\left(x^{m_i}\right)\right)\\
\mathrm{diag}\left(x^{m_i}- \lambda_i \right)
\end{pmatrix} \mathrm{diag}\left(\frac{x^N-1}{x^{m_i}-\lambda_i}\right) \mathbf{A} \equiv \mathbf{0}\pmod{x^N-1}.$$
This condition is equivalent to $\sigma^{e-\kappa}\left(\mathbf{A}^T\left(\frac{1}{x}\right) \mathrm{diag}\left(x^{m_i}\right)\right) \mathrm{diag}\left(\frac{x^N-1}{x^{m_i}-\lambda_i}\right) \mathbf{A} \equiv \mathbf{0}\pmod{x^N-1}$.
\item Consider Theorem \ref{MT-LCP} with $\mathcal{C}_1=\mathcal{C}^{\perp_\kappa}$ and $\mathcal{C}_2=\mathcal{C}$. In this setting, we take $\mathbf{A}_1=\sigma^{e-\kappa}\left(\mathbf{B} \right)$ and $\mathbf{G}_2=\mathbf{G}$. Therefore, $\mathcal{C}$ is $\kappa$-Galois LCD (i.e., $\mathcal{C}^{\perp_\kappa} \cap \mathcal{C}=\{\mathbf{0}\}$) if and only if $\sum_{j=1}^{s} r_j \deg\left(p_j(x)\right)=\dim\left(\mathcal{C}\right)$, where $r_j$ is the rank of 
$$\sigma^{e-\kappa}\left(\mathbf{B}^T \right) \mathrm{diag}\left(\frac{x^N-1}{x^{m_i}-\lambda_i}\right) \mathbf{G}^T \pmod{p_j(x)}.$$
Again, $\mathbf{B}^T$ is a GPM for $\mathcal{L}\left(\mathcal{C}\right)$ by Lemma \ref{Th-Reciprocal}, and can be replaced by the polynomial matrix given in \eqref{Inproof_Th-Reciprocal}.
\end{enumerate}
\item According to Theorem \ref{Th-Reversed}, the assumptions $m_i=m_{\ell-i+1}$ and $\lambda_i= \lambda_{\ell-i+1}^{-1}$ imply that $\mathcal{R}$ is also a $\Lambda$-MT code with block lengths $(m_1,m_2,\ldots,m_\ell)$. Since $\dim\left(\mathcal{C}\right)=\dim\left(\mathcal{R}\right)$, it follows that $\mathcal{C}$ is reversible if and only if $\mathcal{R}\subseteq \mathcal{C}$. To verify this inclusion, we apply Lemma \ref{Th-Containment} with $\mathcal{C}_1=\mathcal{R}$ and $\mathcal{C}_2=\mathcal{C}$. From Theorem \ref{Th-Reversed}, we use $\mathbf{G}_1=\mathbf{B}^T J_\ell$ and $\mathbf{A}_2=\mathbf{A}$. By Lemma \ref{Th-Reciprocal}, $\mathbf{B}^T J_\ell$ can be replaced by the polynomial matrix 
\begin{equation*}
\begin{pmatrix}
\mathbf{G}\left(\frac{1}{x}\right) \mathrm{diag}\left(x^{m_i}\right)\\
\mathrm{diag}\left(x^{m_i}- \lambda_i^{-1}\right)
\end{pmatrix}  J_\ell =
\begin{pmatrix}
\mathbf{G}\left(\frac{1}{x}\right) \mathrm{diag}\left(x^{m_i}\right)  J_\ell\\
\mathrm{diag}\left(x^{m_i}- \lambda_i \right)
\end{pmatrix} .\end{equation*}
Applying Lemma \ref{Th-Containment}, we conclude that $\mathcal{C}$ is reversible (i.e., $\mathcal{R}\subseteq \mathcal{C}$) if and only if 
$$\begin{pmatrix}
\mathbf{G}\left(\frac{1}{x}\right) \mathrm{diag}\left(x^{m_i}\right)  J_\ell\\
\mathrm{diag}\left(x^{m_i}- \lambda_i \right)
\end{pmatrix} \mathrm{diag}\left(\frac{x^N-1}{x^{m_i}-\lambda_i}\right) \mathbf{A} \equiv \mathbf{0}\pmod{x^N-1}.$$
This condition reduces to $\mathbf{G}\left(\frac{1}{x}\right) \mathrm{diag}\left(x^{m_i}\right)  J_\ell \ \mathrm{diag}\left(\frac{x^N-1}{x^{m_i}-\lambda_i}\right) \mathbf{A} \equiv \mathbf{0}\pmod{x^N-1}$.
\end{enumerate}
 \end{proof}

The condition regarding Galois LCD MT codes established in Theorem \ref{Th-SO-DC-R} necessitates that the block lengths be coprime to $q$. The following remark illustrates that this restriction can be lifted with a minor adjustment to the condition.

\begin{remark}
\label{remark2}
Suppose that at least one of the integers $m_i$ is not coprime to $q$. Under this assumption, $x^N - 1$ factors into irreducible polynomials that are not necessarily distinct. Specifically,
$$x^N-1=\prod_{j=1}^{s} p_j^{f_j}(x),$$
where $p_j(x)$ are distinct irreducible polynomials and $f_j \ge 1$ for $1 \le j \le s$. Consider Remark \ref{remark1} with $\mathcal{C}_1=\mathcal{C}^{\perp_\kappa}$ and $\mathcal{C}_2=\mathcal{C}$. In this setting, we take $\mathbf{A}_1=\sigma^{e-\kappa}\left(\mathbf{B} \right)$ and $\mathbf{G}_2=\mathbf{G}$. Equation \eqref{Eq-in-remark1} takes the following form after replacing $\mathbf{B}^T$ by the polynomial matrix given in \eqref{Inproof_Th-Reciprocal}:
\begin{equation}
\label{Eq-in-remark2}
\sigma^{e-\kappa}\left(\mathbf{G}\left(\frac{1}{x}\right) \mathrm{diag}\left(x^{m_i}\right) \right) \mathrm{diag}\left(\frac{x^N-1}{x^{m_i}-\lambda_i}\right) \mathbf{G}^T \pmod{p_j^{f_j}(x)}
\end{equation}
For each $1 \leq j \leq s$, let $\mathcal{Q}_j$ be the code over the finite chain ring $R_j=\mathbb{F}_q[x]/\langle p_j^{f_j}\left(x\right)\rangle$ that is generated by the matrix given in \eqref{Eq-in-remark2}. According to Remark \ref{Rem_Chain}, if $\mathcal{Q}_j$ is of type $\left\{r_0^{(j)}, r_1^{(j)}, \ldots, r_{f_j-1}^{(j)}\right\}$, then Remark \ref{remark1} implies that $\mathcal{C}$ is $\kappa$-Galois LCD (i.e., $\mathcal{C}^{\perp_\kappa} \cap \mathcal{C}=\{\mathbf{0}\}$) if and only if 
$$\sum_{j=1}^{s} \deg\left(p_j(x)\right)\sum_{h=0}^{f_j-1}(f_j-h) r_h^{(j)}=\dim\left(\mathcal{C}\right).$$
\hfill $\diamond$
\end{remark}

This section concludes with some examples that illustrate the application of Theorem \ref{Th-SO-DC-R}. It is worth noting that, for a QC code with index $\ell$ and co-index $m$, the assumptions stated in Theorem \ref{Th-SO-DC-R} are satisfied, e.g., $\sigma^{e-\kappa}\left(\Lambda^{-1}\right)=\Lambda$ for any $\kappa$. Moreover, we have
$$N=m, \  \mathrm{diag}\left(x^{m_i}\right) =x^m, \ \text{and}\ \mathrm{diag}\left(\frac{x^N-1}{x^{m_i}-\lambda_i}\right)=\mathbf{I}_\ell.$$

\begin{example}
\label{Code3-Code4-R2}
Consider the QC code $\mathcal{C}_3$ with index $\ell=3$, co-index $m=3$, and polynomial matrices $\mathbf{G}_3$ and $\mathbf{A}_3$, as given in Example \ref{Code3-Code4-R1}. According to Theorem \ref{Th-SO-DC-R},  $\mathcal{C}_3$ is Euclidean self-orthogonal; that is, $\kappa$-Galois self-orthogonal with $\kappa=0$. This follows from the congruence
\begin{equation*}
\begin{split}
\mathbf{G}_3\left(\frac{1}{x}\right) x^3 \mathbf{G}_3^T&=(x^3-1)\begin{pmatrix}
 x +2 x^2 &  2 x + x^2 +x^3 & x^2 + x^3\\
 2+ 2 x +x^2  & 1+2 x^3  & 0\\
2+2 x  & 0 &  1+2 x^3 
\end{pmatrix}\\
&\equiv \mathbf{0} \pmod{x^3-1}.
\end{split}
\end{equation*}

In addition, Theorem \ref{Th-SO-DC-R} shows that $\mathcal{C}_3$ is not reversible, as evidenced by the congruence
\begin{equation*}
\begin{split}
\mathbf{G}_3\left(\frac{1}{x}\right) x^3  J_3 \mathbf{A}_3&=\begin{pmatrix}
2 x^2+ 2 x^3 + x^5 +x^6  & 2 x + 2 x^3 +x^5 & 2 x^2+ 2 x^3 +2 x^4 \\
0 & 1+2 x^3  & 0\\
2 + 2 x^3 +2 x^6 & 2+ 2 x + x^2 + x^3 + x^4 +2 x^5  &    2+ 2 x + x^3+x^4
\end{pmatrix}\\
&\equiv\begin{pmatrix}
0  & 2+ 2 x +x^2  & 2+ 2x +2 x^2 \\
0&0&0\\
0&0&0
\end{pmatrix}
 \not \equiv \mathbf{0} \pmod{x^3-1}.
\end{split}\end{equation*}
This last result coincides with what Theorem \ref{reversible-subcode} showed in Example \ref{Code3-Code4}.
\hfill $\diamond$
\end{example}

\begin{example}
\label{Code3-Code4-R3}
Consider the $(2,1,2)$-MT code $\mathcal{C}_4$ with index $\ell=3$, block lengths $(3,3,3)$, and polynomial matrices $\mathbf{G}_4$ and $\mathbf{A}_4$, as given in Example \ref{Code3-Code4-R1}. According to Theorem \ref{Th-SO-DC-R}, $\mathcal{C}_4$ is Euclidean dual-containing; that is, $\kappa$-Galois dual-containing with $\kappa=0$. This follows from the congruence
$$\mathbf{A}_4^T\left(\frac{1}{x}\right) x^3  \mathrm{diag}\left(\frac{x^6-1}{x^{3}-\lambda_i}\right) \mathbf{A}_4 \equiv \mathbf{0} \pmod{x^6-1}.$$

In addition, Theorem \ref{Th-SO-DC-R} shows that $\mathcal{C}_4$ is reversible, as evidenced by the congruence
$$\mathbf{G}_4\left(\frac{1}{x}\right)  x^3  J_3 \ \mathrm{diag}\left(\frac{x^6-1}{x^{3}-\lambda_i}\right) \mathbf{A}_4 \equiv \mathbf{0} \pmod{x^6-1}.$$ 
This last result coincides with what Theorem \ref{reversible-subcode} showed in Example \ref{Code3-Code4}.
\hfill $\diamond$
\end{example}

\begin{example}
\label{Code6}
Consider the linear $[16,5,5]$ code $\mathcal{C}_6$ over $\mathbb{F}_9$ with generator matrix 
\begin{equation*}
G_6=\begin{pmatrix}
1&0&\omega^6&\omega&\omega^2&1&\omega^6&2&\omega^2&0&0&0&\omega&\omega^6&1&\omega\\
0&1&\omega^3&\omega^2&\omega^2&1&\omega^6&2&\omega^2&0&0&0&\omega^7&2&\omega^6&\omega^7\\
0&0&0&0&0&0&0&0&0&1&0&0&2&\omega^7&1&\omega^5\\
0&0&0&0&0&0&0&0&0&0&1&0&\omega&1&2&\omega^7\\
0&0&0&0&0&0&0&0&0&0&0&1&\omega^3&2&\omega&1
\end{pmatrix},
\end{equation*}
where $\omega \in \mathbb{F}_9$ satisfies $\omega^2+2\omega +2=0$. It can be verified that $\mathcal{C}_6$ is a $(1, \omega^2, 2)$-MT code with block lengths $(4, 5, 7)$, and its reduced GPM is given by
\begin{equation*}
\mathbf{G}_6=\begin{pmatrix}
\omega^6 + \omega x + x^2 & \omega^6+ 2 x +  \omega^2 x^2 +  x^3 + \omega^6 x^4  &  \omega^5 +  \omega^2 x +  2 x^2 +\omega^5 x^3 \\
0 & x^5 - \omega^2 & 0\\
0 & 0 & 1+ \omega^3 x + 2 x^2 + \omega x^3  + x^4 
\end{pmatrix}.
\end{equation*} 
Consequently, $N=\mathrm{lcm}(t_i m_i)=140$. Let $r_j$ denote the rank of the matrix described in Theorem \ref{Th-SO-DC-R}, namely
\begin{equation*}
\sigma \left(\mathbf{G}_6\left(\frac{1}{x}\right) \mathrm{diag}\left(x^{m_i}\right) \right)\mathrm{diag}\left(\frac{x^{140}-1}{x^{m_i}-\lambda_i}\right) \mathbf{G}_6^T \pmod{p_j(x)}.
\end{equation*}
Through tremendously computations, it is determined that $r_j=1$ for the irreducible factors $p_j(x)=x+1$, $p_j(x)=x + \omega^6$, and $p_j(x)=x^3 + \omega^5 x^2 + \omega^7 x + 1$, while $r_j = 0$ for all other irreducible factors of $x^{140} - 1$. According to Theorem \ref{Th-SO-DC-R}, $\mathcal{C}_6$ is $1$-Galois LCD, meaning that $\mathcal{C}_6 \cap \mathcal{C}_6^{\perp_1}=\{\mathbf{0}\}$ because 
$$\sum_{j=1}^{s} r_j \deg\left(p_j(x)\right)=5=\dim\left(\mathcal{C}_6\right).$$
\hfill $\diamond$
\end{example}

\section{Conclusion}
\label{conclusion}
This work began with the study of general linear codes, where we derived an explicit generator matrix for the intersection of any pair of linear codes. Then, we identified the largest reversible subcode contained in a given linear code. Our investigation then focused on the comprehensive class of MT codes We characterized the reversed code of an MT code, showing that it admits an MT structure and provided a GPM for it. We then examined the intersection of a pair of MT codes and showed that such intersection does not necessarily admit an MT structure. However, when it does, we proved that the blocks corresponding to indices where the shift constants differ are identically zero; this property was shown to hold for codes with minimum distance greater than the code index. When the two MT codes have identical shift constants and block lengths, we determined a GPM for their intersection. As a result, we established necessary and sufficient conditions for some properties, including code containment, trivial intersection, self-orthogonality, dual containment, LCD, and reversibility.

\section*{Acknowledgment}
This research was conducted at Université d'Artois, la Faculté des Sciences Jean Perrin, France, and was fully funded by the Science, Technology \& Innovation Funding Authority (STDF); International Cooperation Grants, project number 49294. Ramy Takieldin would like to express his deepest appreciation to Faculté des Sciences Jean Perrin for their hospitality and providing a fruitful research environment.


\bibliographystyle{plain}



\end{document}